\newtheorem{theorem}{Theorem}
\newtheorem{proposition}{Proposition}
\newtheorem{lemma}{Lemma}
\newtheorem{corollary}{Corollary}
\newtheorem{remark}{Remark}
\begin{document}
%

\title{Analysis and Optimization of  Cache-Enabled mmWave HetNets with \textcolor{blue}{Integrated Access and Backhaul} }
%

\author{Chenwu~Zhang,
        Hancheng~Lu,~\IEEEmembership{Senior Member,~IEEE}, and Zhuojia Gu 

\thanks{
This work was supported in part by National Key R\&D Program of China under Grant 2020YFA0711400 and National Science Foundation of China under Grant 61771445, 61631017, 91538203. \textcolor{blue}{Part of this work has been presented at IEEE Wireless Communications and Networking Conference(WCNC), Seoul, South Korea, May, 2020\cite{WCNC20}.} 

	Chenwu~Zhang, Hancheng~Lu and Zhuojia Gu  are  with CAS Key Laboratory of Wireless-Optical Communications, University of Science and Technology of China, Hefei 230027, China. (Email: cwzhang@mail.ustc.edu.cn;  hclu@ustc.edu.cn; guzj@mail.ustc.edu.cn).

}
}

\maketitle
\date{}
\begin{abstract}

In milimeter wave heterogenous networks with \textcolor{blue}{integrated access and backhaul} (mABHetNets), a considerable part of spectrum resources are occupied by the backhaul link, which limits the performance of the access link.  In order to overcome such backhaul ``\emph{spectrum occupancy}'', we introduce cache in mABHetNets. Caching popular files at small base stations (SBSs) can offload the backhaul traffic and transfer spectrum from the backhaul link to the access link. 
To achieve the optimal performance of the cache-enabled mABHetNets, we first analyze the signal-to-interference-plus-noise ratio (SINR) distribution and derive the  average potential throughput (APT) expression by stochastic geometric tools.  Then, based on our analytical work, we formulate a  joint optimization problem of cache decision and spectrum partition to maximize the APT.  Inspired by the block coordinate descent (BCD) method, we propose a joint cache decision, spectrum partition and power allocation (JCSPA) algorithm to find the optimal solution.
Simulation results show the convergence and enhancement of the proposed algorithm. Besides, we verify the APT under different parameters and find that the introduction of cache facilitates the transfer of backhaul spectrum to access link.  
\textcolor{blue}{Jointly deploying appropriate caching capacity at SBSs and performing specified spectrum partition can bring up about 90\% APT gain in mABHetNets. }    

\end{abstract}

\begin{IEEEkeywords}
Millimeter wave, integrated access and backhaul \textcolor{blue}{(IAB)}, cache,  spectrum transfer, average potential throughput (APT). 
\end{IEEEkeywords}

%
\IEEEpeerreviewmaketitle

\section{Introduction}


The number of mobile terminal equipments  increases sharply in recent years and the demand for data  also increases at an annual growth rate of 63\% \cite{whitepaper}, while the existing scarce LTE spectrum resources\cite{IAB0} severely affect the high-demand communication. 
Fortunately,  owing  to the characteristics of abundant spectrum in milimeter wave (mmWave), the   high throughput communication has become possible by using mmWave in both the access and the backhaul link. However, mmWave signals are only suitable for short distance propagation \textcolor{blue}{and are  blockage sensitive}, so the base stations (BSs) need to be deployed densely in mmWave heterogeneous networks (HetNets), which in turn increases the backhaul traff\mbox{}ic. Particularly, integrated access and backhaul (IAB) in mmWave HetNets (mABHetNets)  has been standardized for the f\mbox{}ifth generation (5G) mobile communication technology in the third generation partnership project (3GPP) Rel-16\cite{IAB0,IAB1,IAB2,IAB3,3GPPIAB}. 
In  mABHetNets, macro base stations (MBSs) are connected to the mobile core network via high-speed fiber backhaul, but it is not practical to connect all small base sations (SBSs) to the mobile core network. High-power MBSs are overlaid by denser  low-power mmWave SBSs where SBSs provide high  rate service to the users by wireless access link and the MBS  maintains the backhaul capacity of the SBSs by the wireless backhaul link. 

 Since the access and the backhaul links share the mmWave spectrum resources in mABHetNets, spectrum partition has a significant impact on the network performance and hence attracted many research attempts\cite{Partition0,Partition2,Partition3,Partition4}. Authors in \cite{Partition0}  proposed the resource partition strategies as the total spectrum is dynamically between access and backhaul, or a static partition is defined for the access and backhaul link to achieve the maximum  coverage probability in mmWave HetNets. 
 Authors in\cite{Partition2}  explored the optimal partition of access and backhaul spectrum to maximize the rate coverage and authors in \cite{Partition3} leveraged allocated resource ratio between radio access and backhaul to study the maximization of network capacity by considering the fairness among SBSs. In\cite{Partition4}, the beamforming and spectrum partition were jointly investigated to improve the network capacity of mABHetNets. \textcolor{blue}{Maria \emph{et al.} in \cite{Partition5} optimized phase shift of Reconfigurable Intelligent Surface (RIS) element, bandwidth splitting among wireless access and backhaul, and transmission power to maximize the energy efficiency of RIS-aided IAB network.}


However, even with optimal spectrum partition, a considerable part of mmWave spectrum resources are still occupied by the backhaul link to maintain the backhaul capacity.  According to the findings in \cite{Partition2}, up to 50\% mmWave spectrum might be used
in backhaul link to satisfy the high speed data traffic. 
\textcolor{blue}{Such a \emph{``spectrum occupancy''} phenomenon of wireless backhaul link severely limits the transmission of wireless access link, which further affects the communication quality of users and degrades the spectrum resource utilization. Besides, several reports  show that a few files with high popularity are often requested by users, and this increases the transmission pressure on the wireless backhaul link \cite{2u,caching}. Repeated transmission also causes a lot of waste of power consumption and spectrum resources. Fortunately,}  enabling caching at the wireless edge such as SBSs is considered as a promising way to improve the energy efficiency and network throughput recently\cite{2u,caching,caching1,MostPop,caching2}, \textcolor{blue}{especially for those files with high popularity and frequently requested.  
	\textcolor{blue}{
		In \cite{caching}, Tao \emph{et al.} proposed a cache-enabled radio access network (RAN) to minimize the total network cost.
		Authors in \cite{caching1}  applied coded caching into small-cell network (SCN) and invetstigated average fractional offloaded traffic (AFOT)  and average ergodic rate (AER) performance metrics, then maximize the AFOT. 
		In \cite{caching2}, Xu \emph{et al.} proposed a cache-enabled HetNets with limited backhaul and analyzed the successful content delivery probability, successful delivery rate as well as energy efficiency theoretically.}
	Caching those files proactively  during off-peak time at the edge of the network can reduce the data traffic pressure of backhaul link \cite{MostPop}}. When the backhaul traffic is offloaded by caching popular files at the cache of SBSs,  \textcolor{blue}{a part of mmWave spectrum can be transferred from the backhaul link to the access link. }

%
%
%
%

Although the introduction of cache brings the benifits of improvement the network throughput, nonetheless some new problems arise both theoretically and technically\textcolor{blue}{: how to deploy the cache appropriately to solve the \emph{``spectrum occupacy''} problem, how much improvement the cache can bring to the throughput performance of the mABHetNets. Thus, it is essential to jointly consider the cache decision and spectrum partition in mABHetNets to improve the APT.} In this paper,   average potential throughput (APT) metric is used to measure the network performance, which has become the major performance metric in mmWave HetNets\cite{Throuhgput1,Throuhgput2,Throuhgput3,ASE2} and   focuses on analyzing   user's average throughput with the specific rate requirement \cite{Throuhgput1}. 
On the one hand, since each  BS in cache-enabled mABHetNets has a limited energy resource and the energy consumed by additional cache can not be neglected\cite{BSPowerModel,BSPowerModel1},  caching capacity at SBS changes the signal-to-interference-plus-noise ratio (SINR) and coverage probability of user, which affects the corresponding spectrum partition strategy and further changes the allocation of resources(i.e., caching capacity, spectrum partion and power allocation) in mABHetNets. Therefore, in theory, APT in cache-enabled mABHetNets needs to \textcolor{blue}{consider the effect of the above factors.} 
On the other hand, \textcolor{blue}{caching capacity as well as transmit power are limited by the maximum power constraint at SBS, which also affects the optimal spectrum partition coefficient and thus affect the APT, so these three resource variables are coupled with each other.}
 To achieve desired APT in cache-enabled mABHetNets, the joint optimization algorithm needs to be further investigated.
 To address the aforementioned issues, in this paper, the main contributions are summarized as follows. 
\begin{itemize}
\item 
We develop a tractable analytical framework \textcolor{blue}{from a stochastic geometry perspective and derive the user association probability via wireless access link as well as SBS association probability via wireless backhaul link based on the maximum biased received power, respectively.}  \textcolor{blue}{Considering the effect of blockage and directional beamforming, we derive the distance distribution between the different transmitters and receivers by taking the association probability into account.} Then the APT expression is derived to measure the performance of cache-enabled mABHetNets. \textcolor{blue}{Under noise-limited scenario, a closed-form  SINR distribution in APT expression can be obtained through stochastic geometry tools to provide some insights about our proposed cache-enabled mABHetNets. }

\item
Based on our analytical work, an APT maximization problem is formulated where caching capacity, spectrum partition and power allocation are jointly considered;
As the formulated problem is a mixed-integer nonlinear programming (MINLP) problem, we decompose it into two sub-problems, i.e., cache decision problem, spectrum partition and power allocation problem. Then, inspired by block coordinate descent (BCD) method, we propose a joint cache decision, spectrum partition and power allocation (JCSPA) algorithm  to approach the optimal solution in an alternative manner;
\item 
We investigate the effects of caching capacity and spectrum partition on APT both theoretically and experimentally. Some important insights on the interplay between caching capacity and spectrum partition are provided from the perspective of APT increment. Numerical simulation results are carried out to verify the convergence of proposed algorithm and show the effects of other cache-related parameters on APT. These indicate that joint optimization of cache decision and spectrum partition is an effective method to bring \textcolor{blue}{about 90\%  increment} on APT beyond traditional mABHetNets. 
\end{itemize}

The rest of the paper is organized as follows. Firstly, we introduce the system model of \textcolor{blue}{cache-enabled} mABHetNets in Section II. Section III derives the SINR distribution of cache-enabled mABHetNets and then APT is further def\mbox{}ined and derived based on the SINR distribution. Next, Section IV. gives the APT maximization problem and solution.  Performance evaluation and numerical  results are provided in Section V. Finally, the paper is concluded in Section VI.

\section{System Model}
   \subsection{Network Model}
   In this section, we come up with a downlink cache-enabled  mABHetNet with \textcolor{blue}{integrated access and backhaul} architecture, which consists of an MBS tier and an SBS tier as shown in Fig. \ref{example1}.
In this architecture, high power MBSs are connected to core  network via broad high rate optical f\mbox{}iber links and SBSs are associated with the corresponding MBS via providing mmWave spectrum wireless backhaul transmission links. \textcolor{blue}{The typical user could be associated with both the MBS and SBS to obtain the wireless access service\cite{OCF}.}
    By stochastic geometry tool, the locations of the MBS and SBS are modeled as the independent Poisson Point Processes(PPPs), which are denoted by $\Phi_m\in \mathbb{R}_2$ and $\Phi_s\in \mathbb{R}_1$ with densities of $\lambda_m$ and $\lambda_s$, respectively. We stipulate that the user density is large suf\mbox{}f\mbox{}iciently so that each \textcolor{blue}{BS} consists of at least one associated user in its coverage area. \textcolor{blue}{We select a typical user at the origin for analysis. Based on Slivnyak's theorem in stochastic geometry, placing a point at the origin will not change the property of PPP.}

\begin{figure}[htbp]
	\centering
	\includegraphics[width=5in]{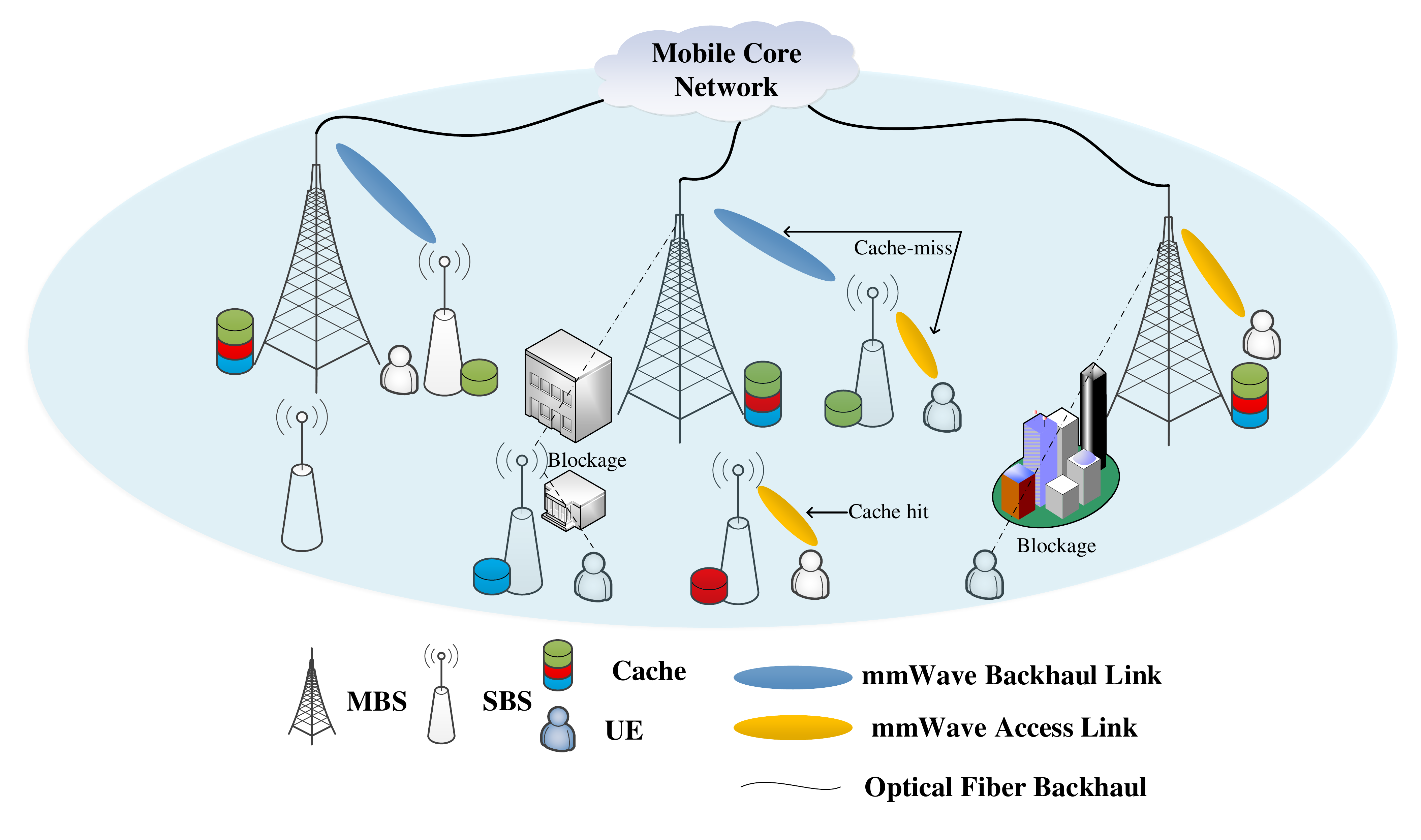}
	\captionsetup{name={Fig.},labelsep=period,singlelinecheck=off,font={small}}
	\caption{\textcolor{blue}{An illustration of cache-enabled millimeter wave heterogeneous network with integrated access and backhaul.} }
	\label{example1}
\end{figure}
%

\subsection{Caching Model}
The architecture we proposed in this paper is a cache-enabled HetNet. File library is denoted by symbol $\mathcal{F}$ and the number of all f\mbox{}iles is $|\mathcal{F}|=F$. In order to derive conveniently, we assume that all the f\mbox{}iles in library have the same size and their size is expressed in f\mbox{}ile units. Such assumption can hold since those files can be divided into chunks of equal size in practical transmission \cite{Samesize}. In the whole f\mbox{}ile library, different f\mbox{}iles have different popularities and the popularity of the f\mbox{}ile often changes slowly or doesn't change in a short time \cite{Popularity}. The popularity of the f\mbox{}ile can be predicted accrding to interview records or machine learning retrieval methods and this research problem has been actively studied in recent years while it is outside the scope of this paper. In general, the popularity of the f\mbox{}ile obeys the Zipf distribution\cite{BSPowerModel}\cite{Zipf0}. For the f\mbox{}iles denoted by the index set  $ \mathcal{F}={[1,2,...,f,...,F]}$, the popularity of the f\mbox{}ile $ f \in \mathcal{F} $ is
$p_{f}=\frac{f^{-\gamma_p}}{\sum_{g=1}^{F}g^{-\gamma_p}}$,
of which $\gamma_p$ is the distribution parameter of the Zipf distribution and it denotes the skewness of Zipf distribution, while the bigger $\gamma_p$ means that the fewer files have higher popularity\cite{Zipf1}. 
  In this paper, we use the highest-popularity-first cache strategy, which means that files with higher popularity will be cached preferentially. The caching capacity of MBS is large enough to load all the f\mbox{}iles in the f\mbox{}ile library\cite{MBSFileLibrary}, so the MBS caches all the $F$ f\mbox{}iles. The partial f\mbox{}iles are deployed in the SBS according to the level of the popularity, so the SBS caches the $C$ f\mbox{}iles in descending order of f\mbox{}ile popularity. From these  we can also see that the MBS is `Full-Cache' and the SBS is `Partial-Cache'. Based on this, the cache hit ratio of the SBS is
\begin{equation}\label{HitRatio}
p_{h}(C)= \frac{\sum_{f=1}^{C}f^{-\gamma_p}}{\sum_{g=1}^{F}g^{-\gamma_p}}.
\end{equation}

\subsection{Power Consumption Model}

Like \cite{networkPowerModel}, the total power consumption model of the mABHetNet can be regarded as the power consumption of the MBS and SBS, where the power consumption of the MBS is $\rho_m P_{m}^{\mathrm{tr}}+P_{m}^{\mathrm{fc}}+P_{m}^{\mathrm{ca}}=\rho_{m} P_{m}^{\mathrm{tr}}+P_{m}^{\mathrm{fc}}+\omega_{ca}\textcolor{blue}{s}F$ and the power consumption of the SBS is $\rho_s P_{s}^{\mathrm{tr}}+P_{s}^{\mathrm{fc}}+P_{s}^{\mathrm{ca}}=\rho_s P_{s}^{\mathrm{tr}}+P_{s}^{\mathrm{fc}}+\omega_{ca}\textcolor{blue}{s}C$. $\rho_m$ and $\rho_s$ are the power consumption amplif\mbox{}ication factor of MBS and SBS transmitter. $P_{m}^{\mathrm{fc}}$ and $P_{s}^{\mathrm{fc}}$ are the f\mbox{}ixed circuits-related power consumption of MBS and SBS\cite{Powercc}. 
$P_{m}^{\mathrm{ca}}$ and $P_{s}^{\mathrm{ca}}$ are the caching power consumption of MBS and SBS based on energy proportional model. In order to quantify the caching power consumption, widely used proportional power model \cite{BSPowerModel,BSPowerModel0},\cite{Powercc,frapca1,frapca2} is introduced in this paper so that the caching power consumption is proportional to the cache f\mbox{}ile size such as $P_{m}^{\mathrm{ca}}=\omega_{ca}\textcolor{blue}{s}F$, $P_{s}^{\mathrm{ca}}= \omega_{ca}\textcolor{blue}{s}C$, of which $\omega_{ca}$ is the coef\mbox{}f\mbox{}icient of cache power consumption(unit:watt/bit). Assuming that the maximum power of each SBS has been preset, each SBS can adjust the transmit power and caching power(caching capacity) without exceeding the preset maximum power limit\cite{BSPowerModel0}.
The actual transmit power consumption of the SBS and MBS are  $P_s^{\mathrm{tr}}=\frac{P_s^{\mathrm{max}}-P_s^{\mathrm{fc}}-\omega_{ca}\textcolor{blue}{s}C}{\rho_s}=P_s'- \omega_{ca}^{'s}C$,  $P_m^{\mathrm{tr}}=\frac{P_m^{\mathrm{max}}-P_m^{\mathrm{fc}}-\omega_{ca}\textcolor{blue}{s}F}{\rho_m}=P_m'- \omega_{ca}^{'m}F$, \textcolor{blue}{of which $P_s^{'}=\frac{P_s^{\mathrm{max}}-P_s^{\mathrm{fc}}}{\rho_s}$, $\omega_{ca}^{'s}=\frac{\omega_{ca}{s}}{\rho_s}$, $P_m^{'}=\frac{P_m^{\mathrm{max}}-P_m^{\mathrm{fc}}}{\rho_m}$ and $\omega_{ca}^{'m}=\frac{\omega_{ca}{s}}{\rho_m}$.}
The relevant notations are summarized in the Table \ref{Symbols}.

\begin{table}[htbp]
	\centering
	\footnotesize
	\setlength\tabcolsep{3pt}
	\caption{Main Notations}
	\label{Symbols}
	\renewcommand\arraystretch{0.8}
	\begin{tabular}{|l|l|l|l|}
		\hline
		\textbf{Notation} & \textbf{Meaning} & \textbf{Notation} & \textbf{Meaning} \\ \hline
		$W/W_{ac}/W_{bh}$ &Total spectrum/access link/backhaul link bandwidth    &$\lambda_m/\lambda_s$       &Density of MBS/SBS        \\ \hline
		$\alpha_\mathrm{L}/\alpha_\mathrm{NL}$&Path loss exponet of LOS/NLOS transmission         &$C$           &Caching capacity of SBS          \\ \hline
		 $\eta$         &mmWave bandwidth partition ratio for access link          & $C_{\mathrm{max}}$    &Maximum caching capacity of SBS   \\ \hline
		$P_{m}^{\max}/P_{s}^{\max}$         &Maximum power of MBS/SBS         & $F$               &Number of f\mbox{}iles in file library       \\ \hline
		$P_{m}^{\mathrm{tr}}/P_{s}^{\mathrm{tr}}$             &Transmit power of MBS/SBS         &  $p_h$  &Cache hit ratio of  SBS      \\ \hline
		{\color{blue}$P_{m}^{\mathrm{fc}}/ P_{s}^{\mathrm{fc}}$}  &\textcolor{blue}{Fixed circuits-related power consumption of MBS/SBS }   & {\color{blue}$\theta$}      &\textcolor{blue}{Mainlobe beamwidth}  \\ \hline
		{\color{blue}$P_{m}^{\mathrm{ca}}/P_{s}^{\mathrm{ca}}$} & \textcolor{blue}{Caching  power consumption of MBS/SBS}     &  $B_m/B_s$                          &The association bias factor of MBS/SBS       \\ \hline
		{\color{blue}$\rho_m/\rho_s$}   &\textcolor{blue}{Power consumption amplification factor of MBS/SBS}      &    {\color{blue}$s$}                      &\textcolor{blue}{Size of a file} \\ \hline
		{\color{blue}$M/m$}                   &\textcolor{blue}{The antenna gain of mainlobe/slide lobe}      &    $\omega_{ca}$                      &Caching power consumption coef\mbox{}f\mbox{}icient        \\ \hline	
	\end{tabular}
\end{table}
\subsection{Channel and Transmission Model}
In this paper, on account of the high density of  MBS or SBS deployment and the linear propagation transmission characteristic of mmWave, each transmission link is assumed to be indepedent  Nakagami-Rayleigh fading and the signals are based on the transmission link as either line-of-sight(LOS) or non-line-of-sight(NLOS). The path loss function expresses the signal attenuation relationship with distance $r$. The mathematical expression of the path loss function is as follows\cite{LOSModel}:

\begin{align}\label{LOSModel}
	\operatorname{L}(r)
	=\left\{
	\begin{array}{ll}
		{A_{\mathrm{L}} r^{-\alpha_{\mathrm{L}}},}      &{\text{with LOS probability $\mathcal{P}_\mathrm{L}(r)$}}, \\
		{A_{\mathrm{NL}} r^{-\alpha_{\mathrm{NL}}},}    &{\text{with NLOS probability $\mathcal{P}_\mathrm{NL}(r)=1-\mathcal{P}_\mathrm{L}(r)$}},
	\end{array}
	\right.
\end{align}
of which the  LOS probability  $\mathcal{P}_{\mathrm{L}}(r)=e^{-\beta r}$
, $A_{\mathrm{L}}$($A_{\mathrm{NL}}$) is the LOS(NLOS) pathloss parameter, $r$ is the distance between the user and base station, $\alpha_{\mathrm{L}}$($\alpha_{\mathrm{NL}}$) is a path loss exponent in LOS(NLOS) transmission of Nakagami fading. $\beta \geq 0$ is the parameter that captures
density and size of obstacles between the transmitter and the reveicer. 
In practice, the probability of LOS transmission coverage is very close to one and  NLOS  transmission  could be neglected.

\textcolor{blue}{Besides, directional beamforming  are used for all antennas at the transceivers and signals propagates along the main lobe of the antenna. We use the sectorial antenna pattern in this analysis \cite{beamforming}. The antenna gain pattern for the transceiver in the mABHetNets is given as 
\begin{equation}
	G_q(\phi)= \left\{
	\begin{aligned}
		&M , \qquad \text{if} \left |\phi \right| \le\theta \\
		&m , \qquad \text{otherwise}.
	\end{aligned}
	\right.
\end{equation}	
where $q\in \{T,R\}$ denotes the antenna at the transmitter or receiver, $\phi \in [0,2\pi)$ is the angle off  boresight direction, $\theta$ is the mainwidth of mainlobe,  $M$ and $m$ are the gain of the main lobe and slide lode.  Then the random gain for the transmission link and its probability is given as
\begin{equation}
	G= \left\{
	\begin{aligned}
		&MM , \qquad \text{with probability}  \frac{\theta^2}{4\pi^2},  \\
		&Mm , \qquad \text{with probability} \frac{\theta(2\pi-\theta)}{2\pi^2},\\
		&mm , \qquad \text{with probability} \frac{(2\pi-\theta)^2}{4\pi^2}.
	\end{aligned}
	\right.
\end{equation}
For tractbility of analysis, the perfect beamforming is assumed between the transmitter and receiver \cite{beamforming1}.}
Based on the above analysis, we can derive the the signal to interference plus noise ratio expression of a typical user from the associated SBS or the associated MBS \textcolor{blue}{via wireless access link} at the distance $r$ as follows.
\begin{small}
\begin{align}\label{SBSUSERSINR}
		&\operatorname{SINR}_{s}(r_s)
		=\frac{P_{s}^{tr} B_{s}\textcolor{blue}{G_{s}}h_{s  } L(r_s)}{I_s+I_m+N_{0}} 
		=\frac{(P'_s-\omega^{'s}_{\mathrm{ca}}C) B_{s}\textcolor{blue}{G_{s}}h_{s  } L(r_{s })}
		{\sum\limits_{i \in \Phi_{s} \backslash b_{s,0}} (P'_{s}-\omega^{'s}_{\mathrm{ca}}C) B_{s }\textcolor{blue}{G_{i}} h_{s, i}L(r_{s, i})+
			\sum\limits_{l \in \Phi_{m} } P_{m}^{tr} B_{m }\textcolor{blue}{G_{l}} h_{m, l}L(r_{m, l})+N_{0}},
\end{align}
\end{small}
\begin{small}
\textcolor{blue}{\begin{align}\label{MBSUSERSINR}
		&\operatorname{SINR}_{m}(r_m)
		=\frac{P_{m}^{tr} B_{m}\textcolor{blue}{G_{m}}h_{m } L(r_m)}{I'_s+I'_m+N_{0}} 
		=\frac{(P'_m-\omega^{'m}_{\mathrm{ca}}F) B_{m}\textcolor{blue}{G_{m}}h_{m} L(r_{m})}
		{\sum\limits_{i \in \Phi_{s} } (P'_{s}-\omega^{'s}_{\mathrm{ca}}C) B_{s }\textcolor{blue}{G_{i}} h_{s, i}L(r_{s, i})+
			\sum\limits_{l \in \Phi_{m}\backslash b_{m,0} } P_{m}^{tr} B_{m }\textcolor{blue}{G_{l}} h_{m, l}L(r_{m, l})+N_{0}},
\end{align}}
\end{small}
where $b_{s,0}$ and \textcolor{blue}{$b_{m,0}$} denotes the serving SBS \textcolor{blue}{and MBS} for user, $B_s$ and $B_m$ are the association bias factor of SBS and MBS, \textcolor{blue}{$G_{m}$, $G_{s}$, $G_{i}$ and $G_{l}$ are the antenna gain of the correspoding transmission link($G_{m}=G_{s}=MM$).} $h_{s,i} $ \textcolor{blue}{and $h_{m,l}$ are}  the small-scale fadings from $i$-th SBS or \textcolor{blue}{$j$-th SBS} ($h_s\textcolor{blue}{=h_m}$ $\sim$ $\exp(1)$).  $L(r_{s})$ \textcolor{blue}{$L(r_{m})$} are the path loss from the serving SBS or MBS to the typical user. $r_{s}$  is the distance between the association SBS $b_{s,0}$ and the typical user. 
\textcolor{blue}{$r_{m}$  is the distance between the association MBS $b_{m,0}$ and the typical user.} 
$L(r_{s, i})$ is path loss from the $i$-th SBS to the typical user.
 \textcolor{blue}{$L(r_{m, l})$ is the path loss from the $l$-th MBS to the typical user.}
 $N_{0}$ is the additive white Gaussian noise component.
Besides, the MBS provides the wireless backhaul link to the SBS for data transmission. Similar to (\ref{SBSUSERSINR}) \textcolor{blue}{and (\ref{MBSUSERSINR})}, for a typical SBS at a random distance $r_{bh}$ from its associated MBS,  the SINR of the signal from the MBS to the SBS of the downlink backhaul link is then given as follows.
\begin{small}
\begin{align}\label{M-SBSSINR}
		&\operatorname{SINR}_{bh} (r_{bh})
		=\frac{P_{m}^{tr} B_m\textcolor{blue}{G_{m}} h_{m } L(r_{bh})}{ I_{bh}+  N_0}
		=\frac{(P'_m-\omega^{'m}_{\mathrm{ca}}F) B_m\textcolor{blue}{G_{m}} h_{m } L(r_{bh})}
		{ \sum\limits_{i \in \Phi_{m} \backslash b_{m,0 }}     P_{m}^{tr} B_m \textcolor{blue}{G_{i}} h_{m,i}L(r_{bh,i})+N_0}.
\end{align}
\textcolor{blue}{The symbol have similar meanings with (\ref{SBSUSERSINR}) and (\ref{MBSUSERSINR}).}
\end{small}
\subsection{Spectrum Partition Strategy}
  In this section, we will introduce the spectrum partition strategies in the downlink mABHetNet. $W$ is the whole millimeter wave spectrum bandwidth available for the mABHetNet.  To avoid the spectrum interference,  the spectrum of access and backhaul link is orthogonal, so the whole spectrum  needs to be divided into two parts: $W_{bh}$ for the backhaul link and $W_{ac}$  for the access link. Then we will introduce two bandwidth allocation strategies in the following.
  \subsubsection{Fixed Spectrum Allocation(FSA)}This default spectrum allocation strategy  is  widely used in traditional heterogeneous network. In order to satisfy the communication needs failrly, we allocate  spectrum for the access and backhaul link  equally as $W_{bh}= 0.5W$ and $W_{ac}= 0.5W$($W$ is the total spectrum bandwidth). In this case, we will give the transmission data rate based on Shannon's theorem of backhaul link and access link, respectively.
  \begin{align}
  	R_{i}= \frac{1}{2}W\operatorname{log}_{2}(1+\operatorname{SINR}_{i}),
  \end{align} 
where $i=\{bh, m,s \} $ denotes the backhaul link, MBS tier or SBS tier via access link.
\subsubsection{Dynamic Spectrum Allocation(DSA)}
  In order to alleviate  ``\emph{spectrum occupancy}'' problem, we introduce a parameter $\eta$ to allocate spectrum dynamically according to the current caching status at SBS, where $\eta\in\left[0,1\right]$ is the spectrum partition ratio coef\mbox{}f\mbox{}icient for the access link so the  spectrum for the  backhaul link is $W_{bh}= (1-\eta)W$ and the spectrum for the access  link is $W_{ac}= \eta W$. Users are usually accessed to the SBSs and f\mbox{}ile delivery needs to go through the backhaul link, so the access link  is cache-miss when the f\mbox{}iles requested at not cached at the SBS. In other words, the f\mbox{}ile delivery rate depends on both the access link rate and the backhaul link rate. Similarly, we give the transmission data rate of each link as follows.
  \begin{figure}[htbp]
  	\centering
  	\includegraphics[width=5in]{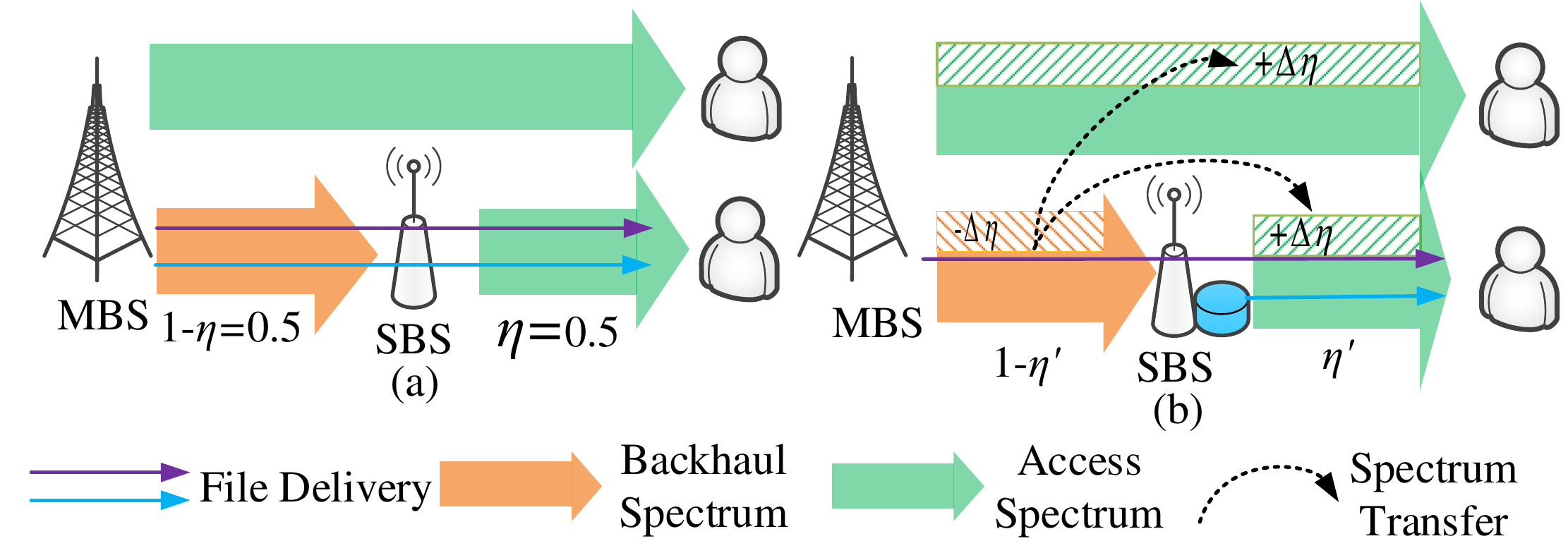}
  	\captionsetup{name={Fig.},labelsep=period,singlelinecheck=off,font={small}}
  	\caption{Spectrum partition strategies between access and backhaul link, (a) Fixed Spectrum Allocation(FSA), (b) Dynamic Spectrum Allocation(DSA). }
  	\label{schematic}
  \end{figure}
  \begin{align}
  	&R_{bh} = (1-\eta)W\operatorname{log}_{2}(1+\operatorname{SINR}_{bh}),\\
  	&R_{s} = \eta W\operatorname{log}_{2}(1+\operatorname{SINR}_{s}),\\
  	&\textcolor{blue}{R_{m} = \eta W\operatorname{log}_{2}(1+\operatorname{SINR}_{m}).}
  \end{align}
\section{SINR Distribution and APT Analysis  of mABHetNet}
In this section, we will derive the expression of SINR distribution of the typical user conditioned on its association selections. As users are covered  by the associated SBS,  we f\mbox{}irst derive the PDF of the distance   between the user and  the serving SBS. Further,  SINR distributions of the SBS associated with the serving MBS are also obtained. Finally, we derive the APT expression of cache-enabled mABHetNets.
\subsection{The PDF of Distance to Nearest Base Station}
First of all, we need to derive the probability distribution function (PDF) of the distance between the typical  user and its nearest SBS. We focus on the  typical user at the origin.  When the typical user communicates with the closest SBS at a distance $r$, no other SBS can be closer than $r$. 
Since the typical user is associated with the closest SBS via either LOS or NLOS channel, we derive these PDFs in the following Lemma.
\begin{lemma}\label{LemmaUserClosestSBSProbabilty}
The PDF of $r$ (the distance between the typical user and the  nearest SBS via a LOS/NLOS path) is written as
    \begin{align}\label{LOSUserClosestSBSProbabilty}
        f_{R_{s}}^{\mathrm{k}}(r)     &=\mathcal{P}_{k}(r)  \times \exp \left(-\pi r^{2} \lambda_s\right)\times 2 \pi r \lambda_s,
    \end{align}
where $s $ denotes the index of SBS tier, $\mathrm{k} =\{\mathrm{L}, \mathrm{NL}\} $ denotes the  transmission path of LOS or NLOS  in wireless access link.
\end{lemma}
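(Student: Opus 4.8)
The plan is to derive this PDF from the standard null-probability characterization of a two-dimensional homogeneous PPP $\Phi_s$ of intensity $\lambda_s$, combined with the LOS/NLOS thinning. First I would fix the typical user at the origin (justified by Slivnyak's theorem as stated) and let $R_s$ denote the distance to the nearest SBS. The event $\{R_s > r\}$ is precisely the event that there is no point of $\Phi_s$ inside the disk $b(0,r)$ of radius $r$ centered at the origin. Since $\Phi_s$ is a homogeneous PPP with intensity $\lambda_s$, the number of points in $b(0,r)$ is Poisson with mean $\lambda_s \cdot \pi r^2$, so $\mathbb{P}(R_s > r) = \exp(-\pi \lambda_s r^2)$. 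Differentiating the complementary CDF $F_{R_s}(r) = 1 - \exp(-\pi\lambda_s r^2)$ gives the unconditional nearest-neighbor density $f_{R_s}(r) = 2\pi r \lambda_s \exp(-\pi\lambda_s r^2)$.

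Next I would incorporate the LOS/NLOS split. Conditioned on the nearest SBS being at distance $r$, that link is LOS independently with probability $\mathcal{P}_{\mathrm{L}}(r) = e^{-\beta r}$ and NLOS with probability $\mathcal{P}_{\mathrm{NL}}(r) = 1 - \mathcal{P}_{\mathrm{L}}(r)$, per the blockage model in \eqref{LOSModel}. Treating $f^{\mathrm{k}}_{R_s}(r)\,dr$ as the (defective) probability that the nearest SBS lies in $[r, r+dr)$ \emph{and} the link to it is of type $\mathrm{k}$, I would simply multiply the nearest-neighbor density by the corresponding path-type probability:
\begin{align}
f^{\mathrm{k}}_{R_s}(r) = \mathcal{P}_{k}(r) \times 2\pi r \lambda_s \times \exp(-\pi \lambda_s r^2), \qquad \mathrm{k} \in \{\mathrm{L}, \mathrm{NL}\},
\end{align}
which is exactly the claimed expression. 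Note these are sub-densities whose sum over $\mathrm{k}$ integrates to one, since $\mathcal{P}_{\mathrm{L}}(r) + \mathcal{P}_{\mathrm{NL}}(r) = 1$; I would remark on this so the reader sees the normalization is consistent.

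There is no serious obstacle here — the result is the classical PPP nearest-neighbor distance distribution decorated with an independent blockage thinning — but the one point that deserves care is the interpretation of $f^{\mathrm{k}}_{R_s}$ as a \emph{defective} density rather than a genuine conditional PDF, and the implicit independence assumption between the blockage states of distinct links (so that "nearest point" and "its link type" factor cleanly). I would state that the blockage indicators are drawn independently across links, consistent with the channel model subsection, so that conditioning on the nearest-SBS distance does not bias the LOS probability of that link. If one instead wanted the PDF of the distance to the nearest SBS \emph{that is reached via a LOS path}, the derivation would change (one would need a void probability over the LOS-thinned process), so I would make explicit which of the two quantities Lemma~\ref{LemmaUserClosestSBSProbabilty} is about — here it is the nearest SBS overall, with its link type recorded as a mark.
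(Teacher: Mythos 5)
Your proposal is correct and follows essentially the same route as the paper's own proof in Appendix A: decompose the event into ``nearest SBS at distance $r$'' (void probability of the PPP giving $2\pi r\lambda_s e^{-\pi\lambda_s r^2}$) and ``that link is of type $\mathrm{k}$'' (probability $\mathcal{P}_k(r)$), then multiply. Your added remarks on the defective-density interpretation and the independence of blockage marks are sound refinements of the same argument, not a different method.
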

\begin{proof}
The proof process is provided in Apendix \ref{Appe1:NearDistance}.
\end{proof}
\begin{remark}
  Similarly to Lemma \ref{LemmaUserClosestSBSProbabilty}, the uncached f\mbox{}ile data will be delivered from the MBS to the SBS by the wireless backhaaul link of mmWave. Similar to the above analysis in Lemma \ref{LemmaUserClosestSBSProbabilty}, the PDF  of distance $r$ (between  the SBS  and the  associated the nearest MBS via a LOS/NLOS path)  can be written as
\begin{align}\label{PDFSBSMBSLOS}
  f_{R_{bh}}^{\mathrm{k}}(r)&=\mathcal{P}_{k}(r) \times  \exp \left(-\pi r^{2} \lambda_m\right)\times 2 \pi r \lambda_m,
\end{align}
  where bh denotes the index of backhaul link and $\mathrm{k}$ is the same as (\ref{LOSUserClosestSBSProbabilty}). 
\end{remark}
\subsection{Association Probability}
In the mABHetNet, we will f\mbox{}irst analyze the probability that a user  is associated with SBS tier via different path due to the different  transmission path 
and the densities of the SBS and MBS.  
Besides, since SBS may be associated with MBS via different transmission paths in the SBS backhaul association, different SBS backhaul association probabilities should be derived in the meanwhile.
\subsubsection{ user association probability}
We consider a user association based on maximum biased received power, where a mobile user is associated with the strongest SBS in terms of  the received power at the user. Then, for  a typical user associated with the SBS tier via LOS path and NLOS path, the received powers  are $P_{s}^{tr}  B_s \textcolor{blue}{G_{s}}h_{s}  A_\mathrm{L} r^{-\alpha_\mathrm{L}}$ and $P_{s}^{tr}  B_s\textcolor{blue}{G_{s}}h_{s}  A_\mathrm{NL} r^{-\alpha_\mathrm{NL}}$, respectively. 
Based on the maximum biased received power association strategy, the BS density and transmit power as well as transmission path determine the probability that a typical user is associated with an SBS. The following lemma provides the SBS association probability via LOS and NLOS path respectively.
\begin{lemma}\label{LemmaUserAssociatedSBS}
   For the given distance $r$, the probabilities that a typical user is associated with the SBS tier by LOS link and NLOS link are
\begin{align}
    F_{s}^{\mathrm{L}}(r) &=p_{ln}^{ss}(r)\textcolor{blue}{p_{ll}^{sm}(r)p_{ln}^{sm}(r)}f_{R_s}^{\mathrm{L}}(r),\label{UserAssociatedSBS0}\\
    F_{s}^{\mathrm{NL}}(r)&=p_{nl}^{ss}(r)\textcolor{blue}{p_{nl}^{sm}(r)p_{nn}^{sm}(r)}f_{R_s}^{\mathrm{NL}}(r),\label{UserAssociatedSBS1}
\end{align}
\textcolor{blue}{
Then, the probabilities that a typical user is associated with the MBS tier by LoS link and NLoS link are
\begin{align}\label{UserAssociatedMBS}
	F_{m}^{\mathrm{L}}(r)&=p_{ln}^{mm}(r)p_{ll}^{ms}(r)p_{ln}^{ms}(r)f_{R_{m}}^{\mathrm{L}}(r),\\
	F_{m}^{\mathrm{NL}}(r)&=p_{nl}^{mm}(r)p_{nl}^{ms}(r)p_{nn}^{ms}(r)f_{R_{m}}^{\mathrm{NL}}(r)
\end{align}}
where 
$p_{ln}^{ss}(r)=e^{-\lambda_s\pi\left[ \left(\frac{A_\mathrm{L}}{A_\mathrm{NL}}\right)^{\frac{-1}{\alpha_{\mathrm{NL}}}} r^{\frac{\alpha_{\mathrm{L}}}{\alpha_{\mathrm{NL}}}}\right]^{2}}$ denotes the probability of the event that the user obtains the desired LOS signal from SBS and the NLOS  interference from the SBS. Other notations have  similar def\mbox{}inition and can be found in the proof.
\end{lemma}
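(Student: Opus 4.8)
The plan is to write the association event as a conjunction of independent sub-events and to factor its probability using the independence of the two PPPs $\Phi_s,\Phi_m$ and of the LOS/NLOS thinnings. Fix the typical user at the origin (Slivnyak) and condition on the nearest LOS SBS lying at distance $r$; by Lemma~\ref{LemmaUserClosestSBSProbabilty} the corresponding density is $f_{R_s}^{\mathrm L}(r)$. This SBS serves the user over a LOS access link exactly when its long-term (fading-averaged) biased received power $P_s^{\mathrm{tr}}B_sM^2A_{\mathrm L}r^{-\alpha_{\mathrm L}}$ exceeds that of the strongest candidate in each of the three remaining categories: the nearest NLOS SBS, the nearest LOS MBS, and the nearest NLOS MBS. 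No ``LOS SBS competitor'' category is needed, since any other LOS SBS is farther than $r$ and hence weaker. The common main-lobe gain $M^2=G_s$ carried by every serving candidate cancels from all comparisons, so the antenna pattern does not enter the association probability.

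First I would invert each power comparison into a minimum-distance requirement on the competing BS. For the nearest NLOS SBS at distance $x$, the condition $A_{\mathrm L}r^{-\alpha_{\mathrm L}}\ge A_{\mathrm{NL}}x^{-\alpha_{\mathrm{NL}}}$ gives the exclusion radius $x\ge(A_{\mathrm L}/A_{\mathrm{NL}})^{-1/\alpha_{\mathrm{NL}}}r^{\alpha_{\mathrm L}/\alpha_{\mathrm{NL}}}=:d_{ln}^{ss}(r)$; here $P_s^{\mathrm{tr}}B_s$ cancels, which is why this radius depends only on the path-loss constants. Comparing instead with the nearest LOS MBS and the nearest NLOS MBS — now keeping the tier ratio $P_m^{\mathrm{tr}}B_m/(P_s^{\mathrm{tr}}B_s)$ — produces exclusion radii $d_{ll}^{sm}(r)$ and $d_{ln}^{sm}(r)$ of the same power-law form.

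Next I would compute, for each category, the probability that no competing BS lies inside its exclusion ball. By the void probability of a Poisson process this equals $\exp(-\lambda_s\pi d^2)$ for an SBS competitor and $\exp(-\lambda_m\pi d^2)$ for an MBS competitor; substituting $d=d_{ln}^{ss}(r)$ reproduces $p_{ln}^{ss}(r)=\exp\big(-\lambda_s\pi[(A_{\mathrm L}/A_{\mathrm{NL}})^{-1/\alpha_{\mathrm{NL}}}r^{\alpha_{\mathrm L}/\alpha_{\mathrm{NL}}}]^2\big)$, and likewise $p_{ll}^{sm}(r)$ and $p_{ln}^{sm}(r)$. Because $\Phi_s$ and $\Phi_m$ are independent and each LOS/NLOS mark splits a process into independent sub-processes, these three void events are mutually independent and independent of the conditioning distance, so $F_s^{\mathrm L}(r)=f_{R_s}^{\mathrm L}(r)\,p_{ln}^{ss}(r)\,p_{ll}^{sm}(r)\,p_{ln}^{sm}(r)$, which is (\ref{UserAssociatedSBS0}). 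Equation (\ref{UserAssociatedSBS1}) follows by interchanging the LOS and NLOS roles (the serving SBS is now NLOS and the competing-SBS radius uses $A_{\mathrm{NL}}/A_{\mathrm L}$), and the MBS-association statements (\ref{UserAssociatedMBS}) follow by swapping the two tiers throughout.

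The step I expect to be the main obstacle is the treatment of the exclusion regions for the thinned sub-processes. Strictly, the probability that no NLOS SBS lies within radius $d$ is $\exp\big(-\int_0^d \mathcal{P}_{\mathrm{NL}}(x)\,2\pi x\lambda_s\,dx\big)$ rather than $\exp(-\lambda_s\pi d^2)$, and the PDF in Lemma~\ref{LemmaUserClosestSBSProbabilty} similarly uses the full-PPP void probability instead of the LOS-thinned one. I would therefore either (i) adopt the common approximation that treats each competing sub-process inside its exclusion ball as the full homogeneous PPP — reasonable in the blockage-light / LOS-dominated regime noted after (\ref{LOSModel}) — or (ii) restate every void probability with the exact inhomogeneous intensity measure and carry the $\beta$-dependent integrals through. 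The remaining manipulations (inverting the path-loss inequalities and multiplying independent void probabilities) are routine.
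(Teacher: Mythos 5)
Your proposal follows essentially the same route as the paper's Appendix B proof: condition on the nearest LOS SBS at distance $r$, invert each biased-received-power comparison into an exclusion radius for the competing category (NLOS SBS, LOS MBS, NLOS MBS), evaluate each as a Poisson void probability $e^{-\lambda\pi d^2}$, and multiply the independent factors against $f_{R_s}^{\mathrm L}(r)$. The only difference is that you explicitly flag (and the paper silently adopts) the approximation of using the full-PPP void probability for the LOS/NLOS-thinned sub-processes, and you average out the fading before the comparison whereas the paper carries $h_s,h_m$ through its cross-tier exclusion radii; neither changes the stated result.
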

\begin{proof}
The proof process is provided in Apendix \ref{Appen2:AssoPro}.
\end{proof}

\subsubsection{SBS backhaul association probability}
The SBS will be associated with the MBS by the wireless backhaul link. The SBS backhaul association strategy is also based on the maximum biased received power from the MBS. Since the backhaul transmission includes LOS link and NLOS link, there exist two   backhaul association probabilities. These  probabilities are given in the following remark.
\begin{remark}
Similar to Lemma \ref{LemmaUserAssociatedSBS}, the probabilities that a typical SBS is associated with   the MBS tier by LOS link and NLOS
link are
\begin{align}\label{SBSAssociatedMBS}
F_{bh}^{\mathrm{L}}(r)=p_{ln}^{bh}(r) f_{R_{bh}}^{\mathrm{L}}(r),\\ 
F_{bh}^{\mathrm{NL}}(r)=p_{nl}^{bh}(r) f_{R_{bh}}^{\mathrm{NL}}(r),
\end{align}
where
$p_{ln}^{bh}(r)=e^{-\lambda_m\pi\left[\left(\frac{A_\mathrm{L}}{A_\mathrm{NL}}\right)^{\frac{-1}{\alpha_{NL}}} r^{\frac{\alpha_{L}}{\alpha_{NL}}}\right]^{2}}$ is the probability that the SBS is associated with the  MBS in LOS link and
the interference is from  MBS in NLOS link.  $p_{nl}^{bh}(r)$ has the similar definition.
\end{remark}
\subsection{SINR Distribution}
To study the APT  performance of mABHetNet, we need to f\mbox{}irst  investigate the SINR distribution of the user  covered by SBS tier via access link and the SINR distribution of  the SBS   covered by MBS via backhaul link.
This SINR distribution is def\mbox{}ined as the  probability that the received SINR is above a pre-designated threshold $\gamma$:
\begin{equation}
P^{\operatorname{cov}}(  \gamma)=\operatorname{Pr}[\operatorname{SINR}>\gamma].
\end{equation}
Since the user is covered by SBS, we f\mbox{}irst give the  SINR distribution of the user. Then we give the SINR distribution of the typical SBS which is covered by MBS.
\begin{proposition}\label{ProUserSINRCoverageProbability}
1) SINR distribution of user covered by the SBS \textcolor{blue}{or the MBS} :

The SINR distribution that the typical user is associated with the SBS via access link is:
\begin{align}\label{UserSBSSINRCoverageProbability}
\mathbb{P}_{s}^{cov}(\gamma)=\sum_{i=\mathrm{L,NL}}P_{s,i}^{cov}(\gamma),
\end{align}
\begin{align}
P_{s,i}^{cov}(\gamma)=
\int_{0}^{\infty}  \exp \left(\frac{-\gamma N_{0}}{P_{s}^{tr}B_s \textcolor{blue}{G_s} A_{\mathrm{i}} r^{-\alpha_{\mathrm{i}}} }\right)  \mathcal{L}_{I_{s\textcolor{blue}{,m}}}^{\mathrm{i}}   F_{s}^{\mathrm{i}}(r)dr,\label{UserSBSSINRCoverageProbabilityNLoS}
\end{align}
of which s denotes the SBS, $i=\{\mathrm{L,NL}\}$ denotes the transmission link. \textcolor{blue}{$G_s=MM$}, $P_{s,i}^{cov}(\gamma)=\mathbb{E}_{r }\left[\mathbb{P}\left[\mathrm{SINR}_{s}^{\mathrm{i}} (r ) \geq \gamma\right]\right]$ is the probability that the user is covered by the SBS, where $s$ denotes SBS,  $\gamma$ is the default threshold for successful demodulation and decoding at the receiver and the Laplace transform of interference from SBS or MBS in LOS link is given as
\begin{small}
	\begin{align}
		\mathcal{L}_{I_{s,m}}^{\mathrm{L}}\left( \gamma r^{\alpha_{\mathrm{L}}} \right)
		&\stackrel{(b)}{=}
		\textcolor{blue}{\prod \limits_{G_i}}\exp \left(-2 \pi  \lambda_{s} \textcolor{blue}{p_{G_i}}\left( \int_{r}^{\infty}  \frac{\mathcal{P}_L(u)u}{1+\frac{P_s^{tr}B_{s}\textcolor{blue}{G_{s}} A_Lr^{-\alpha_{\mathrm{L}}}}{\gamma P_s^{tr}B_{s}\textcolor{blue}{G_{i}}A_Lu^{-\alpha_{\mathrm{L}}}}} d u+\int_{\left(\frac{A^{\mathrm{L}}}{A^{\mathrm{NL}}}\right)^{\frac{-1}{\alpha^{\mathrm{NL}}}} r^{\frac{\alpha_{\mathrm{L}}}{\alpha_{\mathrm{NL}}}}}^{\infty} 
		\frac{\mathcal{P}_{\mathrm{NL}}(u)u}{1+\frac{P_s^{tr}B_{s}\textcolor{blue}{G_{s}}A_{\mathrm{L}}r^{-\alpha_{\mathrm{L}}}}{\gamma  P_s^{tr}B_{s}\textcolor{blue}{G_{i}} A_{\mathrm{NL}}u^{-\alpha_{\mathrm{NL}}}}} d u
		\right)\right) \nonumber\\
		&\times \textcolor{blue}{\prod \limits_{G_l}}\exp \left(-2 \pi  \lambda_{m}\textcolor{blue}{p_{G_l}}\left(\int_{\left(d_1\right)^{\frac{-1}{\alpha_{L}}} r}^{\infty} \frac{\mathcal{P}_{L}(u)u}{1+\frac{P_{s}^{tr}B_{s}\textcolor{blue}{G_{s}}A_L r^{-\alpha_{\mathrm{L}}}}{\gamma P_{m}^{tr} B_{m}\textcolor{blue}{G_{l}} A_L u^{-\alpha_{\mathrm{L}}}}} d u +\int_{\left(d_2\right)^{\frac{-1}{\alpha_{\mathrm{NL}}}} r^\frac{\alpha_{\mathrm{L}}}{\alpha_{\mathrm{NL}}}}^{\infty}  \frac{\mathcal{P}_{\mathrm{NL}}(u)u}{1+\frac{P_{s}^{tr} B_{s}\textcolor{blue}{G_{s}}A_L r^{-\alpha_{\mathrm{L}}}}{\gamma P_{m}^{tr}B_{m}\textcolor{blue}{G_{l}}A_\mathrm{NL} u^{-\alpha_{\mathrm{NL}}}}} d u\right) \right) \nonumber,
	\end{align}
\end{small}
where \textcolor{blue}{$G_i,G_l \in\{MM,Mm,mm\},p_{G_i}(p_{G_l})$ is the probability of the antenna gain taking correspoding value from SBS interference tier and MBS interference tier.} $d_1=\frac{P_s^{tr}B_{s} }{P_m^{tr}B_{m}}$ and $d_2=\frac{P_s^{tr} B_{s}A_\mathrm{L}}{P_m^{tr}B_{m}A_\mathrm{NL}}$. Following the same logic, other Laplace transform of interference have similar expressions.
\textcolor{blue}{
\begin{remark}
	From the Laplace transform of interference, the interference from the SBS tier is independent of the transmit power $P_s^{tr}$ and even the interference from the MBS layer is monotonically decreasing about $P_s^{tr}$, which also reveals that  increasing the transmit power $P_s^{tr}$ is benificial to improve the coverage probability and APT without considering other constraints. Such proof precess can be done by substituting the interference term into the Laplace transform.
\end{remark}}

%
\textcolor{blue}{
Similarly, the SINR coverage probability that the user is associated with the MBS is
\begin{align}\label{UserMBSSINRCoverageProbability}
	\mathbb{P}_{m}^{cov}(\gamma)=\sum_{i=\mathrm{L,NL}}P_{m,i}^{cov}(\gamma),
\end{align}
\begin{align}\label{UserMBSSINRCoverageProbabilityLoS}
	P_{m,i}^{cov}(\gamma)=\int_{0}^{\infty} \exp \left(\frac{-\gamma N_{0}}{P_{m}^{tr} B_m G_m A_{\mathrm{i}} r^{-\alpha_{\mathrm{i}}} }\right)  \mathcal{L}_{I'_{s,m}}^{\mathrm{i}}\left( \gamma r^{\alpha_i}\right) F_{m}^{\mathrm{i}}(r)d r
\end{align}
}


2) SINR distribution of SBS covered by  MBS:

The SINR distribution that the typical SBS is covered by the  MBS via backhaul link is:
\begin{align}\label{BHSINRCoverageProbability}
  \mathbb{P}_{bh}^{cov}(\gamma)   =\sum_{i=\mathrm{L,NL}}P_{bh,\mathrm{i}}^{cov}(\gamma),
\end{align}
\begin{align}\label{BHSINRCoverageProbabilityLoS}
  P_{bh,\mathrm{i}}^{cov}(\gamma)=\int_{0}^{\infty} \exp \left(\frac{-\gamma N_{0}}{P_{m}^{tr}  B_m \textcolor{blue}{G_m} A_{\mathrm{i}} r^{-\alpha_{\mathrm{i}}} }\right)  \mathcal{L}_{I_{bh}}^{\mathrm{i}}\left(\gamma r^{-\alpha_i}\right) F_{bh}^{\mathrm{i}}(r)dr,
\end{align}
where the subscript $bh$ denotes the backhaul, $i$ also represents the transmission link and \textcolor{blue}{$G_m=MM$}. Note that,  considering a more general fading model such as Nakagami does not provide any additional design insights, but it does complicate the analysis signif\mbox{}icantly. Similar to \cite{Rayleigh} in our paper,   the special case of Rayleigh fading is considered .
\end{proposition}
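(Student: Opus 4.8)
The plan is to follow the standard stochastic-geometry coverage-probability recipe under Rayleigh (exponential) fading: condition on the serving-link type and serving distance, turn the tail probability of the fading gain into an expectation of an exponential, factor off the (deterministic) noise term, and evaluate the interference Laplace transforms via the probability generating functional (PGFL) of a PPP. By Slivnyak's theorem the typical user (resp.\ SBS) may be placed at the origin without changing the law of $\Phi_s,\Phi_m$.

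\textbf{Step 1 (Rayleigh trick and factorization).} Starting from $P_{s,i}^{\mathrm{cov}}(\gamma)=\mathbb{E}_r\big[\mathbb{P}[\operatorname{SINR}_s^{\mathrm{i}}(r)\ge\gamma]\big]$, where the density $F_s^{\mathrm{i}}(r)$ of Lemma~\ref{LemmaUserAssociatedSBS} already encodes the event ``served by an SBS via an $\mathrm{i}\in\{\mathrm L,\mathrm{NL}\}$ link at distance $r$'', I substitute \eqref{SBSUSERSINR}. Since $h_s\sim\exp(1)$,
\[
\mathbb{P}[\operatorname{SINR}_s^{\mathrm{i}}(r)\ge\gamma]=\mathbb{P}\Big[h_s\ge\tfrac{\gamma(I_s+I_m+N_0)}{P_s^{tr}B_sG_sA_{\mathrm i}r^{-\alpha_{\mathrm i}}}\Big]=e^{-\gamma N_0/(P_s^{tr}B_sG_sA_{\mathrm i}r^{-\alpha_{\mathrm i}})}\;\mathbb{E}\big[e^{-\zeta I_s}\big]\,\mathbb{E}\big[e^{-\zeta I_m}\big],
\]
with $\zeta=\zeta(r)=\gamma/(P_s^{tr}B_sG_sA_{\mathrm i}r^{-\alpha_{\mathrm i}})$, where the split of the last two expectations uses independence of $\Phi_s$ and $\Phi_m$; their product is exactly $\mathcal{L}_{I_{s,m}}^{\mathrm{i}}$.

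\textbf{Step 2 (PGFL with independent marking).} For the SBS interference I view $\Phi_s\setminus b_{s,0}$ as a PPP carrying two independent marks: the LOS/NLOS indicator (thinning with probabilities $\mathcal{P}_{\mathrm L}(u)$, $\mathcal{P}_{\mathrm{NL}}(u)$ at distance $u$) and the antenna gain $G_i\in\{MM,Mm,mm\}$ (marks with probabilities $p_{G_i}$). Averaging over $h_{s,i}\sim\exp(1)$ first gives $\mathbb{E}_h[1-e^{-\zeta h g}]=\zeta g/(1+\zeta g)$ for an interferer of effective power $g$; the PGFL $\mathbb{E}[\prod_x f(x)]=\exp(-\lambda\int(1-f(x))\,dx)$ with $dx=2\pi u\,du$ then produces, for each gain value, a factor $\exp(-2\pi\lambda_s p_{G_i}(\cdots))$, and multiplying over the three values yields the displayed $\prod_{G_i}$. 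The same computation with $\lambda_m$, $P_m^{tr}$, $B_m$, marks $G_l$ gives the MBS factor. The lower integration limits are the exclusion regions forced by the maximum-biased-received-power rule: if the serving link is LOS at distance $r$, a same-tier LOS interferer must satisfy $u\ge r$, a same-tier NLOS interferer must satisfy $A_{\mathrm{NL}}u^{-\alpha_{\mathrm{NL}}}\le A_{\mathrm L}r^{-\alpha_{\mathrm L}}$, i.e.\ $u\ge (A_{\mathrm L}/A_{\mathrm{NL}})^{-1/\alpha_{\mathrm{NL}}}r^{\alpha_{\mathrm L}/\alpha_{\mathrm{NL}}}$, and for the MBS tier the power/bias ratio rescales these thresholds into $d_1,d_2$. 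Writing $\zeta g$ explicitly gives each $1/(1+\cdots)$ integrand.

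\textbf{Step 3 (Assembly and the other two cases).} Summing $P_{s,i}^{\mathrm{cov}}$ over $i\in\{\mathrm L,\mathrm{NL}\}$ yields $\mathbb{P}_s^{\mathrm{cov}}(\gamma)$ as in \eqref{UserSBSSINRCoverageProbability}. The MBS-user result \eqref{UserMBSSINRCoverageProbability} is obtained by the identical argument after interchanging the roles of the tiers (now all of $\Phi_s$ interferes while $\Phi_m\setminus b_{m,0}$ is the exclusion-thinned tier) and using the density $F_m^{\mathrm{i}}$; the backhaul result \eqref{BHSINRCoverageProbability} keeps only the MBS-interference Laplace factor, since the SBS tier does not transmit on the backhaul band, and uses $F_{bh}^{\mathrm{i}}$. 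Finally I note, as in the statement, that specializing Nakagami to Rayleigh ($h\sim\exp(1)$) is what makes the tail-probability-to-Laplace-transform step exact; a general Nakagami-$m$ would require differentiating the Laplace transform and complicate the expressions without new insight.

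\textbf{Main obstacle.} I expect the delicate part to be the bookkeeping of the exclusion regions: one must separately treat the four combinations (serving link $\in\{\mathrm L,\mathrm{NL}\}$) $\times$ (interferer link $\in\{\mathrm L,\mathrm{NL}\}$) for each interfering tier, and correctly carry the tier-dependent ratios $d_1=P_s^{tr}B_s/(P_m^{tr}B_m)$ and $d_2=P_s^{tr}B_sA_{\mathrm L}/(P_m^{tr}B_mA_{\mathrm{NL}})$ into the lower limits with the right exponents $\alpha_{\mathrm L}/\alpha_{\mathrm{NL}}$ — this is where exponent or inversion errors are easiest to make. A secondary subtlety is justifying that the LOS/NLOS thinning and the antenna-gain marking are mutually independent and independent of the point positions, which is precisely what licenses the factorization of the PGFL into the product over $G_i$ (and over $G_l$).
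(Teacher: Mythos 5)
Your proposal is correct and follows essentially the same route as the paper's proof in Appendix C: the Rayleigh tail-to-Laplace-transform step, the factorization over the independent tiers $\Phi_s$ and $\Phi_m$, the PGFL evaluation with LOS/NLOS thinning and antenna-gain marks yielding the products over $G_i$ and $G_l$, and the exclusion-region lower limits $r$, $(A_{\mathrm L}/A_{\mathrm{NL}})^{-1/\alpha_{\mathrm{NL}}}r^{\alpha_{\mathrm L}/\alpha_{\mathrm{NL}}}$, $d_1^{-1/\alpha_{\mathrm L}}r$, $d_2^{-1/\alpha_{\mathrm{NL}}}r^{\alpha_{\mathrm L}/\alpha_{\mathrm{NL}}}$ inherited from the biased-association probabilities of Lemma~\ref{LemmaUserAssociatedSBS}. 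Your identity $\mathbb{E}_h[1-e^{-\zeta hg}]=\zeta g/(1+\zeta g)$ reproduces exactly the integrands in the paper's expression, and your treatment of the MBS-access and backhaul cases matches the interference structures in \eqref{MBSUSERSINR} and \eqref{M-SBSSINR}.
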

\begin{proof}
  The proof process is provided in Appendix \ref{Appen3:ProUserSINRCoverageProbability}.
\end{proof}
\textcolor{blue}{
\begin{proposition} \label{noise_limited}
	In noise-limited scenario ($\sigma^{2}>>I$), the SINR distribution can be reduced into a closed-form as
	\begin{equation}
		P_{k}^{cov}(\gamma) =1-\exp\left(-\pi \lambda_k A_\mathrm{NL}^{\frac{2}{\alpha_{\mathrm{NL}}}}\Gamma\left(\frac{1}{\alpha_\mathrm{NL}}+1\right)\left(\frac{P_kB_k G_k}{\sigma^{2}\gamma}\right)^{\frac{2}{\alpha_\mathrm{NL}}}-2 \pi \lambda_kY\left(\frac{P_kB_k G_k}{\sigma^{2}\gamma}\right)\right),
	\end{equation}
	where $Y(\xi_0)=\int_{0}^{\xi_0}\frac{A_\mathrm{L}^{\frac{2}{\alpha_\mathrm{L}}-1}}{ \alpha_\mathrm{L} \xi^2}\int_{0}^{\infty}\phi^{\frac{2}{\alpha_\mathrm{L}}}\exp(-\beta\phi^{\frac{1}{\alpha_\mathrm{L}}}-\frac{\phi}{\xi}) d\phi d\xi-\int_{0}^{\xi_0}\frac{A_\mathrm{NL}^{\frac{2}{\alpha_\mathrm{NL}}-1}}{ \alpha_\mathrm{NL} \xi^2}\int_{0}^{\infty}\phi^{\frac{2}{\alpha_\mathrm{NL}}}\exp(-\beta\phi^{\frac{1}{\alpha_\mathrm{NL}}}-\frac{\phi}{\xi}) d\phi d\xi
	$,  $k \in \{m,s\}$ denotes the MBS tier or the SBS tier.
\end{proposition}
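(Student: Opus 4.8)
The plan is to bypass the association-density representation of Proposition~\ref{ProUserSINRCoverageProbability} and evaluate $P_k^{cov}(\gamma)=\Pr[\operatorname{SINR}_k>\gamma]$ directly once the interference terms are discarded. In the regime $\sigma^{2}\gg I$ the received $\operatorname{SINR}$ from any base station degenerates to its pure signal-to-noise ratio $\operatorname{SNR}(r)=P_kB_kG_k\,h\,L(r)/\sigma^{2}$, with $G_k=MM$ the serving main-lobe gain, $h\sim\exp(1)$, and $L(\cdot)$ the random LOS/NLOS path loss of \eqref{LOSModel}. Since the user is attached to the tier-$k$ base station of maximal biased received power $P_kB_kG_k h_x L(r_x)=\sigma^{2}\operatorname{SNR}_x$, the serving base station realises $\max_{x\in\Phi_k}\operatorname{SNR}_x$, so that in the noise-limited regime the tier-$k$ SNR coverage equals the probability that \emph{at least one} base station of $\Phi_k$ clears the threshold:
\[
P_k^{cov}(\gamma)=1-\Pr\!\Big[\,\forall x\in\Phi_k:\ \operatorname{SNR}_x\le\gamma\,\Big].
\]

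Next I would compute the complementary probability via the void-probability property of the PPP $\Phi_k$ (equivalently its PGFL), treating each base station as carrying two independent marks: an $\exp(1)$ fading variable and a LOS/NLOS label with probabilities $\mathcal{P}_\mathrm{L}(r)=e^{-\beta r}$ and $1-\mathcal{P}_\mathrm{L}(r)$. Using $\Pr[h>a]=e^{-a}$, a base station at $y$ contributes SNR above $\gamma$ with mark-averaged probability
\[
g(|y|)=\mathcal{P}_\mathrm{L}(|y|)\,e^{-\gamma\sigma^{2}|y|^{\alpha_\mathrm{L}}/(P_kB_kG_kA_\mathrm{L})}+\mathcal{P}_\mathrm{NL}(|y|)\,e^{-\gamma\sigma^{2}|y|^{\alpha_\mathrm{NL}}/(P_kB_kG_kA_\mathrm{NL})},
\]
so that $\Pr[\forall x\in\Phi_k:\operatorname{SNR}_x\le\gamma]=\exp\!\big(-\lambda_k\int_{\mathbb{R}^{2}}g(|y|)\,dy\big)$. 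Passing to polar coordinates and writing $g(r)=e^{-\gamma\sigma^{2}r^{\alpha_\mathrm{NL}}/(P_kB_kG_kA_\mathrm{NL})}+e^{-\beta r}\big(e^{-\gamma\sigma^{2}r^{\alpha_\mathrm{L}}/(P_kB_kG_kA_\mathrm{L})}-e^{-\gamma\sigma^{2}r^{\alpha_\mathrm{NL}}/(P_kB_kG_kA_\mathrm{NL})}\big)$ splits the exponent into two pieces. The first, $2\pi\lambda_k\int_0^{\infty}r\,e^{-\gamma\sigma^{2}r^{\alpha_\mathrm{NL}}/(P_kB_kG_kA_\mathrm{NL})}\,dr$, becomes a Gamma integral under $t=r^{\alpha_\mathrm{NL}}$ and reproduces the NLOS (first) term of the exponent, $\pi\lambda_kA_\mathrm{NL}^{2/\alpha_\mathrm{NL}}\Gamma(\cdot)(P_kB_kG_k/\sigma^{2}\gamma)^{2/\alpha_\mathrm{NL}}$; the second, $2\pi\lambda_k\int_0^{\infty}r\,e^{-\beta r}\big(e^{-\gamma\sigma^{2}r^{\alpha_\mathrm{L}}/(P_kB_kG_kA_\mathrm{L})}-e^{-\gamma\sigma^{2}r^{\alpha_\mathrm{NL}}/(P_kB_kG_kA_\mathrm{NL})}\big)\,dr$, after the substitution $\phi=r^{\alpha_i}$ and the elementary identity $\int_0^{\xi_0}\xi^{-2}e^{-\phi/\xi}\,d\xi=\phi^{-1}e^{-\phi/\xi_0}$ with $\xi_0=P_kB_kG_k/(\sigma^{2}\gamma)$, is exactly $2\pi\lambda_k Y(\xi_0)$. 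Assembling the two pieces yields the claimed closed form, and the derivation is verbatim for $k=m$ and $k=s$ with the respective $\lambda_k$, $P_k$ and $B_k$.

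The step I expect to be the main obstacle is making the noise-limited passage and the final reduction airtight. One must argue that dropping $I$ is legitimate simultaneously for every base station, not just the serving one — i.e. that all the interference Laplace transforms appearing in Proposition~\ref{ProUserSINRCoverageProbability} tend to $1$ as $\sigma^{2}/I\to\infty$, which is precisely the defining feature of the regime — and that the LOS/NLOS marks of the now-negligible interferers and of the serving link are handled consistently inside the PGFL. One then has to execute the two changes of variables carefully so that the damping exponent lands as $\phi/\xi_0$ and the prefactors collapse to the form defining $Y$; the remaining bookkeeping (polar coordinates, the LOS/NLOS split, the Gamma integral) is routine. As a consistency check, the same expression is recovered by letting the interference Laplace transforms in \eqref{UserSBSSINRCoverageProbabilityNLoS}--\eqref{UserMBSSINRCoverageProbabilityLoS} go to $1$ and simplifying the resulting distance integrals.
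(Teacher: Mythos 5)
Your proposal is correct and is essentially the paper's own argument: both drop the interference, identify coverage with the event that at least one tier-$k$ base station (with its exponential fading mark and LOS/NLOS blockage mark) clears the SNR threshold, and evaluate this as one minus a void probability, yielding the Gamma-function NLOS term and the $Y(\xi_0)$ correction. The only difference is organizational — the paper reaches the same exponent by applying the displacement theorem twice to build the one-dimensional propagation process $\Xi=\{r^{\alpha}/(Ag)\}$ and integrating its intensity over $[0,\xi_0]$, whereas you compute the equivalent mark-averaged PGFL integral directly over $\mathbb{R}^2$; by Campbell's theorem these are the same quantity.
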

\begin{proof}
	The detailed proof process of  Proposition\ref{noise_limited} is provided in the Appendix.
\end{proof}}

%

\subsection{ APT of Cache-enable mABHetNet}
APT is  a signif\mbox{}icant metric to measure the heterogeneous network performance, which mainly focuses on the average user QoS requirement in terms of data rate. Next, we will derive the APT expressions based on the above analysis.
\subsubsection{Def\mbox{}inition}
APT captures the average number of bits that can be received by user per unit area per unit bandwidth for a given pre-designated threshold $\gamma_0$ \cite{Throuhgput1}. The def\mbox{}inition of APT is 
\begin{equation}
   \mathcal{R}\left(  \gamma_0\right) =\lambda_k W\log _{2}\left(1+\gamma\right) \mathbb{P}\left\{\operatorname{SINR} \geq \gamma_0\right\},
\end{equation}
where $\lambda_k,k\in \{m,s\}$ is the density of MBS or SBS and $W$ is the  allocated spectrum. $\gamma_0$ is SINR threshold for the signal demodulation and represents the receiver's communication requirement.

\subsubsection{The detailed APT expression}
 APT of mABHetNet depends on both the backhaul link and the access link, then we will give the detailed expression  in the following analysis.

For the user associated with the SBS, the transmission link includes the backhaul link bewteen the MBS and SBS and the access link between the SBS and user. Besides, in this cache-enabled mABHetNet, the cache in SBS also influence the f\mbox{}ile delivery in the transmission path. When the requested f\mbox{}iles  are cached at the SBS,  the f\mbox{}iles  can be delivered to user directly without transmitting through the wireless backhaul link. In other words, the wireless backhaul link will not be occupied under cache-hit circumstance. Besides, if the f\mbox{}iles requested by user are not cached at the SBS, the missing f\mbox{}iles need to be delivered from the MBS, which increases the wireless backhaul traff\mbox{}ic. In this case, the user's APT depends on the minimum throughput of the backhaul link and the access link. \textcolor{blue}{For the user associated with MBS, APT depends on the access link.} Given that the partial f\mbox{}iles are  cached in SBS,   we  give the  APT expression of mABHetNet in the following corollary.
\begin{corollary}\label{corolAPTSBS}
Since the transmission  can be LOS or NLOS in wireless access  and  backhaul link for user associated with SBS tier \textcolor{blue}{or MBS tier, we can get the APT of SBS tier, MBS tier and mABHetNet respectively. First, we give the APT expression of SBS tier.} 
  \begin{align} \label{APTSBS}
  	&\mathcal{R}_{s}(\eta,C,P_{\mathrm{s}}^{\mathrm{tr}})  = \min \{ (1-p_h(C))\lambda_s\eta W\log_2(1+\gamma_0)P^{cov}_{s}(P_{\mathrm{s}}^{\mathrm{tr}}),\nonumber   \\ 
  	&(1-p_h(C)) \lambda_m W(1-\eta)\log_2(1+\gamma_0)\textcolor{blue}{P^{cov}_{bh}}(P_{\mathrm{m}}^{\mathrm{tr}})\}+ p_h(C)\lambda_s\eta W\log_2(1+\gamma_0)P^{cov}_{s}(P_{\mathrm{s}}^{\mathrm{tr}}),
  \end{align}
where \textcolor{blue}{$P^{cov}_{s}(P_{\mathrm{s}}^{\mathrm{tr}})=\sum_{i=L,NL}P^{cov}_{s,i}(P_{\mathrm{s}}^{\mathrm{tr}})$,
	$P^{cov}_{bh}(P_{\mathrm{m}}^{\mathrm{tr}})=\sum_{i=L,NL}P^{cov}_{bh,i}(P_{\mathrm{m}}^{\mathrm{tr}})$  denotes the SINR coverage probability of wireless access link for user associated with SBS tier, wireless backhaul link for SBS associated with MBS, respectively.}  The Symbol $\min\{\}$ means that the minimum value of throughput in the wireless access link \textcolor{blue}{of SBS tier} and throughput in the wireless backhaul link.
\textcolor{blue}{The APT of MBS tier is given as follows.}
\begin{align}
\textcolor{blue}{\mathcal{R}_{m}(\eta)  = \lambda_m\eta W\log_2(1+\gamma_0)P^{cov}_{m}(P_{\mathrm{m}}^{\mathrm{tr}})}	
\end{align}
 \textcolor{blue}{where $P^{cov}_{m}(P_{\mathrm{m}}^{\mathrm{tr}})=\sum_{i=L,NL}P^{cov}_{m,i}(P_{\mathrm{m}}^{\mathrm{tr}})$ is the SINR coverage probability of wireless access link for user associated with MBS tier. Then, APT of mABHetNets can be obtained as follows.
 \begin{align} \label{APTHetNets}
 	&\mathcal{R}(\eta,C,P_{\mathrm{s}}^{\mathrm{tr}})  =\mathcal{R}_{s}(\eta,C,P_{\mathrm{s}}^{\mathrm{tr}}) +\mathcal{R}_{m}(\eta) 
\end{align}}
Note that, \textcolor{blue}{$C$ is the caching capacity of SBS} and $p_h = p_h(C) =\frac{\sum_{f=1}^{C}f^{-\gamma_p}}{\sum_{g=1}^{F}g^{-\gamma_p}}$ is the cache hit ratio in the SBS tier. $(1-p_h)$ reflects the probability   that the f\mbox{}iles  are not cached in SBS tier and need to be delivered through the backhaul link.  $P^{cov}_{s,\mathrm{L}}$ and $P^{cov}_{k,\mathrm{NL}},k \in \{s,m,bh\}$ is the SINR distribution of wireless LOS link and NLOS link in Proposition \ref{ProUserSINRCoverageProbability}.
\end{corollary}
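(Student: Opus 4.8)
The plan is to obtain each APT term by applying the APT definition $\mathcal{R}(\gamma_0)=\lambda_k W\log_2(1+\gamma_0)\,\mathbb{P}\{\mathrm{SINR}\ge\gamma_0\}$ link by link, and then to stitch the links together according to the caching status of the requested file at the serving SBS. First I would fix the typical user associated with the SBS tier and condition on whether its requested file is cached at that SBS. By the highest-popularity-first policy and the Zipf model, this event has probability $p_h(C)$ from \eqref{HitRatio}, independent of the geometry and fading. On the cache-hit event the file is served locally over the access link only, so the delivered potential throughput is that of the SBS access link, namely $\lambda_s\eta W\log_2(1+\gamma_0)P^{cov}_s(P_s^{\mathrm{tr}})$ with $P^{cov}_s(P_s^{\mathrm{tr}})=\sum_{i=\mathrm{L},\mathrm{NL}}P^{cov}_{s,i}(P_s^{\mathrm{tr}})$ read off from Proposition \ref{ProUserSINRCoverageProbability}, where the access bandwidth is the DSA share $\eta W$ and the SBS transmit power is the effective value $P_s^{\mathrm{tr}}=P_s'-\omega_{ca}^{'s}C$ from the power-consumption model.

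On the cache-miss event the file must first traverse the wireless backhaul link MBS$\to$SBS (bandwidth $(1-\eta)W$, MBS transmit power $P_m^{\mathrm{tr}}$, coverage $P^{cov}_{bh}$) and then the access link SBS$\to$user (bandwidth $\eta W$, coverage $P^{cov}_s$). For a two-hop store-and-forward relay the end-to-end deliverable rate is the bottleneck of the two hop rates, which produces the $\min\{\cdot,\cdot\}$ structure; normalizing the backhaul hop by the MBS density $\lambda_m$ and the access hop by the SBS density $\lambda_s$ gives the two arguments of the minimum in \eqref{APTSBS}. Weighting the hit term by $p_h(C)$ and the miss term by $1-p_h(C)$ and summing — using that $(1-p_h(C))\min\{A,B\}=\min\{(1-p_h(C))A,(1-p_h(C))B\}$ — yields $\mathcal{R}_s(\eta,C,P_s^{\mathrm{tr}})$. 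For a user associated with the MBS tier, the MBS is full-cache, so no backhaul hop is ever needed and only the access link matters; the APT definition with density $\lambda_m$, bandwidth $\eta W$ and coverage $P^{cov}_m(P_m^{\mathrm{tr}})=\sum_{i=\mathrm{L},\mathrm{NL}}P^{cov}_{m,i}(P_m^{\mathrm{tr}})$ gives $\mathcal{R}_m(\eta)$ directly. Finally, since a typical user is associated with exactly one tier and the access band is spatially reused by both tiers, the per-tier potential throughputs add, so $\mathcal{R}(\eta,C,P_s^{\mathrm{tr}})=\mathcal{R}_s(\eta,C,P_s^{\mathrm{tr}})+\mathcal{R}_m(\eta)$.

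The hard part will not be any computation — all SINR distributions are already supplied by Proposition \ref{ProUserSINRCoverageProbability} — but the careful justification of the $\min\{\cdot,\cdot\}$ term and its bookkeeping: one must argue that under DSA the access and backhaul links lie on orthogonal bands so the two-hop accounting does not double-count interference, that the backhaul bottleneck is correctly scaled by $\lambda_m$ (one MBS backhaul link feeds the SBS cluster) while the access bottleneck is scaled by $\lambda_s$, and that only the cache-miss fraction $1-p_h(C)$ loads the backhaul whereas the entire request stream (both the $p_h(C)$ and the $1-p_h(C)$ parts) loads the access link — which is precisely why $P^{cov}_s$ appears both in the hit term and inside the minimum. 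Once these modeling points are fixed, substituting the LOS/NLOS decompositions from Proposition \ref{ProUserSINRCoverageProbability} and collecting terms completes the proof.
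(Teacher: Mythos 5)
Your proposal is correct and matches the paper's own justification: the paper likewise derives the corollary by conditioning on the cache hit/miss event with probability $p_h(C)$, applying the APT definition to each link, taking the bottleneck $\min\{\cdot,\cdot\}$ of the backhaul and access throughputs on a miss, treating the MBS-tier user as access-only, and summing the two tiers. The modeling caveats you flag (orthogonal bands under DSA, density scaling of each hop, the access link carrying both the hit and miss traffic) are exactly the points the paper handles informally in the paragraph preceding the corollary, so no gap remains.
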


\section{Problem Formulation and Solution}
In this section, we propose an  APT maximization problem with multivariate function and complicated integral component. Inspired by BCD method, we propose a two-step approach to solve it alternately and decompose the joint optimization problem into two sub-problems  to obtain the optimal spectrum partition coef\mbox{}f\mbox{}icient, power allocation as well as cache decision, respectively. Therefore, we can solve these two small-scale sub-problems iteratively.
\subsection{APT maximization problem}
%
\textcolor{blue}{Based on the above theoretical analysis with stochastic geometry, in order to further explore the potential performance superiority of the cache-enabled mABHetNets, we formulate a joint optimazation problem of cache, transmit power and spectrum partition to maximize the APT.}
\begin{subequations}
\begin{align}
  {\mathcal{P}:} \mathop{\max }\limits_{C,\eta,P_{\mathrm{s}}^{\mathrm{tr}}}\text{ }\min &\{(1-{{p}_{h}}\text{)}{{\lambda }_{m}}(1-\eta)W{{\log }_{2}}(1+{{\gamma }_{0}})P_{\textcolor{blue}{bh}}^{\operatorname{cov}}(P_{\mathrm{m}}^{\mathrm{tr}}),
   (1-{{p}_{h}}\text{)}{{\lambda }_{s}}\eta W{{\log }_{2}}(1+{{\gamma }_{0}})P_{s}^{\operatorname{cov}}(P_{\mathrm{s}}^{\mathrm{tr}})\}\nonumber \\
  &+{{p}_{h}}{{\lambda }_{s}}\eta W{{\log }_{2}}(1+{{\gamma}_{0}})P_{s}^{\operatorname{cov}}(P_{\mathrm{s}}^{\mathrm{tr}})\textcolor{blue}{+{{\lambda }_{m}}\eta W{{\log }_{2}}(1+{{\gamma}_{0}})P_{m}^{\operatorname{cov}}(P_{\mathrm{m}}^{\mathrm{tr}})} \label{Pa} \\
  s.t.\ &\rho_s P_{\mathrm{s}}^{\mathrm{tr}}+P_{\mathrm{s}}^{\mathrm{fc}}+\omega_{\mathrm{ca}}\textcolor{blue}{s}C \le P_{s}^{\max }, \label{Pb}\\ 
  &P_{s}^{tr}\ge 0, \label{Pc} \\
  &C\textcolor{blue}{\in}\{0,1,2,...,{C}_{\mathrm{max}}\}, \label{Pd} \\
  &\eta \in [0,1].  \label{Pe}
\end{align}
\end{subequations}
Constraint (\ref{Pb}) guarantees that the power consumed by the SBS will not exceed the maximum power constraint $P_{s}^{\max }$. Constraint (\ref{Pc}) ensures that the transmitted power is non-negative.
Constraint (\ref{Pd}) denotes that the number of the f\mbox{}iles cached at the SBS is a discrete variable form $0$ to ${C}_{max}$, where ${C}_{max}$ is the maximum caching capacity (file units) of the SBS.
Constraint (\ref{Pe}) determines that spectrum partition coef\mbox{}f\mbox{}icient is a continuous variable from $0$ to $1$.

From the above expression, we can easily get that the proposed  problem $P$ is a max-min and a mixed-integer nonlinear programming (MINLP) problem, which is non-linenar and non-convex. To simplify the problem and facilitate the solution, we introduce an auxiliary variable $Y$ to convert the problem $P$ into a more tractable form as $\mathcal{P}1$.
\begin{small}
\begin{subequations}
\begin{align}
  &{\mathcal{P}1:} \mathop{\max }\limits_{C,\eta,P_{s}^{tr} }  \quad Y  \label{P1a}\\
  s.t.\  &\text{(1-}{{p}_{h}}\text{)}{{\lambda }_{m}}(1-\eta )W{{\log }_{2}}(1+{{\gamma }_{0}})P_{\textcolor{blue}{bh}}^{\operatorname{cov}}(P_{\mathrm{m}}^{\mathrm{tr}})+{{p}_{h}}{{\lambda }_{s}}\eta W{{\log }_{2}}(1+{{\gamma }_{0}})P_{s}^{\operatorname{cov}}(P_{\mathrm{s}}^{\mathrm{tr}})\textcolor{blue}{+{{\lambda }_{m}}\eta W{{\log }_{2}}(1+{{\gamma}_{0}})P_{m}^{\operatorname{cov}}(P_{\mathrm{m}}^{\mathrm{tr}})}\ge Y,  \label{P1b}\\
  &\text{(1-}{{p}_{h}}\text{)}{{\lambda }_{s}}\eta W{{\log }_{2}}(1+{{\gamma }_{0}})P_{s}^{\operatorname{cov}}(P_{\mathrm{s}}^{\mathrm{tr}})+{{p}_{h}}{{\lambda }_{s}}\eta W{{\log }_{2}}(1+{{\gamma }_{0}})P_{s}^{\operatorname{cov}}(P_{\mathrm{s}}^{\mathrm{tr}})\textcolor{blue}{+{{\lambda }_{m}}\eta W{{\log }_{2}}(1+{{\gamma}_{0}})P_{m}^{\operatorname{cov}}(P_{\mathrm{m}}^{\mathrm{tr}})}\ge Y,  \label{P1c}\\
  &(\ref{Pb}),(\ref{Pc}),(\ref{Pd}),(\ref{Pe}). \notag
\end{align}
\end{subequations}
\end{small}
\begin{proposition}\label{proposition3}
   APT in cache-enabled millimeter wave HetNets will be maximized when the SBS consumes the maximum power, i.e.,satisfy
  \begin{align}
    \rho_s P_{s}^{\mathrm{tr}}+P_{\mathrm{s}}^{\mathrm{fc}}+\omega_{\mathrm{ca}}\textcolor{blue}{s}C = P_{s}^{\max }.
  \end{align}

  \begin{proof}
    When the power consumption satisfies $\rho_s P_{s}^{\mathrm{tr}}+P_{\mathrm{s}}^{\mathrm{fc}}+\omega_{\mathrm{ca}}\textcolor{blue}{s}C < P_{s}^{\max }$, the SBS can increase its caching power to $P_{s}^{\max}-\rho_s P_{s}^{\mathrm{tr}}-P_{\mathrm{s}}^{\mathrm{fc}}$ for a higher cache hit ratio , so intruducing the caching without reducing the transmit power can increase the APT. Based on proof by contradiction, we can get that the SBS comsumes the maximum power when  APT is maximized. Thus transmit power at the SBS can be denoted as $P_s^{tr}=P_s'- \omega_{ca}^{'s}C$
  \end{proof}
\end{proposition}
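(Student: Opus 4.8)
The plan is to prove the statement by a monotonicity-plus-contradiction argument, the point being that raising $P_{\mathrm{s}}^{\mathrm{tr}}$ (holding $C$ and $\eta$ fixed) can only help the objective, so any feasible point with strictly slack power budget can be improved, or at least matched, by the point that saturates the budget. First I would isolate the dependence of the objective of $\mathcal{P}$ (equivalently $\mathcal{P}1$) on $P_{\mathrm{s}}^{\mathrm{tr}}$. From (\ref{APTSBS}) and (\ref{APTHetNets}), with $\eta$ and $C$ fixed the only quantity varying with $P_{\mathrm{s}}^{\mathrm{tr}}$ is $P^{cov}_{s}(P_{\mathrm{s}}^{\mathrm{tr}})=\sum_{i=\mathrm{L,NL}}P^{cov}_{s,i}(P_{\mathrm{s}}^{\mathrm{tr}})$; the backhaul coverage $P^{cov}_{bh}$ and the MBS-tier term $\mathcal{R}_m$ depend only on $P_{\mathrm{m}}^{\mathrm{tr}}$, which is untouched. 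So the task reduces to showing $\mathcal{R}(\eta,C,\cdot)$ is non-decreasing on the feasible interval $[0,(P_s^{\max}-P_s^{\mathrm{fc}}-\omega_{\mathrm{ca}}sC)/\rho_s]$, where the right endpoint is nonnegative exactly when $C$ is feasible for (\ref{Pb}).

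Second, I would establish that $P^{cov}_{s}$ is non-decreasing in $P_{\mathrm{s}}^{\mathrm{tr}}$, which is essentially the content of the Remark following Proposition~\ref{ProUserSINRCoverageProbability}. In the integrand of (\ref{UserSBSSINRCoverageProbabilityNLoS}) the noise factor $\exp\!\big(-\gamma N_0/(P_s^{\mathrm{tr}}B_sG_sA_ir^{-\alpha_i})\big)$ is increasing in $P_s^{\mathrm{tr}}$; in the Laplace transform $\mathcal{L}_{I_{s,m}}^{\mathrm{i}}$ the ratios inside the SBS-interference integrals have $P_s^{\mathrm{tr}}$ cancelling between numerator and denominator, so those contributions are constant in $P_s^{\mathrm{tr}}$, while in the MBS-interference integrals $P_s^{\mathrm{tr}}$ sits only in the numerator of the ratio, so those integrands decrease and hence the whole exponential factor increases. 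Since $F_s^{\mathrm{i}}(r)$ is built from path-loss-ratio thresholds, any residual dependence through the cross-tier power ratio must be checked, but the conclusion is that each $P^{cov}_{s,i}$ and therefore $P^{cov}_{s}$ is non-decreasing in $P_s^{\mathrm{tr}}$.

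Third, I would push this through the objective and close with a contradiction. Writing the SBS-tier term as $\mathcal{R}_s=(1-p_h(C))\min\{X_s(P_s^{\mathrm{tr}}),X_{bh}\}+p_h(C)X_s(P_s^{\mathrm{tr}})$ with $X_s:=\lambda_s\eta W\log_2(1+\gamma_0)P^{cov}_{s}$ non-decreasing and $X_{bh}$ constant in $P_s^{\mathrm{tr}}$, note that the pointwise minimum of a non-decreasing function and a constant is non-decreasing, and a nonnegative combination of non-decreasing functions is non-decreasing; adding the $P_s^{\mathrm{tr}}$-independent $\mathcal{R}_m$ keeps this. So $\mathcal{R}(\eta,C,\cdot)$ is non-decreasing. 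Now suppose $(C^\star,\eta^\star,P_s^{\mathrm{tr}\star})$ is optimal with $\rho_s P_s^{\mathrm{tr}\star}+P_s^{\mathrm{fc}}+\omega_{\mathrm{ca}}sC^\star<P_s^{\max}$. Put $\widetilde P_s^{\mathrm{tr}}:=(P_s^{\max}-P_s^{\mathrm{fc}}-\omega_{\mathrm{ca}}sC^\star)/\rho_s$; feasibility of $C^\star$ gives $\widetilde P_s^{\mathrm{tr}}\ge 0$, and $\widetilde P_s^{\mathrm{tr}}>P_s^{\mathrm{tr}\star}$, and $(C^\star,\eta^\star,\widetilde P_s^{\mathrm{tr}})$ still satisfies (\ref{Pb})--(\ref{Pe}). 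By the monotonicity just shown, $\mathcal{R}(\eta^\star,C^\star,\widetilde P_s^{\mathrm{tr}})\ge\mathcal{R}(\eta^\star,C^\star,P_s^{\mathrm{tr}\star})$, so this point is also optimal and saturates the power budget. Hence the maximum is attained on the face $\rho_s P_s^{\mathrm{tr}}+P_s^{\mathrm{fc}}+\omega_{\mathrm{ca}}sC=P_s^{\max}$; solving for $P_s^{\mathrm{tr}}$ yields $P_s^{\mathrm{tr}}=P_s'-\omega_{\mathrm{ca}}^{'s}C$, which is the claim and allows $P_s^{\mathrm{tr}}$ to be eliminated from $\mathcal{P}$.

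The hard part will be making the monotonicity of $P^{cov}_{s}$ in $P_s^{\mathrm{tr}}$ genuinely rigorous rather than merely citing the Remark: the association factors $F_s^{\mathrm{i}}(r)$ and the lower limits of the interference integrals in $\mathcal{L}_{I_{s,m}}^{\mathrm{i}}$ involve the cross-tier ratio $P_s^{\mathrm{tr}}/P_m^{\mathrm{tr}}$, so one has to argue that enlarging the SBS association region as $P_s^{\mathrm{tr}}$ grows does not offset the SINR improvement. This is where the bias factors and the strongest-cell association rule from Lemma~\ref{LemmaUserAssociatedSBS} must be used carefully; an alternative, if that monotonicity proves delicate, is to keep $P_s^{\mathrm{tr}}$ fixed and instead increase the (discrete) cache size $C$ into the slack of (\ref{Pb}), using that $\mathcal{R}_s=(1-p_h(C))\min\{X_s,X_{bh}\}+p_h(C)X_s$ is non-decreasing in $p_h(C)$, and then argue the remaining slack is smaller than one file's caching cost.
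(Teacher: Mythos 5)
Your proposal reaches the right conclusion, but by a genuinely different exchange argument than the paper's. The paper fills the power slack by \emph{increasing the caching power} while holding $P_s^{\mathrm{tr}}$ fixed: a larger $C$ raises $p_h(C)$, and since $\mathcal{R}_s=(1-p_h)\min\{X_s,X_{bh}\}+p_h X_s$ with $X_s\ge\min\{X_s,X_{bh}\}$ and all interference terms unchanged when $P_s^{\mathrm{tr}}$ is untouched, APT weakly increases --- this is exactly the ``alternative'' you sketch in your closing sentence. You instead fill the slack by \emph{increasing the transmit power} while holding $C$ fixed, which rests on the monotonicity of $P_s^{cov}$ in $P_s^{\mathrm{tr}}$. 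The paper only asserts that monotonicity in a Remark and never proves it; as you yourself note, the association factors $F_s^{\mathrm{i}}(r)$ and the integration limits in $\mathcal{L}_{I_{s,m}}^{\mathrm{i}}$ move with the ratio $P_s^{\mathrm{tr}}/P_m^{\mathrm{tr}}$, so your route leans on a nontrivial lemma where the paper's needs only the elementary $\min$ inequality. In exchange, your route closes the budget constraint exactly because $P_s^{\mathrm{tr}}$ is continuous, whereas the paper's perturbation cannot when the residual slack is below one file's caching cost $\omega_{ca}s$ or when $C=C_{\max}$ --- a discreteness issue you correctly flag and the paper silently ignores.

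One concrete gap in your route as written: the claim that $\mathcal{R}_m$ ``depends only on $P_m^{\mathrm{tr}}$'' is not supported by the paper's own model. In (\ref{MBSUSERSINR}) the SINR of an MBS-associated user contains SBS-tier interference proportional to $P_s^{\mathrm{tr}}$, and the MBS association probabilities $p_{ll}^{ms},p_{ln}^{ms},\dots$ in Lemma \ref{LemmaUserAssociatedSBS} involve $P_s^{\mathrm{tr}}/P_m^{\mathrm{tr}}$, so $P_m^{cov}$ decreases as $P_s^{\mathrm{tr}}$ grows and the sign of the net change in $\mathcal{R}_s+\mathcal{R}_m$ is no longer immediate. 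The paper sidesteps this by writing $P_m^{cov}(P_m^{\mathrm{tr}})$ and treating it as a constant in $\mathcal{P}2$--$\mathcal{P}3$, and its own proof avoids the issue entirely because perturbing $C$ at fixed $P_s^{\mathrm{tr}}$ leaves every interference term and association probability unchanged. To make your argument airtight you must either adopt that simplification explicitly or bound the loss in $\mathcal{R}_m$ against the gain in $\mathcal{R}_s$; otherwise the cleaner fix is to promote your ``alternative'' to the main argument, which is what the paper does.
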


\begin{remark}
	From the above analysis, increasing the caching capacity leads to the spectrum transfer and increasing the APT. However, from constraint (\ref{Pb}), we can see that APT is not a monotonically increasing function of $C$. Transmit power and caching capacity are coupled variables. This reveals that increasing the caching capacity is adverse to improving the APT when  APT is limited by the transmit power.  
\end{remark}

\subsection{Sub-problem: Spectrum Partition Problem}
In this subsection, we will give the solution of the spectrum partition problem when the cache decision and tramsmit power is given. To simplify the notation, we stipulate that $A_{1}=\lambda_m W \operatorname{\log}_{2}(1+\gamma_{0}), A_{2}=\lambda_s W {\log}_{2}(1+\gamma_{0}), A_{3}=\lambda_m W \operatorname{\log}_{2}(1+\gamma_{0}) $. 
Then the original problem can be transformed to the spectrum partition problem as $\mathcal{P}2$.
\begin{subequations}
\begin{align}
	{\mathcal{P}2:}\mathop{max} \limits_{Y,\eta}  \quad &Y \label{P2} \\
	s.t.\  &A_1(1-p_h)(1-\eta)P_{bh}^{\operatorname{cov}}+A_{2} p_h P_{s}^{\operatorname{cov}} \eta\textcolor{blue}{+A_{3} \eta P_{m}^{\operatorname{cov}}} \ge Y,  \label{P2a}\\
	&A_{2} \eta P_{s}^{\operatorname{cov}}\textcolor{blue}{+A_{3} \eta P_{m}^{\operatorname{cov}}} \ge Y, \label{P2b}  \\
	&(\ref{Pb}),(\ref{Pc}),(\ref{Pd}),(\ref{Pe}).  \notag
\end{align}
\end{subequations}
of which $P_{k}^{\operatorname{cov}},k=\{s,  m, bh \}$  is an integral term containing the transmit power of the SBS or MBS. It's obvious that this sub-problem becomes a linear constraint programming problem of variable $Y$ and $\eta$, of which $Y$ is the corresponding APT and $\eta$ is the solution of spectrum partition coef\mbox{}f\mbox{}icient. We can solve this problem by interior point method to get the  $\eta^{*}$ and the  corresponding $Y^{*}$ directly. \textcolor{blue}{The spectrum partition problem is based on interior-point method to obtain the optimal spectrum partition coefficienct (one variable) with complexity $\mathcal{O}(1)$.}
\subsection{Sub-problem: Cache Decision and Power Allocation Problem}
In this subsection, we will optimize the cache decision $C$ and power allocation $P_{s}^{tr}$ problem based on the given spectrum partition coef\mbox{}f\mbox{}icient $\eta$, which is still a discrete, nonlinear and non-convex optimization problem. Based on an genetic algorithm, we propose a genetic-based cache decision and power allocation algorithm  which is elaborated in Algorithm \ref{alg:algorithm1} to give the approximate optimal solution in feasible time. Therefore, the cache decision and power allocation problem is reformulated  as $\mathcal{P}3$.
\begin{subequations}
\begin{align}
 {\mathcal{P}3:}\mathop{max} \limits_{Y,C,P_{s}^{tr}}  \quad &Y \label{P3a} \\
  s.t.\  &A_1(1-p_h)(1-\eta)P_{bh}^{\operatorname{cov}}+A_{2} p_h \eta P_{s}^{\operatorname{cov}}(P_s^{tr})\textcolor{blue}{+A_{3} \eta P_{m}^{\operatorname{cov}}} \ge Y,  \label{P3b}\\
  &A_{2} \eta P_{s}^{\operatorname{cov}}(P_s^{tr})\textcolor{blue}{+A_{3} \eta P_{m}^{\operatorname{cov}}} \ge Y,   \label{P3c}  \\
  &(\ref{Pb}),(\ref{Pc}),(\ref{Pd}),(\ref{Pe}).  \notag
\end{align}
\end{subequations}

In order to simplify the problem even further and from Proposition \ref{proposition3}, we stipulate that $f_{1}(C)=A_1(1-p_h(C))(1-\eta)P_{bh}^{\operatorname{cov}}+A_{2} p_h(C) \eta P_{s}^{\operatorname{cov}}(P_s'- \omega_{ca}^{'s}C)+A_{3} \eta P_{m}^{\operatorname{cov}}$,  $f_{2}(C)=A_{2} \eta P_{s}^{\operatorname{cov}}(P_s'- \omega_{ca}^{'s}C)+A_{3} \eta P_{m}^{\operatorname{cov}}$, then we optimize the minimum value of these two constraint functions according to an Genetic-based Cache Decision and Power Allocation (GCDPA) algorithm. \textcolor{blue}{This  algorithm is based on heuristic algorithm to make cache decision designed for SBS. The worst case for cache decision algorithm at each SBS is to traverse all feasible solutions with complexity $\mathcal{O}(C_\text{max})$. }
\begin{algorithm}[t]
	\caption{Genetic-based Cache Decision and Power Allocation Algorithm(GCDPA)}
	\label{alg:algorithm1}
	\KwIn{Initial $\mathcal{C}=\mathcal{C}_1$;  Spectrum partition ratio $\eta$ from f\mbox{}irst sub-problem; \\
		Generation size: $\mathcal{S}^{g}$; Chromosome size: $\mathcal{S}^{c}$; Population size: $\mathcal{S}^{p}$; Crossover rate: $\mathcal{\rho}_{c}$; Mutation rate: $\mathcal{\rho}_{m}$; Fitness functions $f(\mathcal{C})=min\{f_{1}(\mathcal{C}),f_{2}(\mathcal{C})\}$; }
	\KwOut{Best individual $\mathcal{C}^{*}$; Best f\mbox{}itness value $f^{*}(\mathcal{C})$;}
	Optimize f\mbox{}itness function;
	\BlankLine

		$\mathcal{C}^{1}$=Initialization($\mathcal{S}^{c}$,$\mathcal{S}^{p}$);	
		
		\For{$k=1;k<\mathcal{S}^{g};k++$}{
			$\mathcal{C}^{*}$, $f^{*}(\mathcal{C})$ = Fieness Function($\mathcal{C}^{k}$);\\
			${\mathcal{C}^{k}}$ = Selection($\mathcal{C}^{k},f^{*}(\mathcal{C})$);\\
			${\mathcal{C}^{k}}$ = Crossover($\mathcal{C}^{k},\mathcal{\rho}_{c}$);\\
			${\mathcal{C}^{k}}$ = Mutation($\mathcal{C}^{k},\mathcal{\rho}_{m}$);\\
			%
			%
			
			Return $\mathcal{C}^{*}, f^{*}(\mathcal{C}) $;		
	}

	Output $\mathcal{C}^{*},f^{*}(\mathcal{C})$;\\
		
\end{algorithm}

\subsection{Joint optimization solution}
In view of the fact that this joint optimization problem is a non-convex and multivariate problem, we propose an alternating optimization algorithm to optimize spectrum partition and cache decision, which is described   in Algorithm \ref{alg:algorithm2}. The idea of iterative optimization is inspired by the BCD method proposed in \cite{BCDproof}. BCD is a method for optimizing multivariate function even if it is neither necessarily strictly convex nor differentiable \cite{BCDdiff}. In each iteration, it takes the extremum  along the direction of multiple coordinate axes (variables), and one of the coordinates  or the coordinate blocks is f\mbox{}ixed to optimize another coordinate or coordinate block, then the new result is immediately substituted into the next iteration. Particularly, the block coordinate iteration method will converge in a f\mbox{}inite number of times as long as the two variables have no strong correlation. \textcolor{blue}{The JCSPA algortithm is based on BCD method to optimize subproblem of spectrum partition and subproblem of cache decision and power allocation alternately, so the worst case is to iterate to the maximum iteration times $Iter_{\text{max}}$. The complexity of entire JCSPA algorithm  is upper bound by $\mathcal{O}(C_{\text{max}}*Iter_{\text{max}})$.} We further illustrate  the optimality of Algorithm \ref{alg:algorithm2} in the following theoreom.
\begin{algorithm}[t]
	\caption{Joint Cache Decision, Spectrum Partition and Power Allocation Altertive Optimization Algorithm(JCSPA) }
	\label{alg:algorithm2}
	\KwIn{Parameter of simulation scenario; Maximum number of iterations $Itermax$; $\epsilon$}
	\KwOut{Optimal cache decision $\mathcal{C}^{*}$ and spectrum partition $\mathcal{\eta}^{*}$.}
	\BlankLine
	\textbf{Initial procedure}:\\
	Set $t=0$;\\
	Strating point $\mathcal{C}^{(0)}$;\\
	$APT_{1}^{(0)}=0$, $APT_{2}^{(0)}=0$;

	\textbf{end Initial procedure};

	\Repeat{$t = Itermax$ or $||APT_{2}^{(t)}-APT_{1}^{(t)}||\le\epsilon$ }{
		Set $t = t+1$;\\
		Update $\eta^{(t)}$ for fixed $\mathcal{C}^{(t-1)}$ according to (\ref{P2}) and send it to next sub-problem,
		then calculate the corresponding $APT_{1}^{(t)}$ according to (\ref{Pa}); \\	
		Solve sub-problem of cache decision for given $\eta^{(t)}$ based on Algorithm \ref{alg:algorithm1}  to get $\mathcal{C}^{(t)}$, 
		then calculate the corresponding $APT_{2}^{(t)}$ according to (\ref{Pa}), and send $\mathcal{C}^{(t)}$ to the $(t+1)$th iteration; 
		
	}
	
	%
	%
	
	Compute $P_s^{tr*}$ according to Proposition \ref{proposition3};
	
	\textbf{Return} $\mathcal{C}^{*}$,$\mathcal{\eta}^{*}$
\end{algorithm}

\begin{theorem}\label{BCDconvergence}
	Algorithm \ref{alg:algorithm2} based on BCD method will converge in f\mbox{}inite number of iterations and  the convergent output $\eta^{*},C^{*}$ is the \textcolor{blue}{approach} to the optimal solution of the original problem P.
\end{theorem}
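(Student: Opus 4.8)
The plan is to prove convergence by a monotone‑boundedness argument on the sequence of APT values, and then to characterize the limit point as a coordinatewise (partial) optimum of $\mathcal{P}$, which is the precise sense in which Algorithm~\ref{alg:algorithm2} ``approaches'' the global optimum.

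First I would fix the power variable. By Proposition~\ref{proposition3} the optimal transmit power is pinned to $P_s^{\mathrm{tr}}=P_s'-\omega_{ca}^{'s}C$, so the only free blocks are the spectrum coefficient $\eta$ and the cache decision $C$, and the joint problem reduces to a two‑block maximization. In iteration $t$ the spectrum step solves $\mathcal{P}2$ \emph{exactly} -- it is a one‑dimensional program with linear constraints, handled by the interior‑point method -- so $\mathcal{R}(\eta^{(t)},C^{(t-1)})\ge\mathcal{R}(\eta^{(t-1)},C^{(t-1)})$; the cache step then optimizes $\mathcal{P}3$ over the finite set $C\in\{0,1,\dots,C_{\max}\}$ with elitism in Algorithm~\ref{alg:algorithm1} (in the worst case an exhaustive $\mathcal{O}(C_{\max})$ sweep), so $\mathcal{R}(\eta^{(t)},C^{(t)})\ge\mathcal{R}(\eta^{(t)},C^{(t-1)})$. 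Chaining the two inequalities yields $APT_2^{(t)}\ge APT_1^{(t)}\ge APT_2^{(t-1)}$, i.e. the sequence $\{APT_2^{(t)}\}_t$ is non‑decreasing.

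Next I would bound the sequence. Every additive term in (\ref{Pa}) is of the form $\lambda_k W\log_2(1+\gamma_0)$ times a coverage probability $P^{\mathrm{cov}}\in[0,1]$ and weights lying in $[0,1]$ (namely $\eta$, $1-\eta$, $p_h$, $1-p_h$), so $\mathcal{R}\le(\lambda_s+\lambda_m)W\log_2(1+\gamma_0)<\infty$. A non‑decreasing sequence bounded above is Cauchy, hence $APT_2^{(t)}-APT_2^{(t-1)}\to 0$; since $APT_2^{(t-1)}\le APT_1^{(t)}\le APT_2^{(t)}$, also $APT_2^{(t)}-APT_1^{(t)}\to 0$. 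Therefore the stopping test $\|APT_2^{(t)}-APT_1^{(t)}\|\le\epsilon$ is satisfied after finitely many iterations (and the counter is in any case capped by $Iter_{\max}$), which establishes finite convergence. For convergence of the iterates themselves I would add that $C$ ranges over a finite set while the objective is non‑decreasing, so $C^{(t)}$ is eventually constant, say $C^{*}$; then for all large $t$ the spectrum step returns the fixed maximizer $\eta^{*}=\argmax_\eta\mathcal{R}(\eta,C^{*})$, so $(\eta^{(t)},C^{(t)})\to(\eta^{*},C^{*})$.

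Finally I would address optimality. The limit $(\eta^{*},C^{*})$ is, by construction of the updates, a \emph{partial optimum} of $\mathcal{P}$: $\eta^{*}$ is optimal for $\mathcal{P}$ along the $\eta$‑coordinate with $C=C^{*}$ fixed, and $C^{*}$ is optimal along the discrete $C$‑coordinate with $\eta=\eta^{*}$ fixed -- i.e. a fixed point of the blockwise maximization and a Nash‑type equilibrium of the two‑block partition. Invoking the BCD convergence results of \cite{BCDproof,BCDdiff}, when the two blocks are not strongly coupled (as argued in the accompanying remark) such a partial optimum satisfies the first‑order optimality conditions of $\mathcal{P}$ and therefore coincides with, or closely approximates, the global maximizer; together with the exactness of the $\eta$‑update and the exhaustive $C$‑update this gives the ``approach to the optimal solution'' claimed in the theorem. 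I expect this last step to be the main obstacle: for a nonconvex MINLP, BCD can only certify convergence to a partial/stationary optimum, not to the global one, so the rigorous content is the finite convergence together with partial optimality of the limit, with the residual gap to the true optimum controlled by the weak‑coupling assumption (and corroborated empirically by the convergence curves in Section~V).
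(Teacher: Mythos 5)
Your proof takes essentially the same route as the paper: Proposition~\ref{proposition3} pins the power, each block update (exact $\eta$-maximization, then the cache sweep) is non-decreasing in APT, and the resulting monotone sequence is bounded above, hence convergent within the iteration cap. You add useful detail the paper omits --- the explicit bound $(\lambda_s+\lambda_m)W\log_2(1+\gamma_0)$, the eventual constancy of $C^{(t)}$ over its finite range, and the honest caveat that for this nonconvex MINLP only a partial (coordinatewise) optimum rather than the global one can be certified --- which is consistent with the theorem's claim of an ``approach'' to the optimum rather than exact optimality.
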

	This theorem can be proved that these two sub-problems have a unique optimal solution.\textcolor{blue}{ Then based on basic cyclic rule, the consequences from BCD method is  defined and bounded }and the constraint set is a Cartrsian product of closed convex sets. \textcolor{blue}{In proposed JCSPA algorithm, after Proposition 3, the spectrum partition coefficient $\eta$  and caching decision $C$ are optimized in turn while keeping the other fixed, with the goal of maximizing APT. After spectrum partition solution for fixed $C_t$, it will hold that 
		\begin{align} \label{covergenceA}
			APT(\eta_t, C_t) \le APT(\eta_{t+1}, C_t)
		\end{align}  
		Then after cache decision in Algorithm 1 for given $\eta_{t+1}$, we can get
		\begin{align} \label{covergenceB}
			APT(\eta_{t+1}, C_t) \le APT(\eta_{t+1}, C_{t+1})
		\end{align}
		Combining (\ref{covergenceA}) with (\ref{covergenceB}), it holds that 
		\begin{align} \label{covergence}
			APT(\eta_{t}, C_t) \le APT(\eta_{t+1}, C_{t+1})
		\end{align}
		This illustrates that APT value will not decrease in each iteration. The APT upper bound of cache-enabled mABHetNets is  limited to a finite value, so the proposed  JCSPA algorithm will  converge.}
\section{Performance Evaluation}
In this section, we use numerical simulation results to validate and evaluate of APT of the cache-enabled mABHetNet. Particularly, we compare APT under different scenarios.

\subsection{Parameter Setting}
%
Unless otherwise specif\mbox{}ied, the parameters of the simulation scenario are set as follows.
The BSs and users are distributed in a square area $1000\text{m} \times 1000\text{m}$.
Note that  the density of the users is assumed to be suf\mbox{}f\mbox{}iciently larger than that of the BS, we assume that the BS is active in the downlink. In other words, each SBS is in a backhaul association cell and has at least one associated user in its coverage. 
Some default simulation conf\mbox{}igurations are listed in Table \ref{parameters}, based on 3GPP specif\mbox{}ication and literatures\cite{Partition3,BSPowerModel,3GPP:Spec,PathLoss,FixPower,SBSPowerScaling}. All the above settings can be changed according to different scenarios.
\renewcommand\arraystretch{0.7}
\begin{table}[h]
	\footnotesize
\centering
\caption{Simulation Parameters}
\label{parameters}
 \begin{tabular}{|l|l|l|}
 \hline
 \textbf{Parameters}         &  \textbf{Physics meaning}            &  \textbf{Values}\\ 
 \hline
  $W$                                &  Total mmWave  bandwidth          &400 MHz\\ \hline
  $\lambda_s$,   $\lambda_m$  &  The density of SBS and MBS          &$10^{-4}$ BSs/m$^2$, ~$4 \times10^{-5}$ BSs/m$^2$ \\ \hline
  $A_\mathrm{L},\alpha_\mathrm{L}$ & Pathloss parameters of LOS &$10^{-10},~2$\\ \hline
  {\color{blue}$A_{\mathrm{NL}},\alpha_{\mathrm{NL}}$} & {\color{blue}Pathloss parameters of NLOS}  &{\color{blue}$10^{-14},~4$}\\ \hline
  $N_0$                          &  Noise power                &-90 dBm\\ \hline 
  $P_s^\mathrm{max}$, $P_m^\mathrm{max}$  &Maximum power of SBS and MBS           &38.2 dBm, ~60 dBm\\ \hline
  $\gamma_0$               &  SINR threshold            &10 dB \\ \hline
  $P_{s}^\mathrm{fc}$, $P_{m}^\mathrm{fc}$ &  Fixed circuit power at SBS and MBS       &20 dBm, 46 dBm\\ \hline
   $\rho_s$, $\rho_m$   &  Power amplif\mbox{}ier and cooling coeff\mbox{}icient    &1,~1.5\\ \hline
  $B_s$, $B_m$            &  Association biases of SBS and MBS         &10,~5\\ \hline
  $\beta$                         & Blockage density                                          &$2\times10^{-3}$\\ \hline
  {\color{blue}$\theta$} & {\color{blue}Main lobe beamwidth}      &{\color{blue}$30^{\circ}$}\\ \hline
  {\color{blue}$M$, $m$} & {\color{blue}Main lobe antenna gain, slide lobe antenna gain}  &{\color{blue}10dB, -10dB}\\ \hline
  $\omega_{ca}$           & Caching power coeff\mbox{}icient           & $6.25\times10^{-12}$ W/bit\\ \hline
  $s$                               &  Size of each f\mbox{}ile                            & 100 MB\\ \hline
  ${F}$                          &The f\mbox{}ile units in the f\mbox{}ile library &1000\\ \hline
  ${C}_{\mathrm{max}}$      & Maximum caching capacity(file units) of SBS &800\\ \hline
  $\gamma_p$               & Zipf parameter                                              &1.0\\ 
 \hline
 \end{tabular}
\end{table}

\subsection{Covergence of JCSPA}

\begin{figure}[H]
	\centering
	\includegraphics[width=3in]{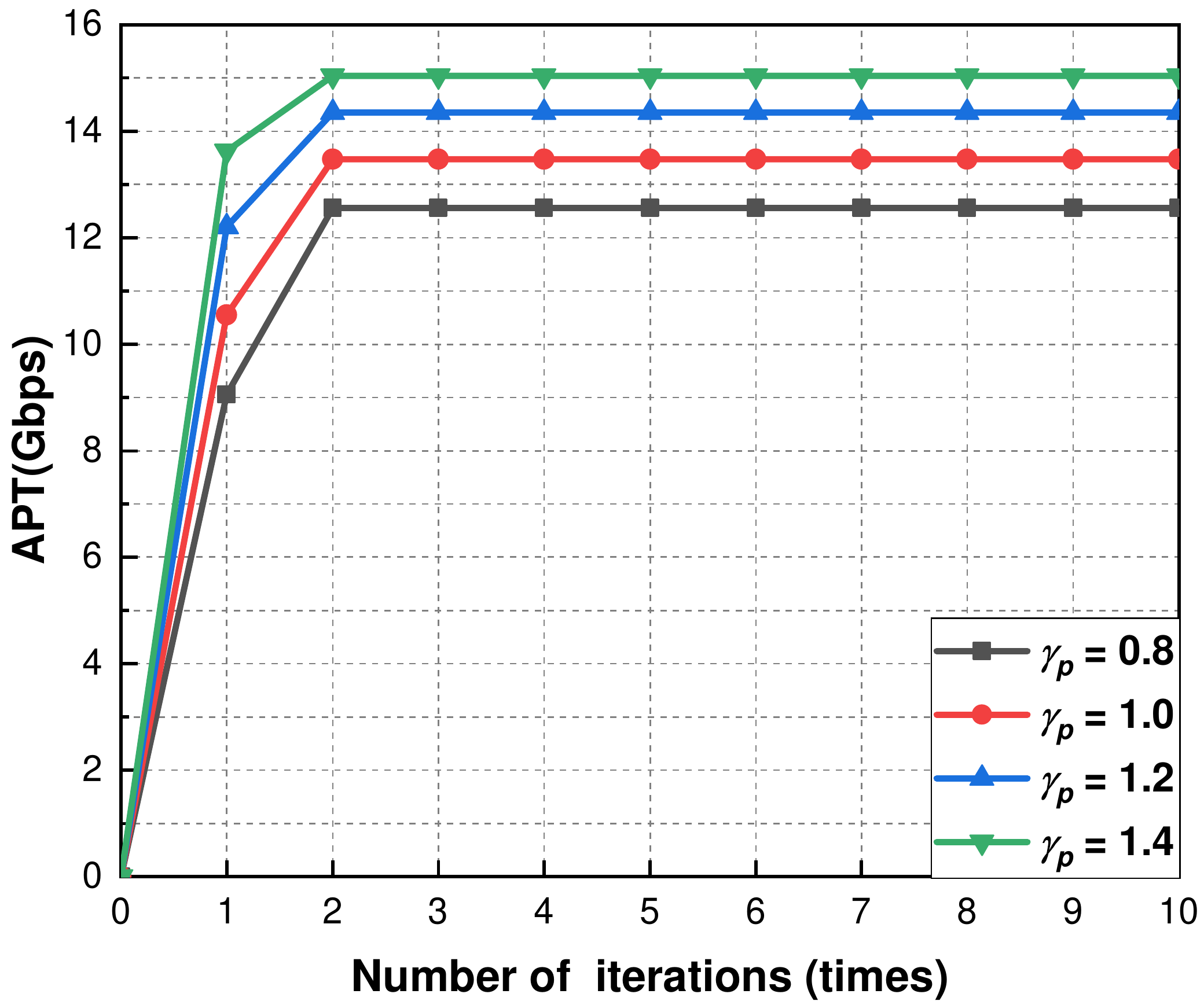}
	\captionsetup{name={Fig.},labelsep=period,singlelinecheck=off,font={small}}
	\caption{ Convergence performance of the proposed JCSPA algorithm in APT maximization problem.}
	\label{APT_Iteration1}
\end{figure}
In this subsection, simulations are carried out to validate the convergence performance and maximum APT of the proposed JCSPA algorithm  for different skewnesses of caching file popularity(from 0.8 to 1.4) as Fig. \ref{APT_Iteration1}. As expected, we can see that the joint optimization problem converges to maximum APT after approximately no more than five iterations. It can be concluded that our proposed algorithm is fast-convergent. Besides, compared to the different Zipf parameters, a bigher skewness will lead to a higher APT.

\subsection{Performance of APT under different caching capacity and spectrum partitions}
\begin{figure}[H]
	\centering
	\subfigure[]{\includegraphics[width=2.1in]{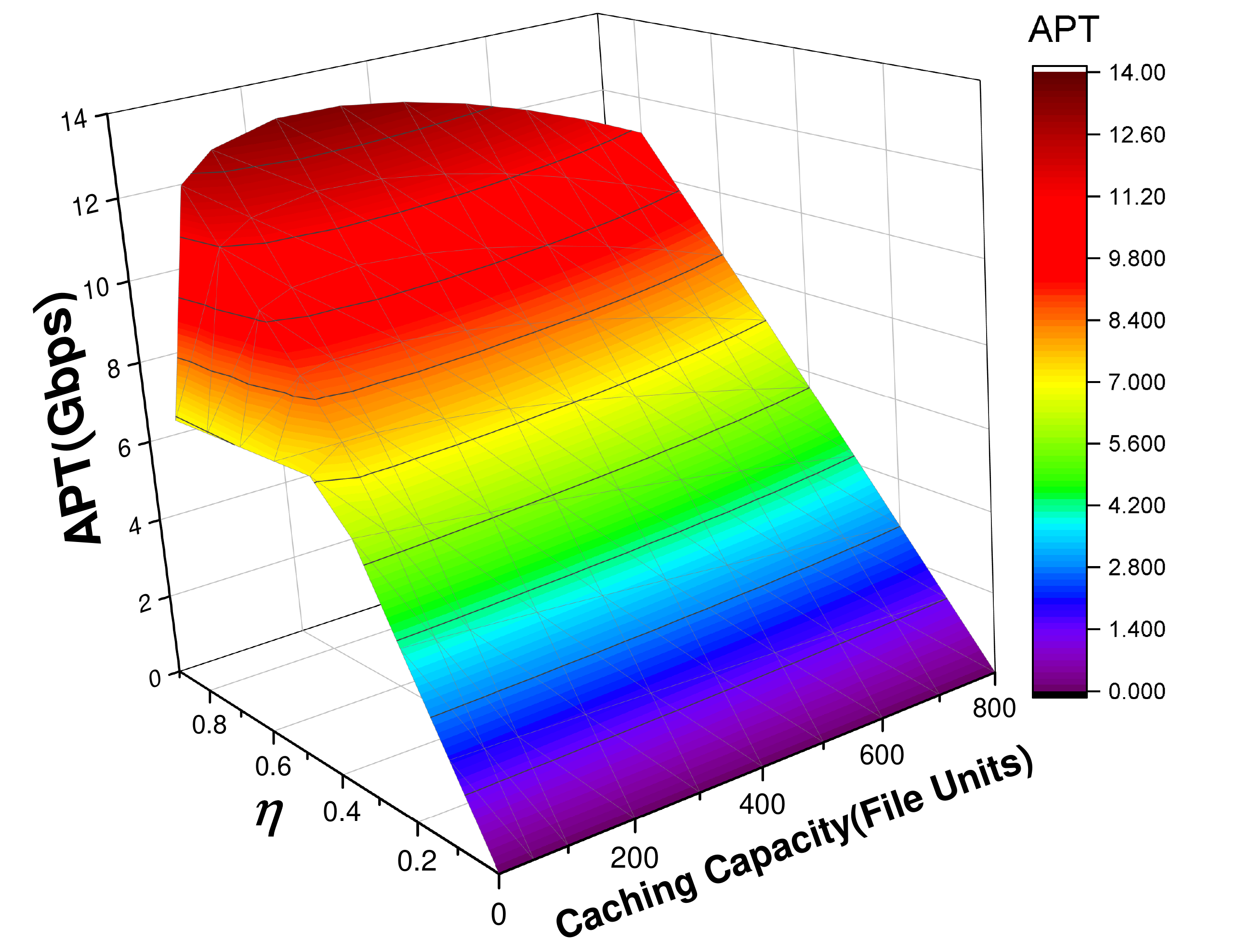}\label{APTetacache}}  
	\subfigure[]{\includegraphics[width=2.1in]{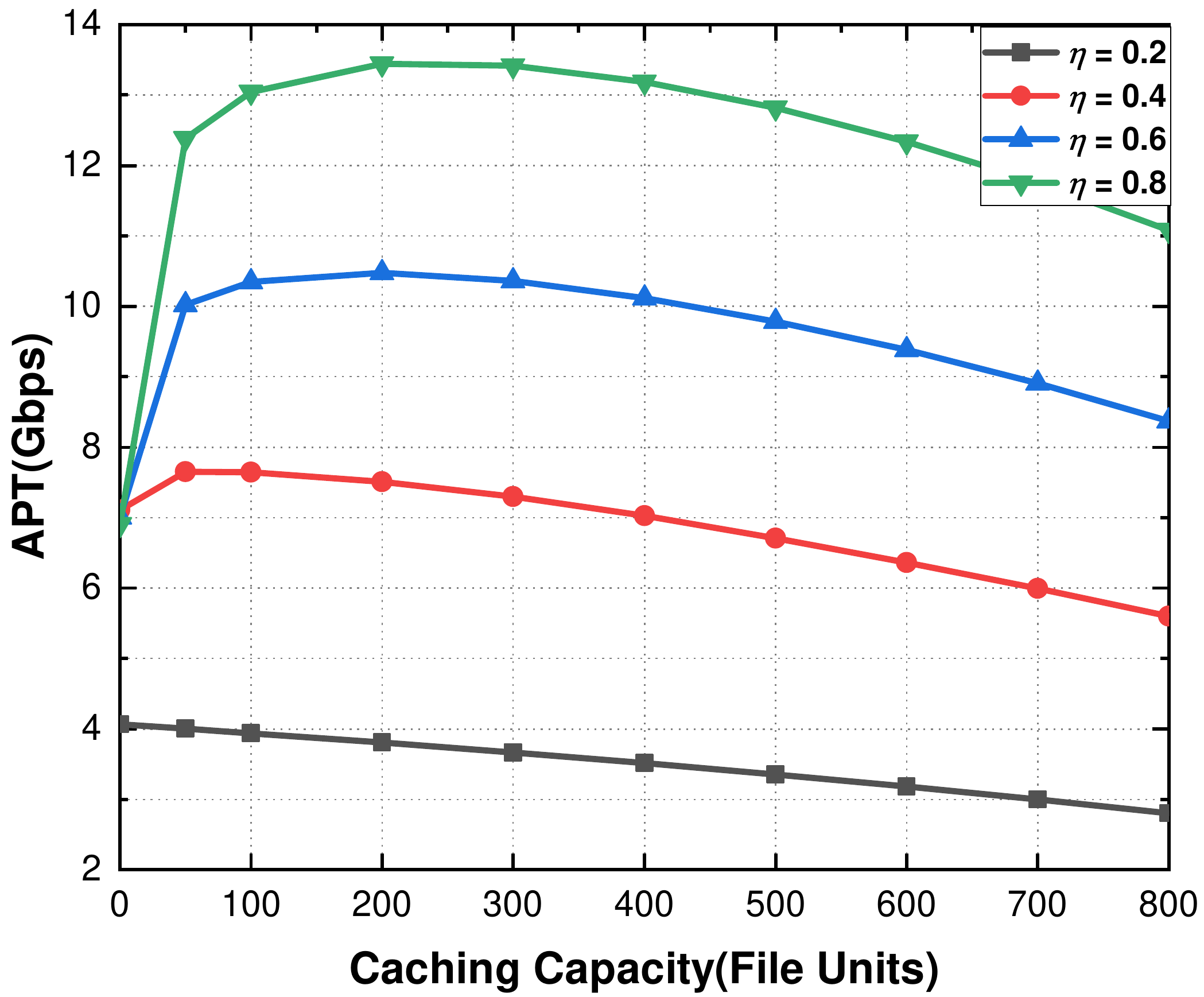}\label{APTcache}} 
	\subfigure[]{\includegraphics[width=2.1in]{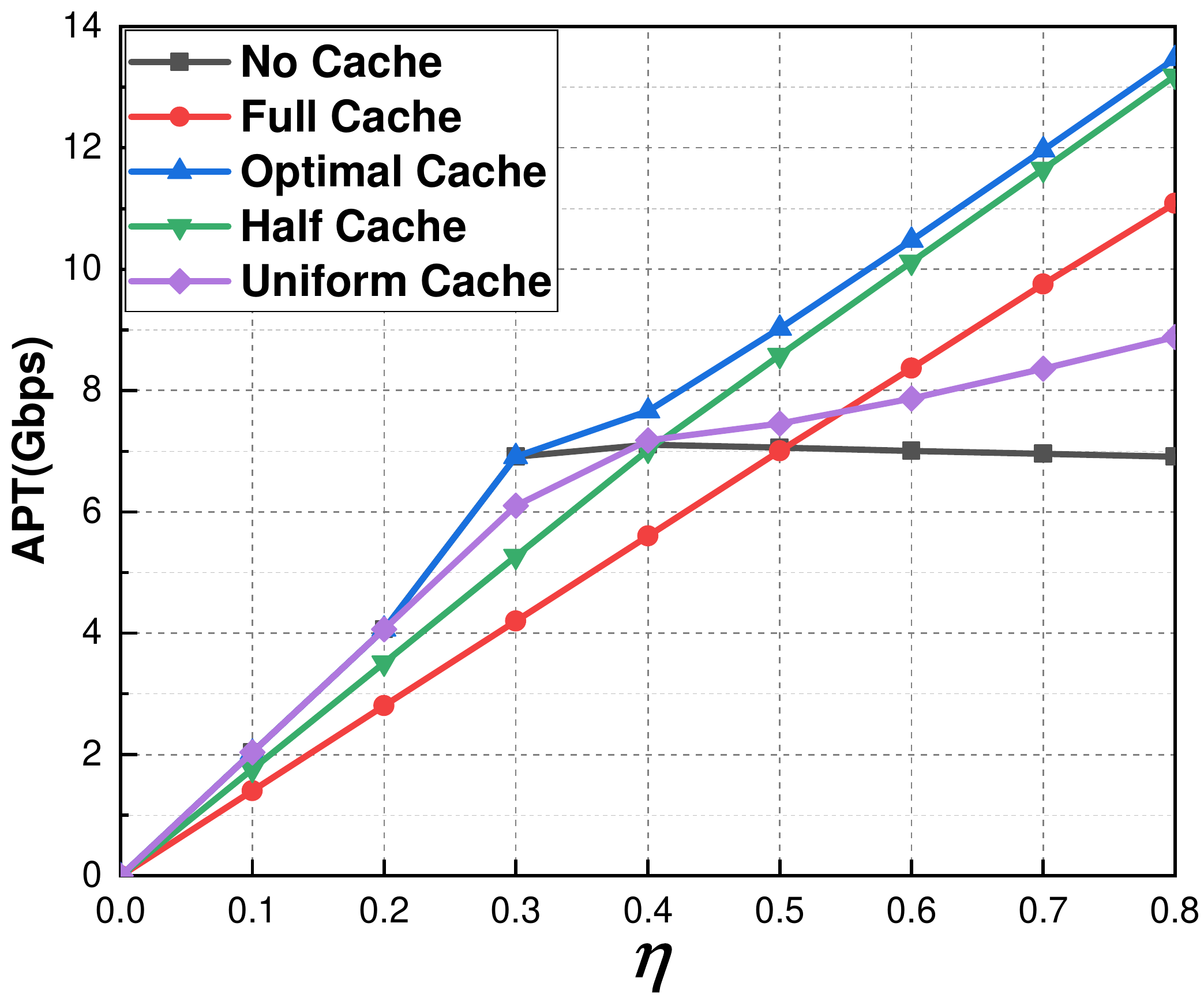}\label{APTeta}}      
	\captionsetup{name={Fig.},labelsep=period,singlelinecheck=off,font={small}}
	\caption{ APT under different caching capacity and spectrum partitions, 
	(a) APT v.s. caching capacity \& spectrum partition ratio, (b) APT v.s. caching capacity, (c) APT v.s. spectrum partition ratio.}
	\label{APT_cache_eta}
\end{figure}
In Fig. \ref{APTetacache}, the performance of APT for both the  caching capacity and spectrum partition is shown. From this three-dimensional APT, both caching f\mbox{}ile capacity and spectrum partition ratio have a prominent impact on APT. APT is not monotonically increasing  or decreasing with respect to these two variables. 

In order to further observe their quantitative relationship, Fig. \ref{APTcache} shows the APT with respect to different caching capacity under four spectrum partition coefficients. For lower spectrum partition ratio,  APT will decrease with increasing caching capacity. This is because the smaller spectrum can only satisfy the partial data transmission  of the access link, and the power consumption  by caching reduces the transmit power of the SBS, which further  reduces the APT. While for higher spectrum partition ratio, APT will f\mbox{}irst increase with increasing caching capacity and then decrease. 
The reason is that as more caching f\mbox{}iles can improve the cache hit ratio of files at SBS and less files' delivery occupy backhaul resources, more files can be obtained through access link directly without backhaul link. However,  APT will decrease when the caching capacity  is more than about 200 file units. 
This is mainly because  maximum power of SBS is limited and too much caching capacity consumes too much power, therefore the transmit power of SBS will be reduced, which limits the performance of APT.  This gives us an insight that increasing the caching capacity does not necessarily increase the APT when the maximum power of SBS is limited.     

 Fig. \ref{APTeta} shows the APT of different spectrum partitons under four given caching decesions. In the case of no cache, 
APT will increase and then decrease with increasing spectrum partition. While for higher spectrum partition ratio, the introduction of cache at SBS increases the APT signif\mbox{}icantly. This is because caching f\mbox{}iles at SBS can transfer the spectrum from the backhaul link to the access link, which potentially improves the APT.

\subsection{Comparison with baseline algorithms}
The proposed JCSPA algorithm is compared with  three basic algorithms in this subsection. 
\begin{itemize}
\item \textbf{No cache and DSA:}
This algorithm is used to optimize the spectrum partition while the SBS does not cache any f\mbox{}iles  in the network \cite{Partition3}. This means that all files are cache-miss and the f\mbox{}ile delivery needs to go through  both the access and backhaul link.
\item \textbf{Optimal cache and FSA:}
This algorithm aims to optimize the cache decision with given spectrum that backhaul and access link accounts for the half of the total spectrum \cite{OCF}. The SBS needs to decide how many most popular f\mbox{}iles in f\mbox{}ile library to cache.
\item \textbf{Full cache and DSA:} In this algorithm, the SBS uses all the caching capacity  to store the $C_{\mathrm{max}}$ most popular f\mbox{}ile units. Besides, the spectrum partition is based on DSA program to improve APT as much as possible.
\item \textcolor{blue}{
  \textbf{Uniform cache and DSA}: This baseline algorithm is based on caching files uniformly \cite{2u} where each file is cached with identical probability and spectrum partition scheme is dynamic spectrum partition.
}
\end{itemize}

\begin{figure}[H]
	\centering
	\subfigure[]{\includegraphics[width=2in]{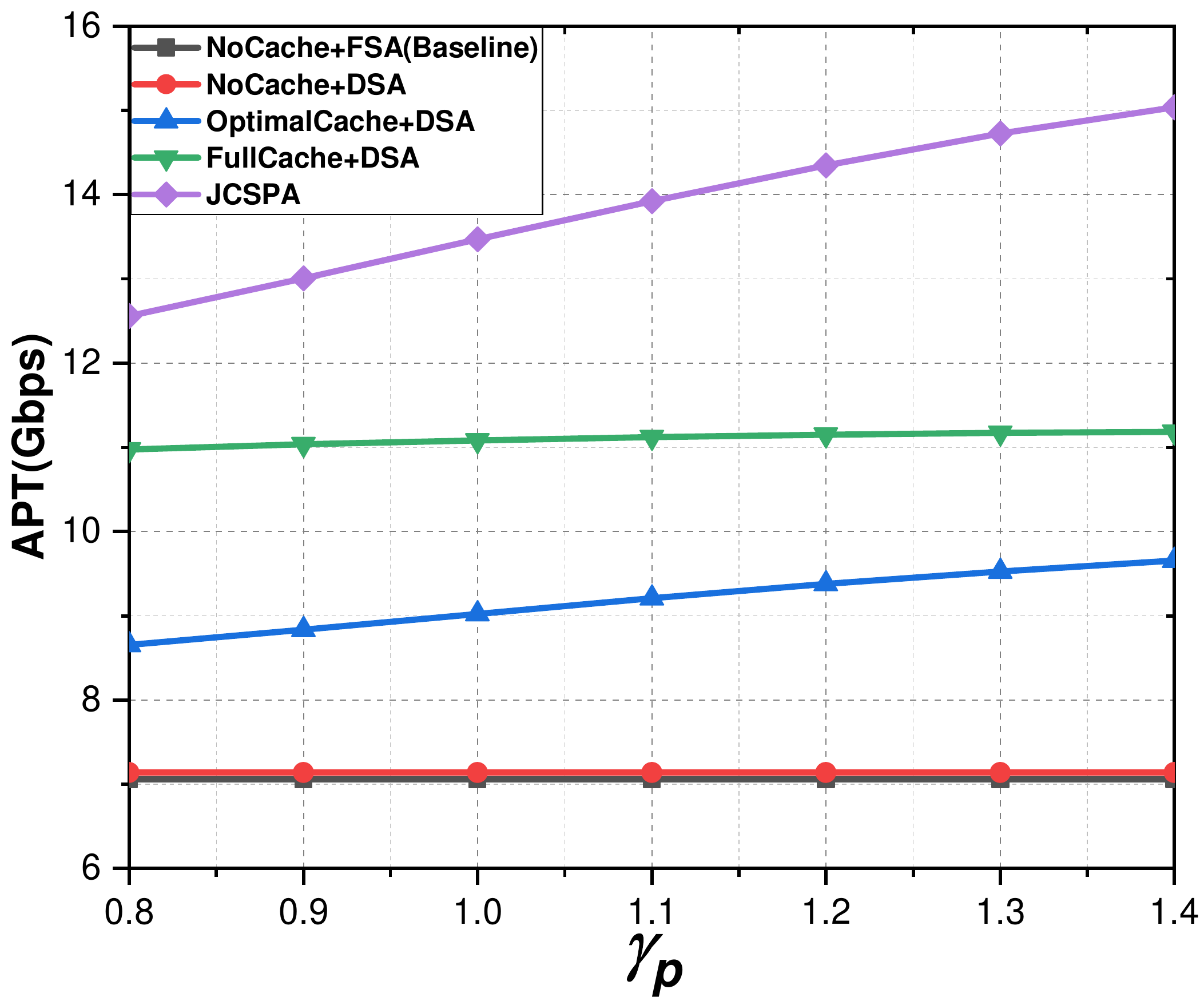}\label{APTgm}}  
	\subfigure[]{\includegraphics[width=2in]{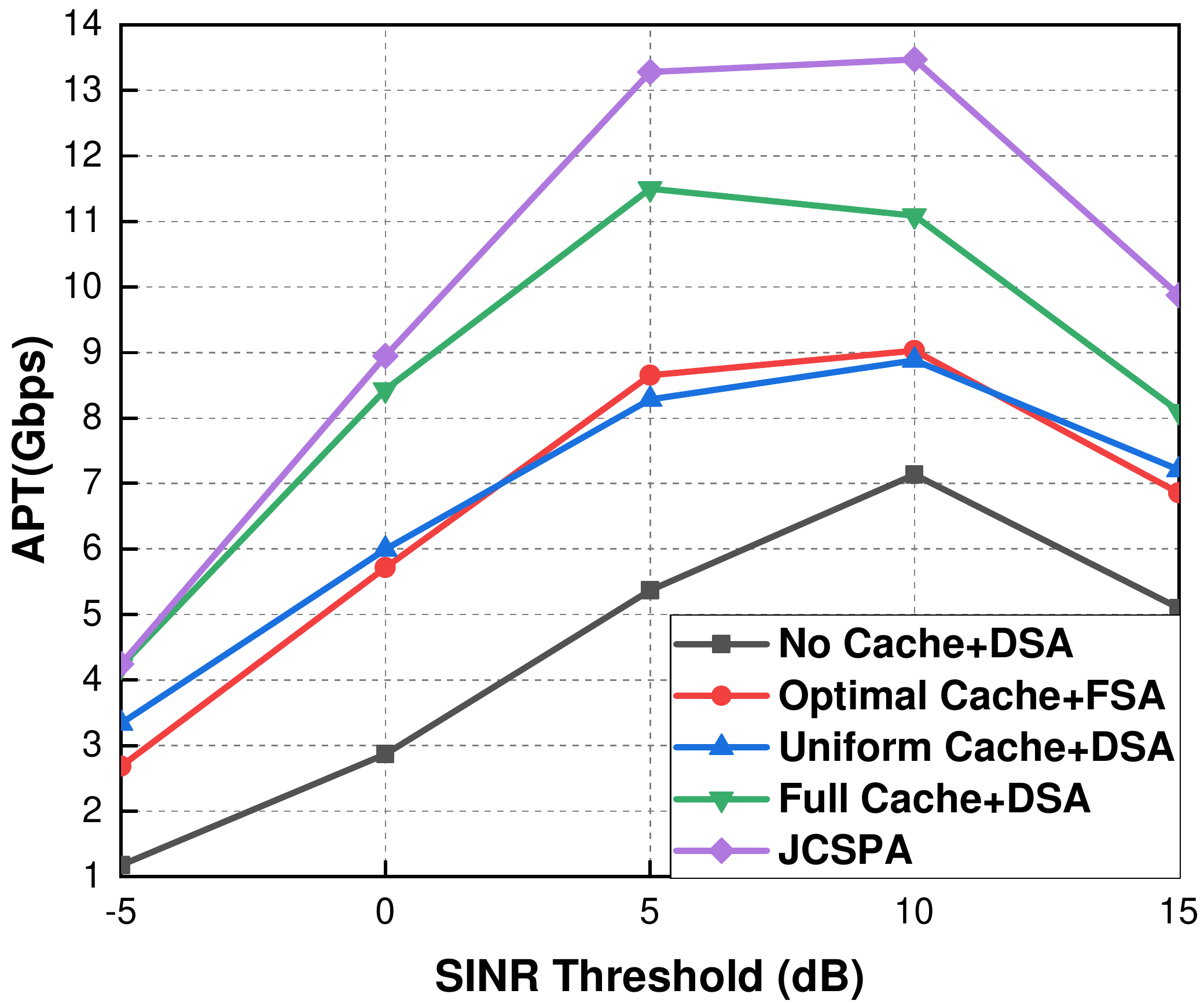}\label{APTsinr}}  
	\subfigure[]{\includegraphics[width=2in]{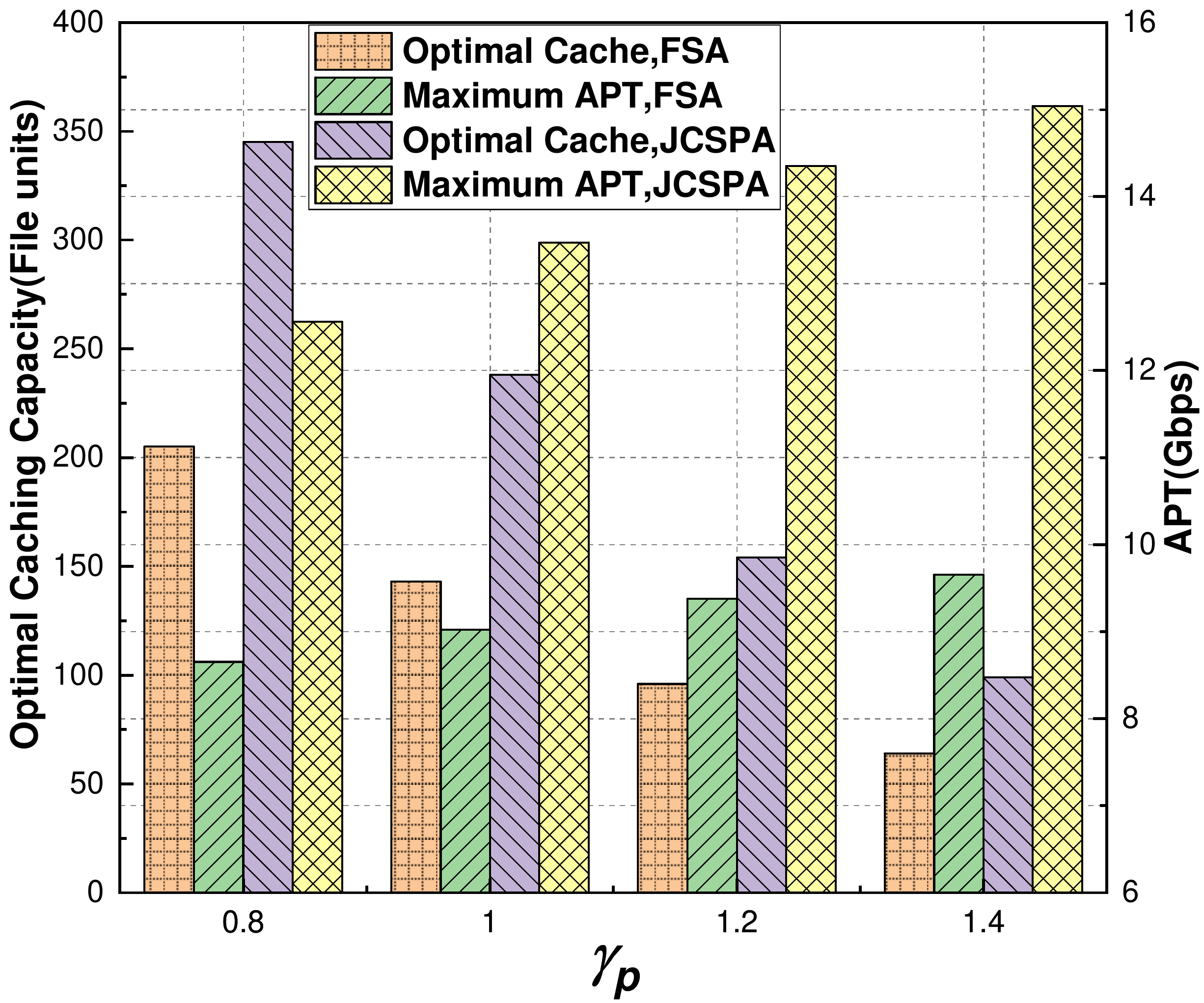}\label{APTcg}}    
	\captionsetup{name={Fig.},labelsep=period,singlelinecheck=off,font={small}}
	\caption{Comparison of  APT under different caching decision and spectrum partition algorithms,
		(a) APT v.s. skewness of f\mbox{}ile popularity, (b) APT v.s. SINR threshold(dB),  (c) Optimal cache \& APT v.s. skewness of f\mbox{}ile popularity. }
	\label{APTduibi}
\end{figure}
In Fig. \ref{APTgm}, we begin by evaluating APT in terms of different skewness of f\mbox{}ile popularity under proposed JCSPA and other three benchmark algorithms. In the cases of no cache, APT will not change with $\gamma_p$ but the DSA exceeds the FSA algorithm. Besides, APT of full cache combined with DSA is better than that of optimal cache with FSA. Such result reveals the signif\mbox{}icance of the dynamic spectrum partition. In contrast, our proposed JCSPA algorithm outperforms other baseline algorithms. This illustrates the effectiveness of proposed joint optimization of cache decision and spectrum partiton  JCSPA algorithm. 

Fig. \ref{APTsinr} shows the APT with different SINR thresholds under different algorithms.  Note that APT increases as SINR threshold increases when the SINR threshold is low, this is because increasing the SINR threshold $\gamma_0$ will increase the  transmission rate dominates. While for larger SINR threshold,  it improves the demodulation threshold of the received signal and reduces the coverage probability , so that APT will decrease with increasing the SINR threshold. Besides, the performance of JCSPA algorithm also signif\mbox{}icantly outperforms other baseline algoritms.


From Fig. \ref{APTcg} we can see that the optimal caching capacity is  reduced with the increasing the skewness while the maximum APT will increase. Comparing the two cases where  $\gamma_p=1.4$ and $\gamma_p=0.8$, APT under JCSPA algorithm is increased by nearly 40\% while the optimal caching capacity is reduced by more than 70\%.   In other words, only a smaller caching capacity is needed to reach the maximum APT for a higher skewness. The reason is that  higher skewness helps to improve cache hit ratio, so only fewer caching f\mbox{}ile units are needed to achieve higher APT. 

\subsection{Impact of other cache parameters on APT}
\begin{figure}[htbp]
	\centering
	\subfigure[]{\includegraphics[width=2.5 in]{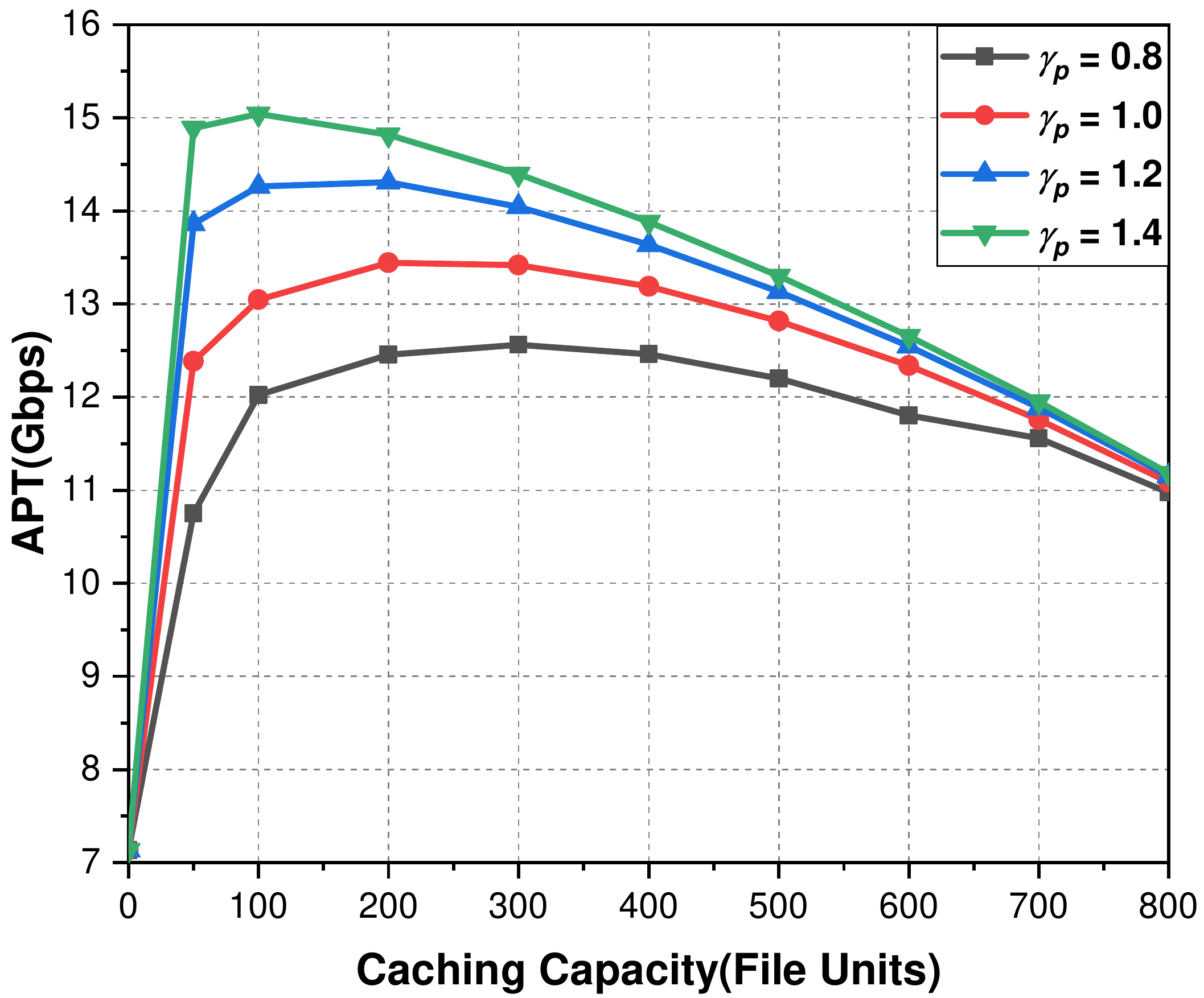}\label{APTcs}}  
	\subfigure[]{\includegraphics[width=2.5 in]{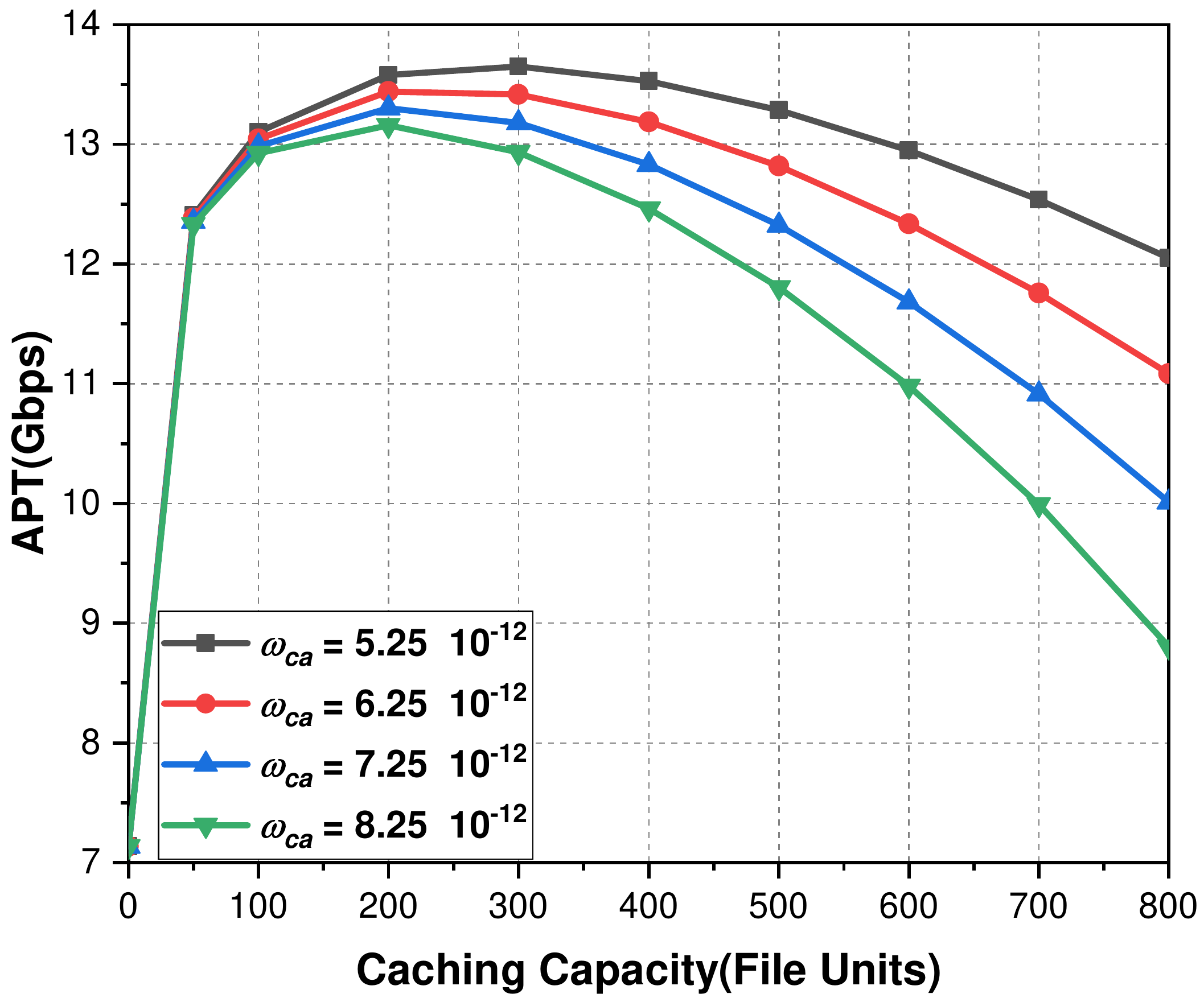}\label{APTcw}}        
	\captionsetup{name={Fig.},labelsep=period,singlelinecheck=off,font={small}} 
	\caption{Impact of cache parameters on APT,
		(a) APT vs. Caching capacity with different skewness $\gamma_p$, (b) APT vs. Caching capacity with different caching power coef\mbox{}f\mbox{}icient $\omega_{ca}$.}
	\label{EECache}
\end{figure}
In order to obtain the impact of cache-related parameters on APT, we further give the APT of different skewness of f\mbox{}ile popularity and caching power coef\mbox{}f\mbox{}icients. The skewness $\gamma_p$ and caching power coef\mbox{}f\mbox{}icient reflect the main characteristics of the caching. Fig. \ref{APTcs} shows the APT under dif\mbox{}ferent caching capacity for four skewness of caching f\mbox{}iles. It can be observed that APT f\mbox{}irst increases with caching capacity and then decreases, as well as higher skewness can lead to higher APT. This is because that  higher skewness increases the cache hit ratio $p_h$ then increases the APT.
Fig. \ref{APTcw} shows the APT under different caching capacity for four caching power coef\mbox{}f\mbox{}icients  from $\text{5.25}\times \text{10}^{\text{-12}}$ to 
$\text{8.25}\times \text{10}^{\text{-12}}$. When the caching capacity is small, APT is almost equal under different cache power consumption coefficients, but with the increasing of caching capacity, for the case of high cache power consumption coef\mbox{}f\mbox{}icient, APT will decrease rapidly as the cache increases, while APT with lower cache power consumption coef\mbox{}f\mbox{}icient decreases slowly. This is because the maximum power of SBS is limited and higher $\omega_{ca}$ will lead to more caching power consumption for the same number of caching f\mbox{}iles, thereby reducing the transmit power and decreasing the APT. This prompts us to use storage technologies with lower caching power coef\mbox{}f\mbox{}icient to improve network performance.

\section{Conclusion}
In this paper, to cope with the ``\emph{spectrum occupancy}'' problem, we make a tractable  cache-enabled mABHetNet  by stochastic geometry tool and investigate the impact of spectrum partition between the access link and the backhaul link as well as cache allocation on network performance. \textcolor{blue}{Considering the effect of blockage and association probability}, we analyze the SINR distribution and derive the expression of APT with respect to caching capacity, spectrum partition and SINR threshold.  Based on the analytical work, we optimize the cache decision and spectrum partition problem jointly to obtain the maximum APT by the proposed algorithm.  Via numerical evaluations, we f\mbox{}irst evaluate the convergence of our proposed algorithm. Then we find that there exist the  optimal cahing capacity and spectrum partition to maximize APT, which verifies the effectiveness of our proposed algorithm.  Besides, we explore the impact of some key cache-related factors on APT. The results show  \textcolor{blue}{up to 90\% APT} gain of appropriate caching capacity and spectrum partition \textcolor{blue}{compared with traditional mABHetNets.} 
\section{Appendix}

\subsection{The proof of Lemma \ref{LemmaUserClosestSBSProbabilty}}\label{Appe1:NearDistance}
We first consider the event that the    distance between the typical user  and  the nearest LOS SBS   (LOS based file transmission between the typical user and the SBS)   is $r$.
In fact,   the   event that is the joint of following two events:
The first event is  the nearest SBS of the typical user is located at distance $r$ (Event 1) and the second event is the transmission path between the typical user and the serving SBS is an LOS path (Event 2).
According to \cite{ReferUserClosestBSProbabilty}, the PDF of Event 1 with regard to $r$ is given by $\exp \left(-\pi r^{2} \lambda_{s}\right) \times 2 \pi r \lambda_{s}$. The probability of Event 2 over distance $r$  is $\mathcal{P}_L(r)$, so that we can get the PDF of the joint Event 1   and Event 2 as
\begin{align}
f_{R_{s}}^{\mathrm{k}}(r)= \mathcal{P}_{k}(r)\times 2 \pi r \lambda_{s} \label{RsLOS} \times   \exp \left(-\pi r^{2} \lambda_{s}\right),
\end{align}
where $k=\{L\ or\ NL\}$ denotes the LOS or NLOS transmission link.

Similarly, the PDF of the distance $r$ (between the SBS and  the associated LOS MBS or NLOS MBS )  is
\begin{align}
f_{R_{m}}^{\mathrm{k}}(r)&=\mathcal{P}_{\mathrm{k}}(r)   \times 2 \pi r \lambda_{m} \label{RmLOS} \times \exp \left(-\pi r^{2} \lambda_{m}\right). 
\end{align}

\subsection{The proof of Lemma \ref{LemmaUserAssociatedSBS}}\label{Appen2:AssoPro}
The typical user may be associated with the SBS tier by either LOS channel or NLOS channel.
We derive the probability of the first event that the user is associated with the SBS by the wireless LoS  link. Such event has other cases: the interference from the NLOS SBS and the interference from the  LoS. Therefore, the probability that the user obtain the desired LOS signal from the SBS is $F_{s}^{L}(r)=p_{ln}^{ss}(r)\textcolor{blue}{p_{ll}^{sm}(r)p_{ln}^{sm}(r)}f_{R_s}^{\mathrm{L}}(r)$, 
where
\begin{enumerate}
    \item   \textcolor{blue}{When the recieved power from LOS SBS is higher than that from NLOS SBS }, the user is associated with the LOS SBS and the interference is from NLOS SBS.
            \begin{align}\label{LSBSNLSBS}
             p_{ln}^{ss}(r)
             &=\mathbb{P}[P_{s}^{tr}  B_s\textcolor{blue}{G_s}h_{s}  A_\mathrm{L} r^{-\alpha_L} \geq P_{s}^{tr}  B_s\textcolor{blue}{G_s}h_{s} A_{\mathrm{NL}} r_{s}^{-\alpha_{\mathrm{NL}}}]\\
             &=\mathbb{P}\left[ r_{s}\geq\left(\frac{A_\mathrm{L}}{A_\mathrm{NL}}\right)^{\frac{-1}{\alpha_{NL}}} r^{\frac{\alpha_{L}}{\alpha_{NL}}} \right]
             =\int_{0}^{\left(\frac{A_\mathrm{L}}{A_\mathrm{NL}}\right)^{\frac{-1}{\alpha_{NL}}}r^{\frac{\alpha_{L}}{\alpha_{NL}}}}f_{R_s}(r)dr
             =e^{-\lambda_s\pi\left[ \left(\frac{A_\mathrm{L}}{A_\mathrm{NL}}\right)^{\frac{-1}{\alpha_{NL}}} r^{\frac{\alpha_{L}}{\alpha_{NL}}}\right]^{2}},
            \end{align}
            where the last step is based on the CDF of PPP in \cite{ReferUserClosestBSProbabilty}.
    \textcolor{blue}{\item   Similarly, the user is associated with the LOS SBS and the interference is from LOS MBS.
    	\begin{align}\label{LSBSLMBS}
    		p_{ll}^{sm}(r)
    		&=\mathbb{P}[P_{s}^{tr}B_s {G_s} h_{s}  A_\mathrm{L} r^{-\alpha_L} \geq P_{m}^{tr} B_{m}G_m h_m A_{\mathrm{L}} r_{m}^{-\alpha_L}]\\
    		&=\mathbb{P}\left[ r_{m}\geq\left(\frac{P_s^{tr}B_{s} {G_s}h_s}{P_m^{tr}B_{m}{G_m} h_m}\right)^{\frac{-1}{\alpha_{L}}} r \right]
    		=e^{-\lambda_m\pi\left[\left(\frac{P_s^{tr} B_{s}{G_s}h_s}{P_m^{tr} B_{m}{G_m} h_m}\right)^{\frac{-1}{\alpha_{L}}} r\right]^{2}},
    	\end{align}
    	\item   The user is associated with the LOS SBS and the interference is from NLOS MBS.
    	\begin{align}\label{LSBSLMBS}
    		p_{ln}^{sm}(r)
    		&=\mathbb{P}[P_{s}^{tr}B_{s}G_s h_{s}  A_\mathrm{L} r^{-\alpha_L} \geq P_{m}^{tr} B_{m}G_m h_{m} A_{\mathrm{NL}} r_{m}^{-\alpha_{\mathrm{NL}}}]\\
    		&=\mathbb{P}\left[ r_{m}\geq\left(\frac{P_s^{tr} B_{s}{G_s}h_s A_\mathrm{L}}{P_m^{tr} B_{m}{G_m}h_m A_\mathrm{NL}}\right)^{\frac{-1}{\alpha_{\mathrm{NL}}}} r^{\frac{\alpha_{\mathrm{L}}}{\alpha_{\mathrm{NL}}}} \right]
    		=e^{-\lambda_m\pi\left[\left(\frac{P_s^{tr} B_{s}{G_s}h_s A_\mathrm{L}}{P_m^{tr} B_{m}{G_m}h_m A_\mathrm{NL}}\right)^{\frac{-1}{\alpha_{\mathrm{NL}}}} r^\frac{\alpha_{\mathrm{L}}}{\alpha_{\mathrm{NL}}}\right]^{2}}.
    \end{align}}   
\end{enumerate}
Then,  the probability that the user obtain the desired NLOS signal from the SBS is
\begin{align*}
	F_{s}^{NL}(r)=p_{nl}^{ss}(r)\textcolor{blue}{p_{nl}^{sm}(r)p_{nn}^{sm}(r)}f_{R_s}^{\mathrm{NL}}(r), 
\end{align*}

\begin{enumerate}
    \item   The user is associated with the NLoS SBS and the interference is from LOS SBS.
            \begin{align}\label{NLSBSLSBS}
             p_{nl}^{ss}(r)&= \mathbb{P}[P_s^{tr}B_{s}\textcolor{blue}{G_s}h_sA_{NL}r^{-\alpha_{NL}}\geq P_s^{tr}B_{s}\textcolor{blue}{G_s}h_sA_{L}r_s^{-\alpha_L}]
             =e^{-\lambda_s\pi\left[ \left(\frac{A_\mathrm{NL}}{A_\mathrm{L}}\right)^{\frac{-1}{\alpha_{L}}} r^{\frac{\alpha_{NL}}{\alpha_{L}}}\right]^{2}}.
            \end{align}
   \textcolor{blue}{\item   The user is associated with the NLOS SBS and the interference is from LOS MBS.
   	\begin{align}\label{NLSBSLMBS}
   		p_{nl}^{sm}(r)&=\mathbb{P}[P_s^{tr}B_{s}{G_s}h_sA_{NL}r^{-\alpha_{NL}}\geq P_m^{tr}B_{m}{G_m}h_mA_Lr_m^{-\alpha_L}]
   		=e^{-\lambda_m\pi\left[\left(\frac{P_s^{tr} B_{s}{G_s}h_s A_\mathrm{NL}}{P_m^{tr} B_{m}{G_m}h_m A_{\mathrm{L}}}\right)^{\frac{-1}{\alpha_\mathrm{L}}} r^\frac{\alpha_\mathrm{NL}}{\alpha_\mathrm{L}}\right]^{2}}.
   	\end{align}
   	\item   The user is associated with the NLOS SBS and the interference is from NLOS MBS.
   	\begin{align}\label{NLSBSNLMBS}
   		p_{nn}^{sm}(r)&=\mathbb{P}[P_s^{tr}B_{s}{G_s}h_sA_{NL}r^{-\alpha_{NL}}\geq P_m^{tr}B_{m}{G_m}h_mA_{NL}r_m^{-\alpha_{NL}}]
   		=e^{-\lambda_m\pi\left[\left(\frac{P_s^{tr}B_{s}{G_s} h_s}{P_m^{tr} B_{m}{G_m} h_m}\right)^{\frac{-1}{\alpha_{\mathrm{NL}}}} r\right]^{2}}.
   \end{align}}
    
\end{enumerate}

\textcolor{blue}{Then,  the probability that the user obtain the desired LoS signal from the MBS is $F_{m}^{L}(r)=p_{1n}^{mm}(r)$ $ p_{ll}^{ms}(r) p_{ln}^{sm}(r) f_{R_m}^{\mathrm{L}}(r)$ where
	\begin{enumerate}
		\item   The user is associated with the LoS MBS and the interference is from NLoS MBS.
		\begin{align}\label{LMBSNLMBS}
			p_{ln}^{mm}(r)=\mathbb{P}[P_m^{tr}B_{m}{G_m}h_mA_Lr^{-\alpha_L}\geq P_m^{tr}B_{m}{G_m}h_mA_{NL}r_m^{-\alpha_{NL}}] =e^{-\lambda_m\pi\left[\left(\frac{A_\mathrm{L}}{A_\mathrm{NL}}\right)^{\frac{-1}{\alpha_{NL}}} r^{\frac{\alpha_{L}}{\alpha_{NL}}}\right]^{2}}
		\end{align}
		\item   The user is associated with the LoS MBS and the interference is from LoS SBS.
		\begin{align}\label{LSBSLMBS}
			p_{ll}^{ms}(r) =\mathbb{P}[P_m^{tr}B_{m}{G_m}h_mA_Lr^{-\alpha_L}\geq P_s^{tr}B_{s}{G_s}h_sA_{L}r_s^{-\alpha_{L}}]=e^{-\lambda_s\pi\left[\left(\frac{P_m^{tr}B_{m} h_m}{P_s^{tr} B_{s}h_s}\right)^{\frac{-1}{\alpha_{L}}} r\right]^{2}}
		\end{align}
		\item   The user is associated with the LoS MBS and the interference is from NLoS SBS.
		\begin{align}\label{LMBSNLSBS}
			p_{ln}^{sm}(r) =\mathbb{P}[P_m^{tr}B_{m}{G_m}h_mA_Lr^{-\alpha_L}\geq P_s^{tr}B_{s}{G_s}h_sA_{NL}r_s^{-\alpha_{NL}}]=e^{-\lambda_s\pi\left[\left(\frac{P_m^{tr}B_{m}{G_m} h_m A_\mathrm{L}}{P_s^{tr} B_{s}{G_s}h_s A_\mathrm{NL}}\right)^{\frac{-1}{\alpha_{\mathrm{NL}}}} r^{\frac{\alpha_{\mathrm{L}}}{\alpha_{\mathrm{NL}}}}\right]^{2}}
		\end{align}
\end{enumerate}
Following the same logic, the probability that the user obtain the desired NLoS signal from the MBS is $F_{m}^{NL}(r)=p_{nl}^{mm}(r)$ $ p_{nl}^{ms}(r) p_{nn}^{ms}(r) f_{R_m}^{\mathrm{NL}}(r)$}

The probability that SBS obtains the desired LOS or NLOS signal from   MBS is 
\begin{align*}
	\small F_{bh}^{L}(r)=p_{ln}^{bh}(r) f_{R_{bh}}^{\mathrm{L}}(r), 
\end{align*}
 where the probability that the SBS is associated with the LOS MBS and the interference is from NLOS MBS is
 \begin{align*}
 p_{ln}^{bh}(r)=\mathbb{P}[P_m^{tr}B_{m}\textcolor{blue}{G_m}h_mA_Lr^{-\alpha_L}\geq P_m^{tr}B_{m}\textcolor{blue}{G_m}h_mA_{NL}r_{bh}^{-\alpha_{NL}}] =e^{-\lambda_m\pi\left[\left(\frac{A_\mathrm{L}}{A_\mathrm{NL}}\right)^{\frac{-1}{\alpha_{NL}}} r^{\frac{\alpha_{L}}{\alpha_{NL}}}\right]^{2}}.
 \end{align*}
Similarly, $F_{bh}^{NL}(r)=p_{nl}^{bh}(r) f_{R_{bh}}^{\mathrm{NL}}(r)$.

\subsection{Proof of Proposition \ref{ProUserSINRCoverageProbability}}\label{Appen3:ProUserSINRCoverageProbability}

Then we first focus on the SINR distribution of a user covered by SBS:
\begin{align}
  P_{s}^{cov}(\gamma)=P_{s,L}^{cov}(\gamma)+P_{s,NL}^{cov}(\gamma),
  \end{align}
where the SINR distribution of  a user   covered by LOS/NLOS  SBS:
\begin{align}
  &P_{s,k}^{cov}(\gamma)=\mathbb{E}_{r }\left[\mathbb{P}\left[\mathrm{SINR}_{s}^{\mathrm{k}} (r ) > \gamma\right]\right]=\int_{0}^{\infty} \mathbb{P}\left[\operatorname{SINR}_{s}^{\mathrm{k}}(r) >\gamma\right] F_{s}^{\mathrm{k}}(r)\mathrm{d}r,
\end{align}
where $k=\{L\ or\ NL\}$ denotes the LOS or NLOS transmission link,

 $\gamma$ is the threshold for successful demodulation and decoding at the receiver. $\mathbb{P}\left[\mathrm{SINR}_{s}^{\mathrm{L}} (r ) > \gamma\right]$ means the probability of the event that the SINR of  the user covered by SBS is over $\gamma$  via the LOS path at distance $r$:
\begin{align}
&\mathbb{P}\left[\operatorname{SINR}_{s}^{\mathrm{k}}(r)  > \gamma\right]=\mathbb{P}\left[\frac{P_{s}^{tr}B_{s}\textcolor{blue}{G_s}h_{s} A_{\mathrm{k}} r^{-\alpha_{\mathrm{k}}}}{I_s \textcolor{blue}{+I_m}+N_{0}} > \gamma\right]\nonumber\\
&=\mathbb{P}\left[h_{s} > \frac{\gamma\left(I_s \textcolor{blue}{+I_m}+N_{0}\right)}{P_{s}^{tr} B_{s}\textcolor{blue}{G_s}A_{\mathrm{k}} r^{-\alpha_{\mathrm{k}}} }\right]\stackrel{(a)}{=} \exp \left(\frac{-\gamma N_{0}}{P_{s}^{tr} B_{s}\textcolor{blue}{G_s}A_{\mathrm{k}} r^{-\alpha_{\mathrm{k}}} }\right)  \mathcal{L}_{I_{s\textcolor{blue}{,m}}}^{\mathrm{k}}\left(\gamma r^{\alpha_{\mathrm{k}}} \right),
\end{align}

where (a) follows from small fading $h$$\sim$$\exp(1)$. Here the Rayleigh fading is considered.    $\mathcal{L}_{I_{s,m}}$  is the Laplace transform of the cumulative interference from  the SBS tier \textcolor{blue}{and the MBS tier}.
\begin{small}
\begin{align}\label{LoSInterferenceFromtheSBS}
   &\mathcal{L}_{I_{s,m}}^{\mathrm{L}}\left( \gamma r^{\alpha_{\mathrm{L}}} \right)\nonumber\\
   &\stackrel{(b)}{=}
    \textcolor{blue}{\prod \limits_{G_i}}\exp \left(-2 \pi  \lambda_{s} \textcolor{blue}{p_{G_i}}\int_{r}^{\infty}  \left(\frac{\mathcal{P}_L(u)u}{1+\frac{P_s^{tr}B_{s}\textcolor{blue}{G_{s}} A_Lr^{-\alpha_{\mathrm{L}}}}{\gamma P_s^{tr}B_{s}\textcolor{blue}{G_{i}}A_Lu^{-\alpha_{\mathrm{L}}}}} d u \right)\right)
    \textcolor{blue}{\prod \limits_{G_i}}\exp \left(-2 \pi  \lambda_{s} \textcolor{blue}{p_{G_i}}\int_{\left(\frac{A^{\mathrm{L}}}{A^{\mathrm{NL}}}\right)^{\frac{-1}{\alpha^{\mathrm{NL}}}} r^{\frac{\alpha_{\mathrm{L}}}{\alpha_{\mathrm{NL}}}}}^{\infty} 
      \left(\frac{\mathcal{P}_{\mathrm{NL}}(u)u}{1+\frac{P_s^{tr}B_{s}\textcolor{blue}{G_{s}}A_{\mathrm{L}}r^{-\alpha_{\mathrm{L}}}}{\gamma  P_s^{tr}B_{s}\textcolor{blue}{G_{i}} A_{\mathrm{NL}}u^{-\alpha_{\mathrm{NL}}}}} d u\right)\right),\nonumber\\
   &\times \textcolor{blue}{\prod \limits_{G_l}}\exp \left(-2 \pi  \lambda_{m}\textcolor{blue}{p_{G_l}}\int_{\left(d_1\right)^{\frac{-1}{\alpha_{L}}} r}^{\infty} \left(\frac{\mathcal{P}_{L}(u)u}{1+\frac{P_{s}^{tr}B_{s}\textcolor{blue}{G_{s}}A_L r^{-\alpha_{\mathrm{L}}}}{\gamma P_{m}^{tr} B_{m}\textcolor{blue}{G_{l}} A_L u^{-\alpha_{\mathrm{L}}}}} d u\right)\right)  \textcolor{blue}{\prod \limits_{G_l}}\exp \left( -2 \pi \lambda_{m}\textcolor{blue}{p_{G_l}}\int_{\left(d_2\right)^{\frac{-1}{\alpha_{\mathrm{NL}}}} r^\frac{\alpha_{\mathrm{L}}}{\alpha_{\mathrm{NL}}}}^{\infty}  \left(\frac{\mathcal{P}_{\mathrm{NL}}(u)u}{1+\frac{P_{s}^{tr} B_{s}\textcolor{blue}{G_{s}}A_\mathrm{L} r^{-\alpha_{\mathrm{L}}}}{\gamma P_{m}^{tr}B_{m}\textcolor{blue}{G_{l}}A_\mathrm{NL} u^{-\alpha_{\mathrm{NL}}}}} d u\right)\right)\nonumber,
\end{align}
\end{small}
where \textcolor{blue}{$p_{G_i}$ and $p_{G_l}$ is the probability of the antenna gain taking correspoding value from SBS interference tier and MBS interference tier.} Step (b) is based on the PGFL of PPP \cite{ReferUserClosestBSProbabilty}. $d_1=\frac{P_s^{tr}B_{s} }{P_m^{tr}B_{m}}$ and $d_2=\frac{P_s^{tr} B_{s}A_\mathrm{L}}{P_m^{tr}B_{m}A_\mathrm{NL}}$.
Following the same logic, other Laplace transforms of cumulative interference 
$  \mathcal{L}_{I_{s}}^{\mathrm{NL}}\left(\gamma r^{\alpha_{\mathrm{NL}}} \right),$
$  \mathcal{L}_{I_{bh}}^{\mathrm{L}}\left(\gamma r^{\alpha_{\mathrm{L}}} \right),$
$  \mathcal{L}_{I_{bh}}^{\mathrm{NL}}\left(\gamma r^{\alpha_{\mathrm{NL}}} \right)$
can also be  obtained.

Similarly, the SINR distribution of SBS is covered by MBS:
\begin{align}
  P_{bh}^{cov}(\gamma)
  &=P_{bh,L}^{cov}(\gamma)+P_{bh,NL}^{cov}(\gamma)\\
  &=\mathbb{E}_{r }\left[\mathbb{P}\left[\mathrm{SINR}_{bh}^{\mathrm{L}} (r ) > \gamma\right]\right]+\mathbb{E}_{r }\left[\mathbb{P}\left[\mathrm{SINR}_{bh}^{\mathrm{NL}} (r ) > \gamma\right]\right]\nonumber\\
  &=\int_{0}^{\infty} \mathbb{P}\left[\operatorname{SINR}_{bh}^{\mathrm{L}}(r)>\gamma\right] F_{bh}^{\mathrm{L}}(r)\mathrm{d} r+\int_{0}^{\infty}\mathbb{P}\left[\operatorname{SINR}_{bh}^{\mathrm{NL}}(r)>\gamma\right] F_{bh}^{\mathrm{NL}}(r) \mathrm{d} r\nonumber,
\end{align}
where
$
\mathbb{P}\left[\operatorname{SINR}_{bh}^{\mathrm{L}}(r)  > \gamma\right] = \exp \left(\frac{-\gamma N_{0}}{P_{m}^{tr} B_{m} A_{\mathrm{L}} r^{-\alpha_{\mathrm{L}}} }\right)  \mathcal{L}_{I_{bh}}^{\mathrm{L}}\left(\gamma r^{\alpha_{\mathrm{L}}} \right)
$
and
$
\mathbb{P}\left[\operatorname{SINR}_{bh}^{\mathrm{NL}}(r)  > \gamma\right]\\=  \exp \left(\frac{-\gamma N_{0}}{P_{m}^{tr} B_{m} A_{\mathrm{NL}} r^{-\alpha_{\mathrm{NL}}} }\right)  \mathcal{L}_{I_{bh}}^{\mathrm{NL}}\left(\gamma r^{\alpha_{\mathrm{NL}}} \right).
$

\subsection{\textcolor{blue}{Proof of Proposition \ref{noise_limited}}}

\textcolor{blue}{	In noise-limited scenario, the interference is close to zero, so the SINR distribution can be reduced to
	\begin{equation*}
		P_{k}^{cov}(\gamma)= \mathbb{P}[\frac{P_k g B_k G_k A r^{-\alpha}}{\sigma^{2}}>\gamma]=\mathbb{P}[\frac{r^{\alpha}}{Ag}<\frac{P_k B_k G_k}{\sigma^{2}\gamma}]
	\end{equation*}
	where $A \in \{A_\mathrm{L},A_\mathrm{NL}\},\alpha \in \{\alpha_\mathrm{L},\alpha_\mathrm{NL}\}$. Considering the effect of blockage, the intensity function of process $\Lambda=\{\frac{r^{\alpha}}{A}\}=\{\phi\}$ is calculated as
	\begin{align}
		\Lambda(\phi) = \int_{0}^{(A_\mathrm{L}\phi)^{\frac{1}{\alpha_\mathrm{L}}}}2 \pi \lambda_k u e^{-\beta u}du+\int_{0}^{(A_\mathrm{NL}\phi)^{\frac{1}{\alpha_\mathrm{NL}}}}2 \pi \lambda_k u (1-e^{-\beta u})du
	\end{align} 
	where $k \in \{m,s\}$ denotes the MBS tier or SBS tier.
	Then, the density function  is 
	\begin{align}
			\lambda(\phi) = \frac{d\Lambda(\phi)}{d\phi}
			=\frac{2\pi \lambda_k A_\mathrm{NL}}{\alpha_\mathrm{NL}}(A_\mathrm{NL}\phi)^{\frac{2}{\alpha_\mathrm{NL}}-1} 
			+\frac{2\pi \lambda_k(A_\mathrm{L}\phi)^{\frac{2}{\alpha_\mathrm{L}}-1}}{\alpha_\mathrm{L}e^{\beta (A_\mathrm{L}\phi)^{\frac{1}{\alpha_\mathrm{L}}}}} 
			-\frac{2\pi \lambda_k(A_\mathrm{NL}\phi)^{\frac{2}{\alpha_\mathrm{NL}}-1}}{\alpha_\mathrm{NL}e^{\beta (A_\mathrm{NL}\phi)^{\frac{1}{\alpha_\mathrm{NL}}}}}
	\end{align}
	Then, the joint distribution fuction and probability density function of $\{\phi,g\}$ can be given as 
	\begin{align}
		\mathbb{P}[\frac{\phi}{g}\le\xi ]&=\mathbb{P}[g \ge\frac{\phi}{\xi}] = 1-F_g(\frac{\phi}{\xi}) \\
		\rho(\phi,\xi)&=\frac{d(1-F_g(\frac{\phi}{\xi}))}{d \xi}=\frac{\phi}{\xi^2}f_g(\frac{\phi}{\xi})\overset{(a)}{=}\frac{\phi}{\xi^2}e^{-\frac{\phi}{\xi}}
	\end{align}
	where step (a) is based on the Rayleigh fading channel with exponential distribution ($h\sim\exp(1)$).
	Based on the displacement theorem, the density function of process $\Xi = \{\frac{\phi}{g}\}=\{\xi\}$ can be obtained as 
	\begin{align}
		\lambda_{\Xi}(\xi)&=\int_{0}^{\infty}\lambda(\phi)\rho(\phi,\xi)d\phi \nonumber \\
		&=\int_{0}^{\infty}\left( \frac{2\pi \lambda_k A_\mathrm{NL}}{\alpha_\mathrm{NL}}(A_\mathrm{NL}\phi)^{\frac{2}{\alpha_\mathrm{NL}}-1} 
		+\frac{2\pi \lambda_k(A_\mathrm{L}\phi)^{\frac{2}{\alpha_\mathrm{L}}-1}}{\alpha_\mathrm{L}e^{\beta (A_\mathrm{L}\phi)^{\frac{1}{\alpha_\mathrm{L}}}}} 
		-\frac{2\pi \lambda_k(A_\mathrm{NL}\phi)^{\frac{2}{\alpha_\mathrm{NL}}-1}}{\alpha_\mathrm{NL}e^{\beta (A_\mathrm{NL}\phi)^{\frac{1}{\alpha_\mathrm{NL}}}}}\right)\frac{\phi}{\xi^2}e^{-\frac{\phi}{\xi}}d\phi \nonumber\\
		&= \frac{2\pi\lambda_kA_\mathrm{NL}^{\frac{2}{\alpha_{\mathrm{NL}}}}}{\alpha_\mathrm{NL}}\xi^{\frac{2}{\alpha_\mathrm{NL}}-1}\Gamma(\frac{2}{\alpha_\mathrm{NL}}+1)+\frac{2\pi \lambda_k}{\xi^2}(H_\mathrm{L}-H_\mathrm{NL})
	\end{align}
	where $H_i=\frac{A_i^{\frac{2}{\alpha_i}-1}}{\alpha_i}\int_{0}^{\infty}\phi^{\frac{2}{\alpha_i}}\exp(-\beta(A_i\phi)^{\frac{1}{\alpha_i}}-\frac{\phi}{\xi})d\phi,i\in\{\mathrm{L},\mathrm{NL}\}$.
	Now, based on the complementary void function of PPP, the CDF of $\xi$ is given by
	\begin{align}
		F_{\Xi}(\xi_0)=\mathbb{P}[\xi<\xi_0]&=1-\mathbb{P}[\Xi(\xi_0)=0]=1-\exp(-\int_{0}^{\xi_0}\lambda_{\Xi}(\xi)d\xi)\nonumber \\
		&=1-\exp(-\pi \lambda_k A_\mathrm{NL}^{\frac{2}{\alpha_{\mathrm{NL}}}}\Gamma(\frac{1}{\alpha_\mathrm{NL}}+1)\xi_0^{\frac{2}{\alpha_\mathrm{NL}}}-2 \pi \lambda_kY(\xi_0))
	\end{align} 
	where $Y(\xi_0)=\int_{0}^{\xi_0}\frac{A_\mathrm{L}^{\frac{2}{\alpha_\mathrm{L}}-1}}{ \alpha_\mathrm{L} \xi^2}\int_{0}^{\infty}\phi^{\frac{2}{\alpha_\mathrm{L}}}\exp(-\beta\phi^{\frac{1}{\alpha_\mathrm{L}}}-\frac{\phi}{\xi}) d\phi d\xi-\int_{0}^{\xi_0}\frac{A_\mathrm{NL}^{\frac{2}{\alpha_\mathrm{NL}}-1}}{ \alpha_\mathrm{NL} \xi^2}\int_{0}^{\infty}\phi^{\frac{2}{\alpha_\mathrm{NL}}}\exp(-\beta\phi^{\frac{1}{\alpha_\mathrm{NL}}}-\frac{\phi}{\xi}) d\phi d\xi
	$. Then we can obtain that
	\begin{align}
		P_{k}^{cov}(\gamma)&=\mathbb{P}[\frac{r^{\alpha}}{g}<\frac{P_k B_kG_k}{\sigma^{2}\gamma}]=F_{\Xi}(\frac{P_kB_k G_k}{\sigma^{2}\gamma})\nonumber \\
		&=1-\exp\left(-\pi \lambda_k A_\mathrm{NL}^{\frac{2}{\alpha_{\mathrm{NL}}}}\Gamma\left(\frac{1}{\alpha_\mathrm{NL}}+1\right)\left(\frac{P_kB_k G_k}{\sigma^{2}\gamma}\right)^{\frac{2}{\alpha_\mathrm{NL}}}-2 \pi \lambda_kY\left(\frac{P_kB_k G_k}{\sigma^{2}\gamma}\right)\right)
	\end{align}
}



\begin{thebibliography}{1}

%
%
%
%
%
%


%



%
\bibitem{WCNC20}
\textcolor{blue}{C. Zhang, H. Wu, H. Lu and J. Liu, ``Throughput Analysis in Cache-enabled Millimeter Wave HetNets with Access and Backhaul Integration,'' in {\em Proc. Wireless Commun. Netw. Conf.(WCNC)}, May. 2020, pp. 1-6.}

\bibitem{whitepaper}
Cisco, ``Cisco Visual Networking Index: Global Mobile Data Traff\mbox{}ic Forecast Update, 2017–2022,'' {\em White Paper}, Feb. 2019.

\bibitem{IAB0}
C. Madapatha et al., ``On Integrated Access and Backhaul Networks: Current Status and Potentials,'' {\em IEEE Open J. Commun. Soc.}, vol. 1, pp. 1374--1389, Sep. 2020.

\bibitem{IAB1}
 C. Dehos, J. L. Gonz$\acute{a}$lez, A. De Domenico, D. Kt$\acute{e}$nas, and L. Dussopt,``Millimeter-wave Access and Backhauling: The Solution to the Exponential Data Traff\mbox{}ic Increase in 5G Mobile Communications Systems?'' {\em IEEE Commun. Mag.}, vol. 52, no. 9, pp. 88–95, Sept. 2014.

\bibitem{IAB2}
R. Taori and A. Sridharan, ``Point-to-Multipoint in-band Mmwave Backhaul for 5G Networks,'' {\em IEEE Commun. Mag.}, vol. 53, no. 1, pp. 195-201, Jan. 2015.

\bibitem{IAB3}
R. J. Weiler et al., ``Enabling 5G Backhaul and Access with Millimeter-waves,'' in {\em Proc. IEEE  EuCNC } 2014, Bologna, 2014, pp. 1-5.


\bibitem{3GPPIAB}
NR; Study on Integrated Access and Backhaul, document 3GPP TR 38.874, 2017.

\bibitem{cache3gpp}
{\em Service Requirements for the 5G System}, document 3GPP TS 22.261, Jan. 2018.

\bibitem{Throuhgput1}
A. AlAmmouri, J. G. Andrews and F. Baccelli, ``A Unif\mbox{}ied Asymptotic Analysis of Area Spectral Eff\mbox{}iciency in Ultradense Cellular Networks,'' {\em IEEE Trans. Inf. Theory}, vol. 65, no. 2, pp. 1236-1248, Feb. 2019.

\bibitem{Throuhgput2}
X. Zhang and J. G. Andrews, ``Downlink Cellular Network Analysis With Multi-Slope Path Loss Models,'' {\em IEEE Trans. Commun.}, vol. 63, no. 5, pp. 1881-1894, May. 2015.

\bibitem{Throuhgput3}
I. Atzeni, J. Arnau and M. Kountouris, ``Downlink Cellular Network Analysis With LOS/NLOS Propagation and Elevated Base Stations,'' {\em IEEE Trans. Wireless Commun.}, vol. 17, no. 1, pp. 142-156, Jan. 2018.



%


\bibitem{ASE2}
W. Yi, Y. Liu and A. Nallanathan, ``Modeling and Analysis of D2D Millimeter-Wave Networks With Poisson Cluster Processes,'' {\em IEEE Trans. Commun.}, vol. 65, no. 12, pp. 5574-5588, Dec. 2017.


\bibitem{Partition0}
C. Saha and H. S. Dhillon, ``Millimeter Wave Integrated Access and Backhaul in 5G: Performance Analysis and Design Insights,'' {\em IEEE J. Sel. Areas Commun.}, vol. 37, no. 12, pp. 2669-2684, Dec. 2019.


\bibitem{Partition2}
S. Hur, T. Kim, D. J. Love, J. V. Krogmeier, T. A. Thomas and A. Ghosh, ``Millimeter Wave Beamforming for Wireless Backhaul and Access in Small Cell Networks,'' {\em IEEE Trans. Commun.}, vol. 61, no. 10, pp. 4391-4403, Oct. 2013.

\bibitem{Partition3}
Z. Shi, Y. Wang, L. Huang and T. Wang, ``Dynamic Resource Allocation in MmWave Unif\mbox{}ied Access and Backhaul Network,'' {\em Proc. PIMRC}, Hong Kong, 2015, pp. 2260-2264.

\bibitem{Partition4}
C. Saha, M. Afshang and H. S. Dhillon, ``Bandwidth Partitioning and Downlink Analysis in Millimeter Wave Integrated Access and Backhaul for 5G,'' {\em IEEE Trans. Wireless Commun.}, vol. 17, no. 12, pp. 8195-8210, Dec. 2018.
\textcolor{blue}{\bibitem{Partition5}
M. Diamanti, P. Charatsaris, E. E. Tsiropoulou and S. Papavassiliou, ``The Prospect of Reconfigurable Intelligent Surfaces in Integrated Access and Backhaul Networks,'' {\em IEEE Trans. on Green Commun.  Netw.},  Early access, Nov. 2021, doi: 10.1109/TGCN.2021.3126784.}



\bibitem{2u}
D.~Liu, B.~Chen, C.~Yang and A. F.~Molisch, ``Caching at the Wireless Edge: Design Aspects, Challenges, and Future Directions,'' {\em IEEE Commun. Mag.}, vol. 54, no. 9, pp. 22-28, 2016.

\bibitem{caching}
M.~Tao, E.~Chen, H.~Zhou and W.~Yu, ``Content-centric Sparse Multicast Beamforming for Cache-enabled Cloud RAN,'' {\em IEEE Trans. Wireless Commun.}, vol. 15, no. 9, pp. 6118-6131, Sept. 2016.



\bibitem{caching1}
X.~Xu and M.~Tao, ``Modeling, Analysis, and Optimization of Coded Caching in Small-cell Networks,'' {\em IEEE Trans. Commun.}, vol. 65, no. 8, pp. 3415-3428, Aug. 2017.
\textcolor{blue}{\bibitem{caching2}
	C.~Fan, T.~Zhang, Y.~Liu and Z.~Zeng,  ``Cache-Enabled HetNets With Limited Backhaul: A Stochastic Geometry Model,'' {\em IEEE Trans. Commun.}, vol. 68, no. 11, pp. 7007-7022, Nov. 2020.}

\bibitem{MostPop}
Y. Chiang and W. Liao, ``ENCORE: An Energy-aware Multicell Cooperation in Heterogeneous Networks with Content Caching,'' {\em Proc. IEEE INFOCOM}, San Francisco, CA, 2016, pp. 1-9.




\bibitem{networkPowerModel}
G. Auer et al., ``How much energy is needed to run a wireless network?,'' {\em IEEE Trans. Wireless Commun.}, vol. 18, no. 5, pp. 40-49, Oct. 2011.

\bibitem{BSPowerModel}
D. Liu and C. Yang, ``Energy Eff\mbox{}iciency of Downlink Networks With Caching at Base Stations,'' {\em IEEE J. Sel. Areas Commun.}, vol. 34, no. 4, pp. 907-922, Apr. 2016.


\bibitem{BSPowerModel0}
Z. Gu, H. Lu and Z. Zhu, ``On Throughput Optimization and Bound Analysis in Cache-Enabled Fiber-Wireless Networks,'' {\em IEEE Trans. Veh. Technol.}, vol. 69, no. 8, pp. 9068-9082, Aug. 2020.

\bibitem{BSPowerModel1}
F. Gabry, V. Bioglio and I. Land, ``On Energy-Efficient Edge Caching in Heterogeneous Networks,'' {\em IEEE J. Sel. Areas Commun.}, vol. 34, no. 12, pp. 3288-3298, Dec. 2016.

\bibitem{OCF}
B. Dai, Y. Liu and W. Yu,``Optimized Base-Station Cache Allocation for Cloud Radio Access Network With Multicast Backhaul,'' {\em IEEE J. Sel. Areas Commun.}, vol. 36, no. 8, pp. 1737-1750, Aug. 2018.


\bibitem{PPP}
H. Elsawy, E. Hossain, and M. Haenggi, ``Stochastic geometry for modeling, analysis, and design of multi-tier and cognitive cellular wireless networks: A survey,'' {\em IEEE Commun. Surveys Tuts.}, vol. 15, no. 3, pp. 996–1019, Jun. 2013.



\bibitem{Samesize}
J. Llorca et al., ``Dynamic In-network Caching for Energy Eff\mbox{}icient Content Delivery,'' in {\em Proc. IEEE INFOCOM}, 2013, pp. 245–249.

\bibitem{StableDistribution}
K. Shanmugam, N. Golrezaei, A. G. Dimakis, A. F. Molisch and G. Caire, ``FemtoCaching: Wireless Content Delivery Through Distributed Caching Helpers,"  {\em IEEE Trans. Inf. Theory}, vol. 59, no. 12, pp. 8402-8413, Dec. 2013.

\bibitem{Popularity}
P. Gill, M. Arlitt, Z. Li, and A. Mahanti, ``YouTube Traff\mbox{}ic Characterization: A View from the Edge,'' in {\em Proc. ACM IMC}, San Diego, CA, Oct. 2007.

\bibitem{Zipf0}
L. Wang, K. Wong, S. Lambotharan, A. Nallanathan and M. Elkashlan, ``Edge Caching in Dense Heterogeneous Cellular Networks With Massive MIMO-Aided Self-Backhaul,'' {\em IEEE Trans. Wireless Commun.}, vol. 17, no. 9, pp. 6360-6372, Sep. 2018.

\bibitem{Zipf}
L. Breslau, P. Cao, L. Fan, G. Phillips, and S. Shenker, ``Web Caching and Zipf-like Distributions: Evidence and Implications,'' in {\em Proc. IEEE INFOCOM}, 1999, pp. 126–134.

\bibitem{Zipf1}
M. Cha, P. Rodriguez, J. Crowcroft, S. Moon, and X. Amatriain, ``Watching Television over An IP Network,'' in {\em Proc. ACM SIGCOMM IMC}, 2008, pp. 126–134.


\bibitem{Powercc}
X. Zhang, T. Lv, W. Ni, J. M. Cioff\mbox{}i, N. C. Beaulieu and Y. J. Guo, ``Energy-Eff\mbox{}icient Caching for Scalable Videos in Heterogeneous Networks,'' {\em IEEE J. Sel. Areas Commun.}, vol. 36, no. 8, pp. 1802-1815, Aug. 2018.

\bibitem{frapca1}
F. Gabry, V. Bioglio and I. Land, ``On Energy-Eff\mbox{}icient Edge Caching in Heterogeneous Networks,'' {\em IEEE J. Sel. Areas Commun.}, vol. 34, no. 12, pp. 3288-3298, Dec. 2016.

\bibitem{frapca2}
J. Llorca, A. M. Tulino, M. Varvello, J. Esteban and D. Perino, ``Energy Eff\mbox{}icient Dynamic Content Distribution,'' {\em IEEE J. Sel. Areas Commun.}, v, vol. 33, no. 12, pp. 2826-2836, Dec. 2015.

\bibitem{MBSFileLibrary}
G. Quer, I. Pappalardo, B. D. Rao and M. Zorzi,``Proactive Caching Strategies in Heterogeneous Networks With Deviceto-Device Communications,'' {\em IEEE Trans. Wireless Commun.}, vol. 17, no. 8, pp. 5270-5281, Aug. 2018.



\bibitem{LOSModel}
T. Bai, R. Vaze, and R. Heath, ``Analysis of blockage effects on urban cellular networks,'' {\em IEEE Trans. Wireless Commun.}, vol. 13, no. 9, pp. 5070–5083, Sep. 2014.
\textcolor{blue}{
\bibitem{beamforming}
S. Singh, M. Kulkarni, A. Ghosh, and J. Andrews, ``Tractable model for rate in self-backhauled millimeter wave cellular networks,'' {\em IEEE J. Sel. Areas Commun.}, vol. 33, no. 10, pp. 2196–2211, Oct. 2015.}
\textcolor{blue}{\bibitem{beamforming1}
	Y. Ye, S. Huang, M. Xiao, Z. Ma and M. Skoglund, ``Cache-Enabled Millimeter Wave Cellular Networks With Clusters,'' {\em IEEE Trans. Commun.}, vol. 68, no. 12, pp. 7732-7745, Dec. 2020.}


%
%


















\bibitem{Rayleigh}
A. K. Gupta, J. G. Andrews and R. W. Heath, ``On the Feasibility of Sharing Spectrum Licenses in mmWave Cellular Systems,'' {\em IEEE Trans. Commun.}, vol. 64, no. 9, pp. 3981-3995, Sept. 2016.



%



\bibitem{NoiseLimit}
S. Singh, M. N. Kulkarni, A. Ghosh, and J. G. Andrews, ``Tractable Model for Rate in Self-backhauled Millimeter Wave Cellular Networks,'' {\em IEEE J. Sel. Areas Commun.}, vol. 33, no. 10, pp. 2196–2211, Oct. 2015.

\bibitem{NoiseLimit2}
J. G. Andrews, T. Bai, M. N. Kulkarni, A. Alkhateeb, A. K. Gupta,
and R. W. Heath, Jr., ``Modeling and Analyzing Millimeter Wave Cellular Systems,''{\em IEEE Trans. Commun.}, vol. 65, no. 1, pp. 403–430, Jan. 2017.

\bibitem{NoiseLimit1}
E. Turgut and M. C. Gursoy, ``Coverage in Heterogeneous Downlink Millimeter Wave Cellular Networks,'' {\em  IEEE Trans. Commun.}, vol. 65, no. 10, pp. 4463-4477, Oct. 2017.




%



\bibitem{3GPP:Spec}
3GPP TR 36.942 V12.0.0, ``Radio Frequency (RF) System Scenarios (Release 12),'' Sep. 2010.




\bibitem{PathLoss}
M. Ding, P. Wang, D. L$\acute{o}$pez-P$\acute{e}$rez, G. Mao and Z. Lin, ``Performance Impact of LoS and NLoS Transmissions in Dense Cellular Networks,'' {\em IEEE Trans. Wireless Commun.}, vol. 15, no. 3, pp. 2365-2380, Mar. 2016.

\bibitem{FixPower}
M. Peng, K. Zhang, J. Jiang, J. Wang and W. Wang, ``Energy-Eff\mbox{}icient Resource Assignment and Power Allocation in Heterogeneous Cloud Radio Access Networks,'' {\em IEEE Trans. Veh. Technol.}, vol. 64, no. 11, pp. 5275-5287, Nov. 2015.

\bibitem{SBSPowerScaling}
C. Liu and K. L. Fong, ``Fundamentals of the Downlink Green Coverage and Energy Eff\mbox{}iciency in Heterogeneous Networks,'' {\em IEEE J. Sel. Areas Commun.}, vol. 34, no. 12, pp. 3271-3287, Dec. 2016.

\bibitem{ReferUserClosestBSProbabilty}
J. G. Andrews, F. Baccelli, and R. K. Ganti, ``A Tractable Approach to Coverage and Rate in Cellular Networks,'' {\em IEEE Trans. Commun.}, vol. 59, no. 11, pp. 3122–3134, 2011.

\bibitem{BCDproof}
D. P. Bertsekas. {\emph Nonlinear Programming}, 2nd ed. Belmont, MA, USA:
Athena Scientific, 1999.

\bibitem{BCDdiff}
P. Tseng, “Convergence of a block coordinate descent method for nondifferentiable minimization,”  {\emph J. Optim. Theory Appl.},  vol. 109, no. 3, pp. 475–494, Jun. 2001.













\end{thebibliography}
\end{document}